\pdfoutput=1
\documentclass[
aps,
prx,
reprint,
superscriptaddress,
longbibliography]{revtex4-2}

\usepackage{graphicx}

\usepackage{amsmath,amssymb,amsfonts, ascmac,mathtools,bm,braket,stmaryrd}
\usepackage{enumitem}
\usepackage{comment}
\usepackage{derivative}

\usepackage{amsthm}

\theoremstyle{definition}
\newtheorem{thm}{Theorem}
\newtheorem{prop}[thm]{Proposition}
\newtheorem{lem}[thm]{Lemma}

\newcommand{\Hilb}{\mathcal{H}}
\newcommand{\bound}{\mathcal{B}}
\newcommand{\id}{\mathrm{id}}
\newcommand{\im}{\mathrm{i}}
\newcommand{\tbound}{\tilde{\mathcal{B}}}
\newcommand{\tot}{\mathrm{tot}}

\newcommand{\figurecentering}[1]{\vcenter{\hbox{#1}}}

\DeclareMathOperator{\size}{\mathrm{size}}
\DeclareMathOperator{\len}{\mathrm{len}}

\DeclareMathOperator{\Tr}{\mathrm{Tr}}
\DeclareMathOperator{\Span}{\mathrm{Span}}

\mathchardef\standardl=\mathcode`l
\mathcode`l=\ell
\newcommand{\deactivatel}{\mathcode`l=\standardl}
\makeatletter
\edef\operator@font{\operator@font\noexpand\deactivatel}
\makeatother

\usepackage{hyperref}
\hypersetup{colorlinks=true, citecolor=magenta, linkcolor=blue, urlcolor=cyan}

\renewcommand\paragraph[1]{%
\par\emph{#1---}\kern2pt\relax\ignorespaces}

\begin{document}

\title{Rigorous Test for Quantum Integrability and Nonintegrability}

\author{Akihiro Hokkyo}
\email{hokkyo@cat.phys.s.u-tokyo.ac.jp}
\affiliation{Department of Physics, Graduate School of Science, The University of Tokyo, 7-3-1 Hongo, Bunkyo, Tokyo, 113-8654, Japan}

\begin{abstract}
The integrability of a quantum many-body system, 
which is characterized by the presence or absence of local conserved quantities, 
drastically impacts the dynamics of isolated systems, 
including thermalization. 
Nevertheless, a rigorous and comprehensive method for determining integrability or nonintegrability has remained elusive.
In this paper, we address this challenge by introducing rigorously provable tests for integrability and nonintegrability of quantum spin systems with finite-range interactions. 
Our results significantly simplify existing proofs of nonintegrability,
such as those for
the $S=1/2$ Heisenberg chain with nearest-and next-nearest-neighbor interactions, 
the $S=1$  bilinear-biquadratic chain
and the $S=1/2$ XYZ model in two or higher
dimensions.
Moreover, our results also yield the first proof of nonintegrability for models such as the $S=1/2$ Heisenberg chain with a non-uniform magnetic field, 
the $S=1/2$ XYZ model on the triangular lattice, 
and the general spin XYZ model.
This work also offers a partial resolution to the long-standing conjecture that integrability is governed by the existence of local conserved quantities with small support. 
Our framework ensures that the nonintegrability of one-dimensional spin systems with translational symmetry can be verified algorithmically, independently of system size.
\end{abstract}

\maketitle
\section{Introduction}
A central challenge in quantum many-body physics is to understand the equilibrium states and dynamics of strongly interacting systems. Rigorous methods sometimes offer deep insights into these problems. One such approach is to establish physically meaningful results for broad classes of systems, exemplified by 
the Hohenberg--Mermin--Wagner theorem~\cite{hohenbergExistenceLongRangeOrder1967,merminAbsenceFerromagnetismAntiferromagnetism1966} and the Lieb--Schultz--Mattis theorem~\cite{liebTwoSolubleModels1961}. %
Another is the use of exactly solvable models.
The Bethe ansatz method, 
originating from Bethe's exact solution of the Heisenberg chain~\cite{betheZurTheorieMetalle1931}, 
enables the analytical calculation of quantities that are otherwise inaccessible in general systems, 
such as energy eigenvalues, eigenstates, and correlation functions. 
Furthermore, the algebraic Bethe ansatz method, or the quantum inverse scattering method~\cite{korepinQuantumInverseScattering1993}, 
has revealed 
that behind such solvability lies integrability, i.e., the existence of an extensive number of local conserved quantities.

However, when considering the dynamics of isolated quantum many-body systems, integrable systems can exhibit anomalous behavior.
A notable example is their deviation from typical thermalization behavior~\cite{moriThermalizationPrethermalizationIsolated2018}. 
Recent experiments have shown thermalization, 
i.e., the relaxation of local observables to thermal equilibrium
in well-isolated quantum systems~\cite{kaufmanQuantumThermalizationEntanglement2016}. 
The eigenstate thermalization hypothesis (ETH)~\cite{deutschQuantumStatisticalMechanics1991,srednickiChaosQuantumThermalization1994,rigolThermalizationItsMechanism2008}, 
which posits that energy eigenstates are locally indistinguishable from thermal equilibrium states, is widely accepted as the underlying mechanism. 
While the ETH has been numerically demonstrated to hold for various nonintegrable systems~\cite{kimTestingWhetherAll2014,garrisonDoesSingleEigenstate2018}, 
it cannot be applied to integrable systems due to the presence of local conserved quantities. 
Some experiments have also shown that integrable systems exhibit unusual relaxation dynamics~\cite{kinoshitaQuantumNewtonsCradle2006}. 
Moreover, the existence of local conserved quantities is known to give rise to nontrivial inequalities in linear response~\cite{mazurNonergodicityPhaseFunctions1969,suzukiErgodicityConstantsMotion1971}. 
Thus, significant differences exist between the universal behavior observed in nonintegrable systems and the anomalous behavior in integrable systems. 
Consequently, in addressing the fundamental question of statistical mechanics---how the empirical laws of macroscopic systems emerge from quantum mechanics---it is crucial to determine whether the system is integrable or nonintegrable.

While integrability has traditionally been a focus of rigorous research, 
nonintegrability has long remained outside its scope. 
However, recent studies have introduced methods for rigorously proving nonintegrability 
--or, more precisely, the absence of local conserved quantities~\cite{shiraishiProofAbsenceLocal2019,chibaMixfield2024,shiraishiAbsenceLocalConserved2024,chibaExactThermalEigenstates2024,parkGraphTheoreticalProof2024,yamaguchiCompleteClassificationIntegrability2024,yamaguchiProofAbsenceLocal2024,parkProofNonintegrabilitySpin12024,hokkyoProofAbsenceLocal2025,shiraishi2dimXY2024,chiba2dimIsing2024,shiraishiCompleteClassificationIntegrability2025}. 
The initial work focused on the spin-1/2 XYZ-h chain~\cite{shiraishiProofAbsenceLocal2019}, 
but subsequent studies~\cite{chibaMixfield2024,yamaguchiCompleteClassificationIntegrability2024,yamaguchiProofAbsenceLocal2024} classified the nonintegrability of general translationally invariant spin-1/2 chains with nearest-neighbor symmetric interactions. 
This research has expanded to include spin-1/2 chains with next-nearest-neighbor interactions~\cite{shiraishiAbsenceLocalConserved2024,parkGraphTheoreticalProof2024,shiraishiCompleteClassificationIntegrability2025}, 
spin-1 chains~\cite{parkProofNonintegrabilitySpin12024,hokkyoProofAbsenceLocal2025}, 
and higher-dimensional systems~\cite{shiraishi2dimXY2024,chiba2dimIsing2024}.
However, existing methods remain heuristic in nature, 
often relying on a form of ``craftsmanship'' to demonstrate 
the absence of $k$-local conserved quantities 
for each size $k$ of the operator's support.
Although all previous proofs of nonintegrability have employed similar methods, 
the underlying unifying structure remains largely unexplored. 
This lack of a unified framework also poses a significant challenge for computational approaches, 
as the dimension of the space of $k$-local operators grows exponentially with $k$. 
Since other numerical tests for nonintegrability, such as the investigation of level spacing statistics~\cite{bohigasCharacterizationChaoticQuantum1984,brezinSpectralFormFactor1997}, 
are inherently subject to finite-size effects, 
it is of fundamental importance to determine whether the rigorous proof method described above can be formulated as an efficient algorithm.

Efficient methods for determining integrability have long been of interest as well 
in the study of integrable systems~\cite{kulishQuantumSpectralTransform1982,dolanConservedChargesSelfduality1982,fuchssteinerComputerAlgorithmsDetection1988,kennedySolutionsYangBaxterEquation1992,grabowskiIntegrabilityTestSpin1995,gomborIntegrableSpinChains2021}.
It has been conjectured that the existence of $3$-local conserved quantities is a necessary condition for integrability in one-dimensional systems with translational symmetry and nearest-neighbor interactions~\cite{grabowskiIntegrabilityTestSpin1995,gomborIntegrableSpinChains2021}.
Conversely, previous proofs of nonintegrability in such systems~\cite{shiraishiProofAbsenceLocal2019,chibaMixfield2024,chibaExactThermalEigenstates2024,yamaguchiCompleteClassificationIntegrability2024,yamaguchiProofAbsenceLocal2024,parkProofNonintegrabilitySpin12024,hokkyoProofAbsenceLocal2025} 
have supported the contrapositive~\footnote{
Strictly speaking, it is not literally \textit{contrapositive} 
because there can exist systems that are neither integrable nor nonintegrable, but partially integrable, i.e., with some nontrivial local conserved quantities. 
Surprisingly, the existence of such partially integrable models is denied in the models examined in previous studies of nonintegrability.} 
conjecture: the absence of $3$-local conserved quantities is a sufficient condition for nonintegrability~\cite{yamaguchiCompleteClassificationIntegrability2024}.
However, these tests for integrability and nonintegrability remain empirical, and, 
to the best of our knowledge, no fully rigorous proof has yet been established. 
The main goal of this paper is to present a rigorously provable test for integrability and nonintegrability based solely on $3$-local quantities. 
Our approach unifies the conventional, handcrafted methods used in the study of nonintegrability and provides a partial resolution to the above conjectures.
Using our rigorous test, 
we construct an algorithm--independent of system size--that guarantees the nonintegrability of translationally invariant quantum spin chains.

This paper is organized as follows. 
In Sec.~\ref{sec:setup}, we fix the notation and state the main theorem. 
As an application of the theorem, 
we demonstrate the absence of local conserved quantities in several systems in Sec.~\ref{sec:examples}. 
Section~\ref{sec:proof} is devoted to proving the theorems, 
where we also introduce a graphical language used in the discussion. 
Technical details of the proof are provided in Appendices. 
We also argue that our theorem and its proof can be applied to the analysis of a related structure: spectrum generating algebras. 
In Sec.~\ref{sec:algorithm}, we construct an algorithm to test integrability and nonintegrability in translationally invariant chains.
Sec.~\ref{sec:internal} presents an alternative test for systems where the method in Sec.~\ref{sec:setup} fails to apply.
We conclude the paper with a summary and outlook in Sec.~\ref{sec:summary}.

\section{Setup and main result}\label{sec:setup}
\subsection{Models}
We consider a quantum spin chain on $N$ sites subject to the periodic boundary condition. 
The set of sites is denoted by $\Lambda_N=\{1,\dots, N\}$, 
where we identify $N+i$ with $i$. 
Each site has a local state space $\Hilb\cong\mathbb{C}^d$, 
which is a $d$-dimensional complex Hilbert space with $d<\infty$. 
Accordingly, the total Hilbert space is $\Hilb_\tot\coloneqq\Hilb^{\otimes \Lambda_N}$. 
Note that the following argument apply to arbitrary finite
$d$, 
and are not restricted to spin-1/2 systems.

Next, we introduce $k$-local operators. 
Let $\bound$ denote the set of (bounded) linear operators on $\Hilb$, 
and $\bound_0$ the set of traceless operators on $\Hilb$. 
For each operator $\hat{X}$ in $\bound_\tot\coloneqq\bound^{\otimes\Lambda_N}$, 
its support is defined as the smallest set of sites on which $\hat{X}$ acts nontrivially. 
The set of (traceless) operators 
supported on a subset $A(\subset\Lambda_N)$ is defined as 
$\bound_0^{A}\coloneqq\bound_0^{\otimes A}$. 
Here we identify $\bound_0^{\otimes A}$ with $\bound_0^{\otimes A}\otimes I^{\otimes(\Lambda_N\setminus A)}$ where $I$ is the identity operator. 
The set of (strictly) $k$-local operators $\bound^{(k)}$ 
is defined as a (direct) sum of $\bound_0^{A}$ with $\size(A)=k$, 
where 
\begin{equation}
    \size(A)\coloneqq\min\{d\in\mathbb{Z}_{\geq 0}\mid {}^\exists l\in\mathbb{Z},\ A+l\subset\{1,\dots,d\}\}
\end{equation}
is the size of a subset $A$. 
We set $\bound^{(0)}=\bound_0^{ \emptyset}=\mathbb{C}I^{\otimes\Lambda_N}$. 
We define the set of at most $k$-local quantities $\tbound^{(k)}$ as a (direct) sum of $\bound^{(l)}$ for $0\leq l\leq k$. 
Since $\bound=\bound_0\oplus\mathbb{C}I$ holds, we have $\tbound^{(N)}=\bound_\tot$.
We define the length of an operator $\hat{X}$ as 
$\len(\hat{X})\coloneqq\min\{k\in\{0, \dots, N\}\mid \hat{X}\in\tbound^{(k)}\}$, 
which quantifies the degree of locality of $\hat{X}$. 
When we refer to the operator $\hat{X}$ simply as a $k$-local quantity, 
we mean that $\len(\hat{X})=k$. 
Such an operator can be decomposed into $l$-local terms for $0\leq l\leq k$: 
$\hat{X}=\sum_{l=0}^k \hat{X}^{(l)}$ with a nonvanishing $\hat{X}^{(k)}$, 
where $\hat{X}^{(l)}$ is an element of $\bound^{(l)}$. 
If we emphasize that $\hat{X}^{(l)}$ is a component of $k$-local quantity $\hat{X}$, 
we write $\hat{X}_{[k]}, \hat{X}_{[k]}^{(l)}$ instead of $\hat{X}, \hat{X}^{(l)}$. 
For $l\geq 1$, each $\hat{X}^{(l)}$ is further expanded as 
$\hat{X}^{(l)}=\sum_{i=1}^N\hat{X}^{(l)}_i$, 
where
\begin{equation}
    \hat{X}^{(l)}_i\in  \bigoplus_{\substack{\size(A)=l\\i,i+l-1\in A}}\bound_0^{A}
    =\bound_0^{\{i\}}\otimes\bound^{\otimes(l-2)}\otimes\bound_0^{\{i+l-1\}}
\end{equation}
for $l\geq 2$ and $\hat{X}^{(1)}_i\in\bound_0^{\{i\}}$. 

Finally, we introduce the Hamiltonian of the system, 
which consists of nearest-neighbor interactions and on-site potentials.
This corresponds to a $2$-local Hamiltonian, 
which is represented as 
\begin{equation}
    \hat{H}=\sum_{i=1}^N \hat{H}^{(2)}_{i}+\sum_{i=1}^N\hat{H}^{(1)}_i   %
    ,\label{eq:Hamiltonian}
\end{equation}
that is, $ \hat{H}^{(2)}_{i}\in\bound_0^{\{i,i+1\}}$ and $\hat{H}^{(1)}_i\in\bound_0^{\{i\}}$. 
We are mainly interested in translationally invariant systems, 
but we do not assume any relationship between these operators at this point. 
In addition, $\hat{H}$ can be non-hermitian. %

\subsection{Main Result}\label{subsec:main}
We impose the following assumption on $\hat{H}^{(2)}_{i}$:
\begin{equation}
 \begin{aligned} 
  \hat{X}\in\bound_0^{\{i+1\}},\ [I\otimes \hat{X},\hat{H}^{(2)}_{i}]=0
  &\Rightarrow\hat{X}=0\\
  \mbox{and}\ \hat{Y}\in\bound_0^{\{i\}},\ [\hat{Y}\otimes I,\hat{H}^{(2)}_{i}]=0
  &\Rightarrow\hat{Y}=0
  \end{aligned}
\label{eq:assumption_injective}
\end{equation}
for all $i\in\Lambda_N$. 
We refer to this condition as injectivity, 
which ensures that this Hamiltonian is not ``singular''. 
For example, let us consider an XYZ-type interaction: 
$\hat{H}^{(2)}_{i}=\sum_{\alpha=x,y,z}J_{\alpha}\hat{S}_{i}^{\alpha}\hat{S}_{i+1}^{\alpha}$ 
where $\hat{S}_{i}^{\alpha}$ represents the spin operator of spin $\S\in\mathbb{Z}_{>0}/2$. 
Then, for any $\S$, the injectivity~\eqref{eq:assumption_injective} is equivalent to requiring that
at least two elements in $\{J_x,J_y,J_z\}$ are nonzero. 
Otherwise, this interaction becomes an Ising-type one or vanishes. 

We are now in a position to state the main result of this paper: a rigorous test for integrability and nonintegrability. 
We first state a result 
that holds under an additional assumption that is satisfied in many systems of interest.
We introduce the following set of operators:
\begin{align}
    \bound_{\leq}^{(2)}&\coloneqq\{\hat{X}\in\bound^{(2)}\mid \len([\hat{X},\hat{H}])\leq 2\}\nonumber\\
    &=\{\hat{X}\in\bound^{(2)}\mid \len([\hat{X},\hat{H}^{(2)}])\leq 2\}.
\end{align}
It is evident that $\bound_{\leq}^{(2)}$ contains $\hat{H}^{(2)}$. 
Conversely, 
for generic spin chains with sufficiently complex interactions, 
it is expected that $\bound_{\leq}^{(2)}$ is limited to this, 
i.e., 
\begin{equation}
    \bound_{\leq}^{(2)}=\mathbb{C}\hat{H}^{(2)}\label{eq:assumption_1dimensionality}
\end{equation}
holds. 
For example, the spin-$S$ XYZ chains satisfy Eq.~\eqref{eq:assumption_1dimensionality}
for arbitrary $S\in\mathbb{Z}_{>0}/2$; 
see the last paragraph of Sec.~\ref{sec:algorithm} and Appx.~\ref{sec:general_XYZ}. 
Under these assumptions, 
we can show the following theorem rigorously. 
\begin{thm}[Main Result]\label{thm:main}
Consider a Hamiltonian $\hat{H}$ with nearest-neighbor interactions and on-site potentials 
(see Eq.~\eqref{eq:Hamiltonian})
satisfying the assumptions~\eqref{eq:assumption_injective} 
and~\eqref{eq:assumption_1dimensionality}. 
If there is no $3$-local quantity $\hat{Q}$ such that $\len([\hat{Q},\hat{H}])\leq 2$, 
then the system has no $k$-local conserved quantity for any $3\leq k\leq N/2$.
Equivalently, if there is a $k$-local conserved quantity for some $3\leq k\leq N/2$, 
there also exists a $3$-local quantity $\hat{Q}$ such that $\len([\hat{Q},\hat{H}])\leq 2$. 
\end{thm}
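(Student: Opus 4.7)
\emph{Proof plan.} My approach is to prove the contrapositive by a \emph{descent in locality}. Suppose $\hat{Q}_{[k]}$ is a $k$-local conserved quantity with $3\le k\le N/2$. In particular $\len([\hat{Q}_{[k]},\hat{H}])=0\le 2$, so the set $\bound_\le^{(k)}\coloneqq\{\hat{X}\in\tbound^{(k)}\mid \len([\hat{X},\hat{H}])\le 2\}$ contains more than the trivial elements $\mathbb{C}I$ and $\mathbb{C}\hat{H}$. It then suffices to establish a reduction lemma: whenever $\hat{R}\in\bound_\le^{(k)}$ has $\len(\hat{R})=k\ge 4$, there exists $\hat{S}\in\bound_\le^{(k)}$ with $\hat{S}^{(k)}=\hat{R}^{(k)}$, so that $\hat{R}':=\hat{R}-\hat{S}$ lies in $\bound_\le^{(k-1)}$. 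Iterating the lemma from $\hat{Q}_{[k]}$ down to length $3$ produces the $3$-local quantity asserted by the contrapositive.

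\emph{Top-local rigidity.} Decomposing $\hat{R}=\sum_{l=0}^{k}\hat{R}^{(l)}$ and expanding $[\hat{R},\hat{H}]$ by locality level, the constraint $\len([\hat{R},\hat{H}])\le 2$ forces $[\hat{R},\hat{H}]^{(l)}=0$ for every $l\ge 3$. Only $[\hat{R}^{(k)},\Htwo]$ can contribute to the $(k{+}1)$-local sector, so its $(k{+}1)$-local part vanishes. Writing this out in translates yields a recursive identity of the schematic form $[\hat{R}^{(k)}_i,\Htwo_{i+k-1}]^{(k+1)}+[\hat{R}^{(k)}_{i+1},\Htwo_i]^{(k+1)}=0$ at each site. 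Applying the injectivity assumption~\eqref{eq:assumption_injective} at each endpoint of the support, I would show that $\hat{R}^{(k)}_i$ factors compatibly with $\Htwo$ at both ends. Descending to the $k$-local sector, where $[\hat{R}^{(k-1)},\Htwo]$, $[\hat{R}^{(k)},\Hone]$ also enter, and invoking the one-dimensionality $\bound_\le^{(2)}=\mathbb{C}\Htwo$ to rule out alternative $2$-local factors, the rigidity propagates inward and pins down $\hat{R}^{(k)}$ up to scalar multiples of a chain-like operator built from $\Htwo$.

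\emph{Building $\hat{S}$ and iterating.} Given the explicit structure of $\hat{R}^{(k)}$, I would construct $\hat{S}\in\bound_\le^{(k)}$ with $\hat{S}^{(k)}=\hat{R}^{(k)}$ by supplying an appropriate ``telescoping'' lower-local completion (for instance, obtained by applying $\mathrm{ad}_{\Htwo}$ to a suitable $(k{-}1)$-local seed and adjusting lower sectors inductively). The assumptions ensure that every obstruction to such a completion lies in $\bound_\le^{(2)}/\mathbb{C}\Htwo=\{0\}$, so the construction closes. Then $\hat{R}':=\hat{R}-\hat{S}$ has length at most $k-1$ and still satisfies $\len([\hat{R}',\hat{H}])\le 2$, completing the descent step. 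Iterating $k-3$ times from $\hat{R}=\hat{Q}_{[k]}$ yields the desired $3$-local $\hat{Q}$ with $\len([\hat{Q},\hat{H}])\le 2$.

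\emph{Main obstacle.} The delicate step is the inward propagation of rigidity in Step~2: the endpoint factorization from the top sector must be combined with the $k$-local constraint (which mixes $\hat{R}^{(k)}$, $\hat{R}^{(k-1)}$, $\Htwo$, and $\Hone$) in a coherent way, so that the interior tensor slots of $\hat{R}^{(k)}_i$ are genuinely pinned down rather than merely constrained on their boundaries. Tracking which slots act nontrivially and where commutators cancel quickly becomes combinatorially heavy, and I expect the diagrammatic language announced in Sec.~\ref{sec:proof} to be the essential bookkeeping device, reducing repeated case analyses to local moves on diagrams. The bound $k\le N/2$ enters to preclude self-overlap of supports on the ring, ensuring that the local factorizations extracted from injectivity remain globally consistent.
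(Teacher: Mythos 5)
Your first step (``top-local rigidity'') is sound and matches the paper's Theorem~\ref{thm:general_step1}: the vanishing of the $(k{+}1)$-local sector of $[\hat{R},\hat{H}]$, combined with injectivity, forces $\hat{R}^{(k)}$ to be an iterated boost of an element of $\bound_{\leq}^{(2)}=\mathbb{C}\Htwo$. The gap is in the second half. Your reduction lemma asks for $\hat{S}$ with $\hat{S}^{(k)}=\hat{R}^{(k)}$ and $\len([\hat{S},\hat{H}])\leq 2$. As stated this is trivially satisfied by $\hat{S}=\hat{R}$ and yields $\hat{R}'=0$, which produces no $3$-local quantity; to make the descent nontrivial you need a \emph{canonical} completion of the boosted top sector $\hat{Q}_b^{(k)}$ into a full $k$-local quantity whose commutator with $\hat{H}$ is at most $2$-local. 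But that is precisely the kind of object whose nonexistence the theorem asserts: constructing it requires solving the commutator constraints in \emph{every} locality sector from $k{+}1$ down to $3$, and the obstructions live in those intermediate sectors, not in $\bound_{\leq}^{(2)}/\mathbb{C}\Htwo$ as you claim. So the lemma is at least as hard as the theorem, and under the theorem's hypothesis it is false in the form you need it. A second, independent problem: even granting the lemma, each step only guarantees $\len(\hat{R}')\leq k-1$, so the descent can jump from length $4$ to length $\leq 2$ and never pass through length exactly $3$, in which case the contrapositive's conclusion is not reached.

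The paper avoids both difficulties by never trying to control the full operator. It descends only on the \emph{top two} locality sectors, under the much weaker condition $\len([\hat{X},\hat{H}])\leq k-1$ (membership in $\bound_{<}^{(k)}$, not $\len([\hat{X},\hat{H}])\leq 2$), and proves isomorphisms $\bound_{<,b}^{(k)}\cong\bound_{<,b}^{(k-1)}$ that send a pair with $k$-local part $q\hat{Q}_b^{(k)}$ to a pair with $(k{-}1)$-local part $q\hat{Q}_b^{(k-1)}$ (up to a nonzero factor $\tfrac{k-2}{k-1}$). A genuine $k$-local conserved quantity has $q\neq 0$, so the chain of isomorphisms lands on a genuinely $3$-local quantity $\hat{Q}$ with $\len([\hat{Q},\hat{H}])\leq 2$ --- the scalar $q$ is preserved, which is exactly what rules out your ``skipping length~$3$'' failure mode. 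The nontrivial work is then Lemma~\ref{lem:key} together with the weighted decompositions of $[\hat{Q}_b^{(k)},\Htwo]$ and $[\hat{Q}_b^{(k)},\Hone]$ into left- and right-reducible pieces (Appendix~\ref{sec:complete_proof}); this is the concrete mechanism that replaces your unsubstantiated ``telescoping completion.'' I recommend reworking your argument so that the induction is carried by the pair $(\hat{Q}^{(k)},\hat{Q}^{(k-1)})$ rather than by the full operator.
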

This theorem provides a sufficient condition for nonintegrability, 
or more precisely, 
the absence of $k$-local conserved quantities for $3\leq k\leq N/2$. 
Note that all models satisfying condition~\eqref{eq:assumption_injective}, 
among those whose nonintegrability has been proven~\cite{shiraishiProofAbsenceLocal2019,chibaExactThermalEigenstates2024,yamaguchiCompleteClassificationIntegrability2024,yamaguchiProofAbsenceLocal2024,parkProofNonintegrabilitySpin12024,hokkyoProofAbsenceLocal2025}, 
satisfy this sufficient condition.
Equivalently, 
Thm.~\ref{thm:main} implies that 
the existence of a 3-local quantity $\hat{Q}$ with $\len([\hat{Q},\hat{H}])\leq 2$ is a necessary condition for integrability. 
This yields a partial resolution of the conjecture in Ref.~\cite{grabowskiIntegrabilityTestSpin1995}, 
which states that the existence of a 3-local conserved quantity $\hat{Q}$, 
that is, a 3-local quantity satisfying $\len([\hat{Q},\hat{H}])=0$ 
is necessary for integrability. 
In summary, 
Thm.~\ref{thm:main} gives a unified perspective and greatly simplifies existing proofs of nonintegrability, 
and provides a rigorous foundation for empirical tests of integrability. 

Thm.~\ref{thm:main} can be applied to models with finite-range interactions, 
because such a model is mathematically equivalent to some spin chain with a larger local dimension. 
Note that there is an extension of the above conjecture to medium-range interactions 
rather than relying on such a transformation to nearest-neighbor interactions~\cite{gomborIntegrableSpinChains2021}.
We can also regard spin ladders and systems on higher-dimensional lattices as one-dimensional systems in the same way~\footnote{
In this case, the ``locality'' of an operator is defined in a very weak sense; it is determined by the size of its support for some direction. 
}; see, e.g., Subsec.~\ref{subsec:triangular}. 
While Eq.~\eqref{eq:assumption_1dimensionality} is expected to be satisfied if the interaction is sufficiently complex, 
it does not hold for models with nearest-neighbor interactions on the hypercubic lattice or the honeycomb lattice. 
Instead, we will explain a method to simplify the proof of nonintegrability in such systems in Sec.~\ref{sec:internal}.

Next, we consider general systems with the injectivity, 
which may not satisfy the assumption~\eqref{eq:assumption_1dimensionality}. 
We can prove some results even in this case, 
which limit a possible form of conserved quantities.
As generalizations of $\bound_{\leq}^{(2)}$, 
we introduce the following sets of operators:
\begin{align}
    &\bound_{\leq}^{(k)}\coloneqq\{\hat{X}\in\bound^{(k)}\mid \len([\hat{X},\hat{H}])\leq k\},\\
    &\bound_{<}^{(k)}
    \coloneqq
    \{\hat{X}\in\bound^{(k)}_\leq\oplus\bound^{(k-1)}\mid \len([\hat{X},\hat{H}])\leq k-1\}.
\end{align}
If it is necessary to indicate the dependence on $\hat{H}$, we denote $\bound_{\leq}^{(k)}$ and $\bound_{<}^{(k)}$ by $\bound_{\leq}^{(k)}(\hat{H})$ and $\bound_{<}^{(k)}(\hat{H})$, respectively.
Note that, since $\len([\hat{X},\hat{H}])\leq\len(\hat{X})+1$ holds, 
$\hat{X}_{[k]}^{(k)}\in\bound_{\leq}^{(k)}$ 
and $\hat{X}_{[k]}^{(k)}+\hat{X}_{[k]}^{(k-1)}\in\bound_{<}^{(k)}$ %
are the easiest and second easiest conditions 
that must be satisfied for a $k$-local quantity 
$\hat{X}_{[k]}=\sum_{l=0}^k \hat{X}_{[k]}^{(l)}$ to be a conserved quantity.
For $\bound_{\leq}^{(k)}$'s, we have the following theorem. 
\begin{thm}\label{thm:general_step1}
Consider a Hamiltonian $\hat{H}$ with nearest-neighbor interactions and on-site potentials 
(see Eq.~\eqref{eq:Hamiltonian})
which satisfies the assumption~\eqref{eq:assumption_injective}. 
Then, we have the following linear isomorphisms between $\bound_{\leq}^{(k)}$'s: 
    \begin{equation}
        \iota_k:\bound_{\leq}^{(k-1)}\xrightarrow{\cong}\bound_{\leq}^{(k)}
        \label{eq:Step1}
    \end{equation}
for $3\leq k\leq N/2$, 
which acts on $\hat{X}^{(k)}=\sum_{i=1}^N\hat{X}^{(k-1)}_i$ in $\bound_{\leq}^{(k-1)}$ as
\begin{equation}
    \hat{X}^{(k-1)}
    \mapsto\sum_{i=1}^N[\hat{X}^{(k-1)}_i,\hat{H}^{(2)}_{i+k-2}]\in\bound_{\leq}^{(k)}.\label{eq:boost}
\end{equation}
\end{thm}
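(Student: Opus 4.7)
The plan is to verify that the map $\iota_k$ defined by Eq.~\eqref{eq:boost} is well-defined on $\bound_\leq^{(k-1)}$, injective, and surjective onto $\bound_\leq^{(k)}$; these three facts together yield the asserted isomorphism.

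\emph{Well-definedness.} What needs to be shown is that $[\iota_k(\hat{X}^{(k-1)}),\hat{H}^{(2)}]$ has no strictly $(k+1)$-local part. Each summand $[\hat{X}^{(k-1)}_i,\hat{H}^{(2)}_{i+k-2}]$ is strictly $k$-local with endpoints $i$ and $i+k-1$, so the $(k+1)$-local piece of the outer commutator only picks up contributions from pairing with $\hat{H}^{(2)}_{i-1}$ (``left extension'') or $\hat{H}^{(2)}_{i+k-1}$ (``right extension''). The Jacobi identity, combined with the support-based vanishings $[\hat{X}^{(k-1)}_i,\hat{H}^{(2)}_{i+k-1}]=0$ and $[\hat{H}^{(2)}_{i+k-2},\hat{H}^{(2)}_{i-1}]=0$ (the latter requiring $k\geq 3$), reduces the right-extension part to $\sum_i[\hat{X}^{(k-1)}_i,[\hat{H}^{(2)}_{i+k-2},\hat{H}^{(2)}_{i+k-1}]]$. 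For the left-extension part, the hypothesis $\hat{X}^{(k-1)}\in\bound_\leq^{(k-1)}$---equivalent to the site-local identity $[\hat{X}^{(k-1)}_i,\hat{H}^{(2)}_{i-1}]=-[\hat{X}^{(k-1)}_{i-1},\hat{H}^{(2)}_{i+k-3}]$---brings it, after a second Jacobi step and a shift of summation index, to the same expression with opposite sign, so the two contributions cancel.

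\emph{Injectivity.} If $\iota_k(\hat{X}^{(k-1)})=0$, then because the summands $[\hat{X}^{(k-1)}_i,\hat{H}^{(2)}_{i+k-2}]$ belong to pairwise transverse subspaces indexed by their endpoint set $\{i,i+k-1\}$, each summand must vanish separately. Writing $\hat{X}^{(k-1)}_i=\sum_a\hat{C}_a\otimes\hat{D}_a$ with $\hat{C}_a$ a linearly independent family on $\{i,\dots,i+k-3\}$ and $\hat{D}_a\in\bound_0^{\{i+k-2\}}$, the resulting equation $\sum_a\hat{C}_a\otimes[\hat{D}_a\otimes I,\hat{H}^{(2)}_{i+k-2}]=0$ reduces, by linear independence of the $\hat{C}_a$, to $[\hat{D}_a\otimes I,\hat{H}^{(2)}_{i+k-2}]=0$ for every $a$. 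The injectivity hypothesis~\eqref{eq:assumption_injective} then forces $\hat{D}_a=0$, so $\hat{X}^{(k-1)}=0$.

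\emph{Surjectivity.} This is the main obstacle. For $\hat{Y}^{(k)}\in\bound_\leq^{(k)}$, the vanishing $[\hat{Y}^{(k)},\hat{H}^{(2)}]^{(k+1)}=0$ produces, separately at every endpoint set, the identity $[\hat{Y}^{(k)}_{i+1},\hat{H}^{(2)}_i]+[\hat{Y}^{(k)}_i,\hat{H}^{(2)}_{i+k-1}]=0$. The strategy is to use this---projected onto suitable subspaces of $\bound^{(k+1)}_i$, and combined with the injectivity hypothesis---to conclude first that $\hat{Y}^{(k)}_i$ is traceless at site $i+k-2$, and second that its action on sites $\{i+k-2,i+k-1\}$ lies in the image of $[\,\cdot\otimes I,\hat{H}^{(2)}_{i+k-2}]$ from $\bound_0^{\{i+k-2\}}$. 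Granting these structural facts, injectivity supplies a unique preimage $\hat{X}^{(k-1)}_i$ for each $i$, and the constraint on $\hat{Y}^{(k)}$ translates---by the same cancellation as in the well-definedness step, read backwards---into $\hat{X}^{(k-1)}\in\bound_\leq^{(k-1)}$. The delicate point, where the graphical language promised in Sec.~\ref{sec:proof} is likely essential, is to show that the single constraint from $\bound_\leq^{(k)}$ simultaneously enforces both the pointwise tracelessness and the image condition at every site, with no residual obstruction.
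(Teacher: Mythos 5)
Your well-definedness and injectivity arguments are correct, and they correspond to the parts of the argument the paper treats as immediate (the converse direction ``it is easy to verify'' and the uniqueness clause of Prop.~\ref{prop:step1}). The problem is surjectivity. You correctly identify the two structural facts needed about $\hat{Y}^{(k)}_i$ --- tracelessness at its second-to-last site and membership of its edge action in the image of $[\bullet\otimes I,\hat{H}^{(2)}_{i+k-2}]$ --- but you then write ``granting these structural facts'' and explicitly flag their derivation as the unresolved delicate point. Since this descent is the entire substance of the theorem (it is the only place where $k\geq 3$ and the injectivity hypothesis do real work), this is a genuine gap rather than a presentational one.

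The missing argument is a tensor-decomposition (Schmidt-support) step applied to the single local constraint. Writing the constraint on the window $\{i,\dots,i+k\}$ as $[\hat{Y}^{(k)}_{i+1},\hat{H}^{(2)}_i]=-[\hat{Y}^{(k)}_i,\hat{H}^{(2)}_{i+k-1}]$, expand the left side as $\sum_b[I\otimes\hat{p}_b,\hat{H}^{(2)}_i]\otimes\hat{N}_b$ with $\hat{p}_b\in\bound_0^{\{i+1\}}$ linearly independent and $\hat{N}_b$ on $\{i+2,\dots,i+k\}$, and the right side as $-\sum_a\hat{M}_a\otimes[\hat{n}_a\otimes I,\hat{H}^{(2)}_{i+k-1}]$ with $\hat{M}_a$ on $\{i,\dots,i+k-2\}$ linearly independent. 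Because $k\geq3$ the bonds $\{i,i+1\}$ and $\{i+k-1,i+k\}$ are disjoint, so one can compare the two sides across the cut between sites $i+k-2$ and $i+k-1$: injectivity makes the factors $[I\otimes\hat{p}_b,\hat{H}^{(2)}_i]$ linearly independent, hence the right supports of the two sides coincide, giving $\Span\{\hat{N}_b\}\subset\mathrm{Im}\bigl([\bullet\otimes I,\hat{H}^{(2)}_{i+k-1}]\bigr)$. This single rank argument delivers the tracelessness and the image condition simultaneously (the ``second wire is solid'' step of the paper), after which the left inverse $\mu$ of Eq.~\eqref{eq:injective} yields the unique preimage $\hat{X}^{(k-1)}_{i+1}$. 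Note also that your ``read backwards'' conclusion is not purely the cancellation identity run in reverse: substituting $\hat{Y}^{(k)}_i=[\hat{X}^{(k-1)}_i,\hat{H}^{(2)}_{i+k-2}]$ into the constraint yields $\bigl[[\hat{X}^{(k-1)}_{i+1},\hat{H}^{(2)}_i]+[\hat{X}^{(k-1)}_i,\hat{H}^{(2)}_{i+k-2}],\hat{H}^{(2)}_{i+k-1}\bigr]=0$, and one must invoke the injectivity hypothesis once more to strip the outer commutator before concluding $\hat{X}^{(k-1)}\in\bound_\leq^{(k-1)}$. These two uses of~\eqref{eq:assumption_injective} are exactly what the paper's proof of Prop.~\ref{prop:step1} supplies and your proposal omits.
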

Theorem~\ref{thm:general_step1} provides an insight into 
how to use Thm.~\ref{thm:main} for proving nonintegrability; 
first, we determine $\bound_{\leq}^{(2)}$
and check the assumption~\eqref{eq:assumption_1dimensionality}. 
Next, we search for the solution of $\len([\iota_2(\hat{H}^{(2)})+\hat{X}^{(2)},\hat{H}])\leq 2$ 
for $\hat{X}^{(2)}\in\bound^{(2)}$. 
If there is no such solution, we can conclude that the system is nonintegrable. 
This protocol not only unifies but also greatly simplifies 
the existing proofs of nonintegrability of systems satisfying~\eqref{eq:assumption_injective}
and~\eqref{eq:assumption_1dimensionality}~\cite{shiraishiProofAbsenceLocal2019,yamaguchiCompleteClassificationIntegrability2024,yamaguchiProofAbsenceLocal2024,parkProofNonintegrabilitySpin12024,hokkyoProofAbsenceLocal2025}, 
where the absence of $k$-local conserved quantities is discussed separately for each $k$. 
A more specific procedure for translationally invariant systems, 
including how to determine $\bound_{\leq}^{(2)}$, will be discussed in Sec.~\ref{sec:algorithm}.

We note that the action of Eq.~\eqref{eq:boost} is similar to that of the boost operator, 
which produces a tower of conserved quantities in Bethe solvable models~\cite{thackerCornerTransferMatrices1986,linksLadderOperatorOnedimensional2001}. 
Even when assumption~\eqref{eq:assumption_1dimensionality} fails,  
one can still prove the absence of conserved quantities generated 
by such boost from the Hamiltonian.
By using the above isomorphisms, 
$\hat{H}^{(2)}\in\bound_{\leq}^{(2)}$ is mapped to some operator 
\begin{equation}
    \hat{Q}_b^{(k)}
    \coloneqq
    \iota_{k-1}\circ\dots\circ \iota_2(\hat{H}^{(2)})
    \in\bound_{\leq}^{(k)}
    \label{eq:boosted_operator}
\end{equation}
for $3\leq k\leq N/2$. 
These operators define a subspace of $\bound_{<}^{(k)}$ as 
\begin{equation}
    \bound_{<,b}^{(k)}\coloneqq\bound_{<}^{(k)}\cap(\mathbb{C}\hat{Q}_b^{(k)}\oplus\bound^{(k-1)}), 
\end{equation}
which represents the set of operator in $\bound_{<}^{(k)}$ whose $k$-local component is proportional to $\hat{Q}_b^{(k)}$. 
Then, we have the following theorem. 
\begin{thm}\label{thm:general_step2}
Let us consider a Hamiltonian $\hat{H}$ of the  same form as in Thm.~\ref{thm:general_step1}. 
Then, we have the following linear isomorphisms between $\bound_{<,b}^{(k)}$'s: 
    \begin{equation}
        \bound_{<,b}^{(3)}
        \xrightarrow{\cong}\bound_{<,b}^{(4)}\xrightarrow{\cong}
        \dots\xrightarrow{\cong}
        \bound_{<,b}^{(\lfloor N/2\rfloor)},\label{eq:Step2}
    \end{equation}
which map the subspace $\bound_{\leq}^{(k-1)}\subset\bound_{<,b}^{(k)}$ 
into $\bound_{\leq}^{(k)}\subset\bound_{<,b}^{(k+1)}$. 
In particular, if $\bound_{<,b}^{(3)}=\bound_{\leq}^{(2)}$ holds, 
then there is no $k$-local conserved quantity in $\bound_{<,b}^{(k)}$ for $3\leq k\leq N/2$. 
\end{thm}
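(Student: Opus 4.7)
The plan is to construct, for each $3 \leq k \leq \lfloor N/2 \rfloor - 1$, an explicit linear map $\phi_k: \bound_{<,b}^{(k)} \to \bound_{<,b}^{(k+1)}$ by applying the boost operation of Thm.~\ref{thm:general_step1} layerwise. For $\hat{Q} = c \hat{Q}_b^{(k)} + \hat{X}^{(k-1)} \in \bound_{<,b}^{(k)}$, I would set
\begin{equation*}
\phi_k(\hat{Q}) \;=\; c\,\hat{Q}_b^{(k+1)} \;+\; \sum_{i=1}^{N} \bigl[\hat{X}^{(k-1)}_i, \hat{H}^{(2)}_{i+k-2}\bigr],
\end{equation*}
where the top term uses $\hat{Q}_b^{(k+1)} = \iota_{k+1}(\hat{Q}_b^{(k)})$ and the lower term is the formula of Eq.~\eqref{eq:boost} applied density-by-density to $\hat{X}^{(k-1)} \in \bound^{(k-1)}$, extending $\iota_k$ from $\bound_{\leq}^{(k-1)}$ to all of $\bound^{(k-1)}$. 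Both summands are the same ``attach $\hat{H}^{(2)}$ at the right boundary of each local density'' operation, only at different locality levels.

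The main verification is that $\phi_k$ lands in $\bound_{<,b}^{(k+1)}$. The image is manifestly in $\mathbb{C}\hat{Q}_b^{(k+1)} \oplus \bound^{(k)}$, so the nontrivial step is the commutator bound $\len([\phi_k(\hat{Q}),\hat{H}]) \leq k$, i.e.\ the vanishing of the $(k+1)$-local component of $[\phi_k(\hat{Q}),\hat{H}]$. For this I would establish an intertwiner identity of the form: for any $\hat{Y}^{(j)} \in \bound^{(j)}$, the $(j+2)$-local part of $[\iota_{j+1}(\hat{Y}^{(j)}), \hat{H}]$ equals $\iota_{j+2}$ applied to the $(j+1)$-local part of $[\hat{Y}^{(j)}, \hat{H}]$. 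Applied layerwise to $\hat{Q}$, this identity shows that the $(k+1)$-local part of $[\phi_k(\hat{Q}), \hat{H}]$ equals $\iota_{k+1}$ of the $k$-local part of $[\hat{Q}, \hat{H}]$, which vanishes by the hypothesis $\hat{Q} \in \bound_{<,b}^{(k)}$. Proving the intertwiner is where the graphical calculus of Sec.~\ref{sec:proof} should pay off: expanding $[\iota_{j+1}(\hat{Y}), \hat{H}^{(2)} + \hat{H}^{(1)}]$ by Jacobi, the terms in which the added $\hat{H}^{(2)}$ overlaps the ``wrong'' edge of $\hat{Y}_i$ must either cancel in pairs between neighboring positions $i, i+1$ or be absorbed into an overall $\iota$-image, using the injectivity assumption~\eqref{eq:assumption_injective} to dispose of residual terms.

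Bijectivity then comes essentially for free from Thm.~\ref{thm:general_step1}. Injectivity of $\phi_k$ follows because $\hat{Q}_b^{(k+1)}$ is strictly $(k+1)$-local while the second summand is at most $k$-local, so the two layers vanish independently, and the lower-layer boost is injective on all of $\bound^{(k-1)}$ for the same reason $\iota_k$ is on $\bound_{\leq}^{(k-1)}$. For surjectivity I would use the intertwiner \emph{backward}: given $c\hat{Q}_b^{(k+1)} + \hat{Y}^{(k)} \in \bound_{<,b}^{(k+1)}$, the vanishing of the $(k+1)$-local part of its commutator with $\hat{H}$, together with the intertwiner applied to a candidate preimage, forces $\hat{Y}^{(k)}$ to lie in the image of the boost and determines the unique $\hat{X}^{(k-1)} \in \bound^{(k-1)}$ for which $c\hat{Q}_b^{(k)} + \hat{X}^{(k-1)}$ satisfies $\len([\,\cdot\,,\hat{H}]) \leq k-1$. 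By construction the restriction of $\phi_k$ to $\{c=0\}$ is precisely $\iota_k: \bound_{\leq}^{(k-1)} \to \bound_{\leq}^{(k)}$, which is the second assertion of the theorem.

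The final ``in particular'' statement is then immediate by iteration: if $\bound_{<,b}^{(3)} = \bound_{\leq}^{(2)}$, meaning every element has trivial coefficient of $\hat{Q}_b^{(3)}$, then composing the isomorphisms yields $\bound_{<,b}^{(k)} = \bound_{\leq}^{(k-1)}$ for all $3 \leq k \leq N/2$, whose elements are at most $(k-1)$-local and hence cannot be $k$-local conserved quantities. The main obstacle I foresee is the intertwiner identity itself: the piece involving $\hat{H}^{(2)}$ is controlled cleanly by Jacobi, but the on-site part $\hat{H}^{(1)}$ generates boundary terms at the added site $i+k-1$ that must reorganize telescopically with the corresponding $\hat{H}^{(1)}$-contribution from $[\hat{Y},\hat{H}]$; controlling this is exactly the mechanism that drove Thm.~\ref{thm:general_step1} and I expect the same bookkeeping machinery to suffice here.
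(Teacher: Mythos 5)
There is a genuine gap, and it sits exactly where you flag your ``main obstacle.'' Your intertwiner identity relates the \emph{top-locality} component of $[\iota_{j+1}(\hat{Y}^{(j)}),\hat{H}]$ (which is $(j+2)$-local) to the top-locality component of $[\hat{Y}^{(j)},\hat{H}]$; that identity is indeed provable by Jacobi and disjointness of supports, and it correctly handles the lower layer of your map, since for the $k$-local operator $\iota(\hat{X}^{(k-1)})$ the $(k+1)$-local part of its commutator \emph{is} the top part. But the cancellation you need in $P_{k+1}[\phi_k(\hat{Q}),\hat{H}]$ pits $\iota\bigl(P_k[\hat{X}^{(k-1)},\hat{H}]\bigr)=-c\,\iota\bigl(P_k[\hat{Q}_b^{(k)},\hat{H}]\bigr)$ against $c\,P_{k+1}[\hat{Q}_b^{(k+1)},\hat{H}]$, and the latter is the \emph{subleading} component of the commutator of a $(k+1)$-local operator (its top, $(k+2)$-local, component is the one that vanishes because $\hat{Q}_b^{(k+1)}\in\bound_\leq^{(k+1)}$). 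Your intertwiner says nothing about subleading components, and the identity you would need, $P_{k+1}[\hat{Q}_b^{(k+1)},\hat{H}]=\iota\bigl(P_k[\hat{Q}_b^{(k)},\hat{H}]\bigr)$, is false: already for $k=2$, the term $[[\hat{H}^{(2)}_i,\hat{H}^{(2)}_{i+1}],\hat{H}^{(1)}_{i+1}]$ in $P_3[\hat{Q}_b^{(3)},\hat{H}]$ decomposes via Jacobi into $[[\hat{H}^{(2)}_i,\hat{H}^{(1)}_{i+1}],\hat{H}^{(2)}_{i+1}]$ \emph{plus} the left-attached piece $[\hat{H}^{(2)}_i,[\hat{H}^{(2)}_{i+1},\hat{H}^{(1)}_{i+1}]]$, which is not of the form ``commutator with $\hat{H}^{(2)}$ at the right edge'' of anything, and similarly the interior terms $[[\hat{H}^{(2)}_i,\hat{H}^{(2)}_{i+1}],\hat{H}^{(2)}_{i}]$ have no counterpart under a right-only boost.

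This subleading component is precisely what the paper's entire technical apparatus is for: Lemma~\ref{lem:key} and the weighted left/right decompositions of $\hat{F}^{(k)}_i$ and $\hat{G}^{(k)}_i$ (Lemmas~\ref{lem:bigcom} and~\ref{lem:scom}) are needed to reorganize the left- and right-attached contributions into a form where injectivity can be applied, and they produce the factor $\tfrac{k-2}{k-1}$ together with a $q$-dependent correction $\mp q(\hat{A}^{(k-1)}_{L/R}+\hat{B}^{(k-1)}_{L/R})$ to the $(k-1)$-local layer. Both are absent from your $\phi_k$: your lower layer is exactly $\iota(\hat{X}^{(k-1)})$ with no $c$-dependent correction and unit weight, whereas the paper's isomorphism restricted to $\bound_\leq^{(2)}$ acts as $\tfrac{2}{3}\iota_2$ rather than $\iota_2$ and its inverse $i_{43}$ carries the correction $-q(\hat{A}_L+\hat{B}_L)$ in the $3$-local layer. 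So the map as you have written it does not land in $\bound_{<,b}^{(k+1)}$ when $c\neq 0$, and the ``bookkeeping machinery'' you defer to is not a routine telescoping but the actual content of the proof (the paper also proceeds in the opposite direction, constructing the reduction $\bound_{<,b}^{(k)}\to\bound_{<,b}^{(k-1)}$ first and obtaining the upward map as its inverse).
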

If the assumption~\eqref{eq:assumption_1dimensionality} holds, 
we have $\bound_{<,b}^{(k)}=\bound_{<}^{(k)}$ and 
Thm.~\ref{thm:general_step2} yields the stronger result than Thm.~\ref{thm:main}: 
if there is no $3$-local quantity $\hat{Q}_{[3]}$ such that $\len([\hat{Q}_{[3]},\hat{H}])\leq 2$, 
then there is no $k$-local quantity $\hat{Q}_{[k]}$ 
such that $\len([\hat{Q}_{[k]},\hat{H}])\leq k-1$ for any $3\leq k\leq N/2$.

\section{Examples}\label{sec:examples}
As applications of Thm.~\ref{thm:main} 
and Thm.~\ref{thm:general_step1}, 
we prove the nonintegrability of several quantum spin systems.

\subsection{The XXZ chain with a non-uniform magnetic field}\label{subsec:XXZ}
We show the absence of local conserved quantities in the spin-1/2 XXZ chain with a non-uniform magnetic field. 
The Hamiltonian of this system is represented as
\begin{equation}
    \hat{H}=\sum_{i=1}^N 
     \sum_{\alpha\in\{x,y,z\}}
     J_{\alpha}\hat{\sigma}_i^\alpha\hat{\sigma}_{i+1}^\alpha
    +\sum_{i=1}^N h_i\hat{\sigma}_i^z,\label{eq:non-uniform_XXZ}
\end{equation}
where $\hat{\sigma}^{\alpha}$'s represent the Pauli matrices. 
We assume that $J_x=J_y\neq0$ and $J_z\neq0$. 
This model fits the form of Eq.~\eqref{eq:Hamiltonian}, 
where $\hat{H}^{(2)}_i$ represents the XXZ interaction and $\hat{H}^{(1)}_i$ 
corresponds to the magnetic field. 
We refer to the magnetic field as \textit{uniform} if $h_i$ is independent of $i$
; otherwise, it is \textit{non-uniform}. 
If a magnetic field is uniform, this system is a Bethe solvable system~\cite{yangOneDimensionalChainAnisotropic1966}, 
and local conserved quantities are explicitly constructed in Refs.~\cite{grabowskiStructureConservationLaws1995,nozawaExplicitConstructionLocal2020}. 

To analyze the existence of local conserved quantities, 
we first determine $\bound_{\leq}^{(2)}$. 
This step has been carried out in Refs.~\cite{shiraishiProofAbsenceLocal2019,yamaguchiCompleteClassificationIntegrability2024,yamaguchiProofAbsenceLocal2024}, 
and it was shown that $\bound_{\leq}^{(2)}=\mathbb{C}\hat{H}^{(2)}$, 
which allows us to apply Thm.~\ref{thm:main}. 
By using Eq.~\eqref{eq:Step1}, 
we can determine $\bound_{\leq}^{(k)}$; 
all of them are one-dimensional. 
In particular, $\bound_{\leq}^{(3)}$ is generated by the following operator:
\begin{align}
    \hat{Q}_{[3]}^{(3)}&\coloneqq\frac{\im}{2}\iota_2(\hat{H}^{(2)})\nonumber\\
    &=\sum_{i=1}^N 
    \sum_{\alpha,\beta,\gamma\in\{x,y,z\}}
    J_{\alpha}J_{\gamma}\varepsilon_{\alpha\beta\gamma}
    \hat{\sigma}_i^\alpha\hat{\sigma}_{i+1}^\beta\hat{\sigma}_{i+2}^\gamma.
    \label{eq:Q3}
\end{align} 
Here, $\varepsilon_{\alpha\beta\gamma}$ denotes the Levi-Civita symbol. 

Next, we determine $\bound_{<}^{(3)}$. 
In the following, we show that this set is trivial, 
that is, $\bound_{<}^{(3)}\subset\bound^{(2)}$ 
if a magnetic field is non-uniform. 
To prove this, we derive a contradiction 
by assuming that 
\begin{equation}
    \len([\hat{Q}_{[3]}^{(3)}+\hat{Q}_{[3]}^{(2)},\hat{H}])\leq 2\label{eq:XXZ_assumption}
\end{equation}
holds 
for some $\hat{Q}_{[3]}^{(2)}\in\bound^{(2)}$. 
This operator can be expanded in terms of the Pauli operators as
\begin{equation}
    \hat{Q}_{[3]}^{(2)}=\sum_{i=1}^N\sum_{\alpha,\beta\in\{x,y,z\}}
    q_{\alpha\beta,i}\hat{\sigma}_i^\alpha\hat{\sigma}_{i+1}^\beta. 
\end{equation}
Now we expand $[\hat{Q}_{[3]}^{(3)}+\hat{Q}_{[3]}^{(2)},\hat{H}]$ as 
\begin{align}
    [\hat{Q}_{[3]}^{(3)}+\hat{Q}_{[3]}^{(2)},\hat{H}]
    &=
    2\im\sum_{i=1}^N\sum_{\alpha,\beta,\gamma\in\{x,y,z\}}
    c_{\alpha\beta\gamma,i}\hat{\sigma}_i^{\alpha}\hat{\sigma}_{i+1}^{\beta}\hat{\sigma}_{i+2}^{\gamma}\nonumber\\
    &+\mbox{(a 2-local quantity)}, 
\end{align}
where $c_{\alpha\beta\gamma,i}$ is a function of $J_{\alpha'}$ and $q_{\alpha'\beta',i}$. 
The assumption~\eqref{eq:XXZ_assumption} is equivalent to $c_{\alpha\beta\gamma,i}=0$ for all $\alpha,\beta,\gamma$ and $i\in\Lambda_N$. 
First, we calculate $c_{xxz,i}$. 
This coefficient comes from the following three commutators:
\begin{align}
    &[J_zJ_x\hat{\sigma}_i^{x}\hat{\sigma}_{i+1}^{y}\hat{\sigma}_{i+2}^{z}, h_{i+1}\hat{\sigma}_{i+1}^{z}],
    \\
    &[-J_zJ_x\hat{\sigma}_i^{y}\hat{\sigma}_{i+1}^{x}\hat{\sigma}_{i+2}^{z}, h_{i}\hat{\sigma}_{i}^{z}],\ \mbox{and}
    \\
    &[q_{xy,i}\hat{\sigma}_i^{x}\hat{\sigma}_{i+1}^{y},J_z\hat{\sigma}_{i+1}^{z}\hat{\sigma}_{i+2}^{z}], 
\end{align}
which determine the form of $c_{xxz,i}$: 
\begin{equation}
    c_{xxz,i}=J_z[J_x(h_{i+1}-h_i)+q_{xy,i}]. 
\end{equation}
Next, we calculate $c_{zyy,i-1}$. 
This coefficient comes from the following three commutators:
\begin{align}
    &[J_zJ_x\hat{\sigma}_{i-1}^{z}\hat{\sigma}_{i}^{x}\hat{\sigma}_{i+1}^{y}, h_{i}\hat{\sigma}_{i}^{z}],
    \\
    &[-J_zJ_x\hat{\sigma}_{i-1}^{z}\hat{\sigma}_{i}^{y}\hat{\sigma}_{i+1}^{x}, h_{i+1}\hat{\sigma}_{i+1}^{z}],\ \mbox{and}
    \\
    &[q_{xy,i}\hat{\sigma}_i^{x}\hat{\sigma}_{i+1}^{y},J_z\hat{\sigma}_{i-1}^{z}\hat{\sigma}_{i}^{z}], 
\end{align}
which determine the form of $c_{zyy,i-1}$: 
\begin{equation}
    c_{zyy,i-1}=J_z[J_x(h_{i+1}-h_i)-q_{xy,i}]. 
\end{equation}
Therefore, $c_{xxz,i}= c_{zyy,i-1}=0$ implies $h_{i+1}=h_i$, where we use $J_x,J_z\neq0$. 
This contradicts with non-uniformity of a magnetic field, and we have $\bound_{<}^{(3)}\subset\bound^{(2)}$. 
By applying Thm.~\ref{thm:main}, 
it turns out that there is no $k$-local conserved quantity for $3\leq k\leq N/2$.

In summary,we obtain the following result.
\begin{prop}\label{prop:XXZ}
    Any model in Eq.~\eqref{eq:non-uniform_XXZ} with a non-uniform magnetic field has no $k$-local conserved quantity for $3\leq k\leq N/2$. 
\end{prop}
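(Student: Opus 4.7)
My strategy is to apply Thm.~\ref{thm:main}, which reduces the proof to verifying the two hypotheses of that theorem and then ruling out any 3-local quantity $\hat{Q}$ with $\len([\hat{Q},\hat{H}]) \leq 2$. The injectivity condition~\eqref{eq:assumption_injective} is immediate from $J_x = J_y \neq 0$ and $J_z \neq 0$. The one-dimensionality $\bound_{\leq}^{(2)} = \mathbb{C}\hat{H}^{(2)}$ is a known input from earlier classifications of XXZ-type chains, so no work is needed there. With these in hand, the remaining task is to show $\bound_{<}^{(3)} \subset \bound^{(2)}$, i.e., that no operator in $\bound_{<}^{(3)}$ carries a genuine 3-local component.

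\textbf{Structure of the candidate.} By Thm.~\ref{thm:general_step1} and the one-dimensionality of $\bound_{\leq}^{(2)}$, the space $\bound_{\leq}^{(3)}$ is itself one-dimensional and spanned by the explicit operator $\hat{Q}_{[3]}^{(3)} = (\im/2)\iota_2(\hat{H}^{(2)})$ appearing in Eq.~\eqref{eq:Q3}. Consequently, up to rescaling, any candidate operator has the form $\hat{Q} = \hat{Q}_{[3]}^{(3)} + \hat{Q}_{[3]}^{(2)}$ with $\hat{Q}_{[3]}^{(2)} = \sum_{i}\sum_{\alpha,\beta} q_{\alpha\beta,i}\,\hat{\sigma}_i^\alpha\hat{\sigma}_{i+1}^\beta \in \bound^{(2)}$, and the problem reduces to showing that no choice of coefficients $q_{\alpha\beta,i}$ can cancel the 3-local component of $[\hat{Q},\hat{H}]$ when the on-site field is non-uniform.

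\textbf{Extracting the obstruction.} I would expand the 3-local part of $[\hat{Q},\hat{H}]$ in the Pauli basis and read off the coefficient $c_{\alpha\beta\gamma,i}$ of each string $\hat{\sigma}_i^\alpha \hat{\sigma}_{i+1}^\beta \hat{\sigma}_{i+2}^\gamma$. For each $(\alpha,\beta,\gamma)$ only a handful of source terms contribute: a commutator of $\hat{Q}_{[3]}^{(3)}$ with some $h_j \hat{\sigma}_j^z$, together with a commutator of $\hat{Q}_{[3]}^{(2)}$ with $\hat{H}^{(2)}$. The main obstacle I anticipate is selecting a pair of strings whose vanishing conditions together eliminate the auxiliary parameters $q_{\alpha\beta,i}$ and leave a constraint purely on the field. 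The natural candidates are $(\alpha\beta\gamma,i) = (xxz,i)$ and $(zyy,i{-}1)$: both coefficients are short linear combinations of $J_xJ_z(h_{i+1}-h_i)$ and $J_z q_{xy,i}$ with the sign on the latter flipped, so summing $c_{xxz,i}=0$ and $c_{zyy,i-1}=0$ eliminates $q_{xy,i}$ and forces $h_{i+1}=h_i$ using $J_x,J_z\neq 0$. This contradicts non-uniformity, establishes $\bound_{<}^{(3)} \subset \bound^{(2)}$, and then Thm.~\ref{thm:main} delivers the absence of $k$-local conserved quantities for all $3 \leq k \leq N/2$.
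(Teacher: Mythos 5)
Your proposal is correct and follows essentially the same route as the paper: verify injectivity and the one-dimensionality of $\bound_{\leq}^{(2)}$ (the latter imported from earlier classifications), reduce via Thm.~\ref{thm:general_step1} to a candidate $\hat{Q}_{[3]}^{(3)}+\hat{Q}_{[3]}^{(2)}$ with $\hat{Q}_{[3]}^{(3)}=(\im/2)\iota_2(\hat{H}^{(2)})$, and then extract the contradiction from exactly the two Pauli strings $(xxz,i)$ and $(zyy,i-1)$, whose coefficients $J_z[J_x(h_{i+1}-h_i)\pm q_{xy,i}]$ sum to force $h_{i+1}=h_i$. No substantive differences from the paper's argument.
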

Although nonintegrability of systems without spatial uniformity is also discussed in Ref.~\cite{shiraishiAbsenceLocalConserved2024}, 
the above result provides the first example 
in which it is rigorously shown that 
the system is nonintegrable \textit{only if} it is not uniform.

We comment on the connection between the above result and thermalization of this system.
Many-body localization (MBL) is a phenomenon where strong disorder prevents a system from thermalizing, leading to a breakdown of the eigenstate thermalization hypothesis (ETH)
~\cite{nandkishoreManyBodyLocalizationThermalization2015,altmanUniversalDynamicsRenormalization2015,abaninRecentProgressManybody2017,aletManybodyLocalizationIntroduction2018}. 
Whether the XXZ model with a random magnetic field exhibits MBL in the thermodynamic limit is a subject of debate~\cite{serbynLocalConservationLaws2013,luitzManybodyLocalizationEdge2015,znidaricInteractionInstabilityLocalization2018,suntajsQuantumChaosChallenges2020,kiefer-emmanouilidisEvidenceUnboundedGrowth2020,selsDynamicalObstructionLocalization2021,kiefer-emmanouilidisSlowDelocalizationParticles2021,selsThermalizationDiluteImpurities2023,imbrieManyBodyLocalizationQuantum2016,roeckAbsenceNormalHeat2024}.
The MBL is attributed to the existence of an extensive number of quasi-local conserved quantities~\cite{serbynLocalConservationLaws2013,husePhenomenologyFullyManybodylocalized2014}, 
which are operators consisting of $k$-local operators ($0\leq k\leq N$)
whose coefficients decay exponentially with $k$. 
Such a mechanism is sometimes referred to as the emergence of local integrability~\cite{abaninRecentProgressManybody2017,aletManybodyLocalizationIntroduction2018}, 
but note that this is different from the integrability we discuss, 
in the sense that there is an extensive number of (strictly) local conserved quantities. 
In fact, Prop.~\ref{prop:XXZ} tells us that the XXZ model with a random magnetic field is nonintegrable in our sense. 

\subsection{The $S=\frac{1}{2}$ XYZ model on the triangular lattice}\label{subsec:triangular}
To illustrate our main result in higher-dimensional systems, 
we consider the $S = \frac{1}{2}$ XYZ model on the triangular lattice. 
By embedding the triangular lattice into the square lattice (see Fig.~\ref{fig:triangular}), 
the Hamiltonian of the system takes the following form:
\begin{equation}
     \hat{H}=\sum_{\substack{\mathbf{r}\in\Lambda_N\times\Lambda_M
    \\
    \mathbf{v}=(1,0),(1,1),(0,1)}}
     \sum_{\alpha\in\{x,y,z\}}
     J_{\alpha}\hat{\sigma}_{\mathbf{r}}^\alpha
     \hat{\sigma}_{\mathbf{r}+\mathbf{v}}^\alpha
    ,\label{eq:triangle_XYZ}
\end{equation}
where we assume that $M\geq3$. 
\begin{figure}[tbp]
    \centering
    \includegraphics[width=1\linewidth]{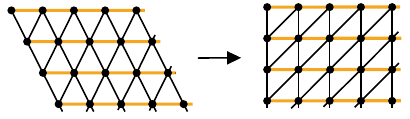}
    \caption{The triangular lattice can be regarded as 
    the square lattice with additional edges.  
    We define locality along the direction indicated by the horizontal lines.
    }\label{fig:triangular}
\end{figure}
We impose periodic boundary conditions (PBC) in one direction and 
either PBC or open boundary conditions (OBC) in the other: 
\begin{align}
    \hat{\sigma}_{(N+1,j)}^\alpha&=\hat{\sigma}_{(1,j)}^\alpha,\nonumber\\
    \hat{\sigma}_{(i,M+1)}^\alpha&=
    \begin{cases}
        \hat{\sigma}_{(i,1)}^\alpha\ (\mbox{PBC});\\
        0\ (\mbox{OBC}).
    \end{cases}
\end{align}

We define locality by treating $\{i\} \times \Lambda_M$ as a single site in an effectively one-dimensional system.
According to Eq.~\eqref{eq:Hamiltonian}, the Hamiltonian is divided as 
\begin{align}
    \hat{H}_i^{(2)}&=
    \sum_{\substack{j\in\Lambda_M\\\mathbf{v}=(1,0),(1,1)}}
     \sum_{\alpha\in\{x,y,z\}}
     J_{\alpha}\hat{\sigma}_{(i,j)}^\alpha
     \hat{\sigma}_{(i,j)+\mathbf{v}}^\alpha,\\
     \hat{H}_i^{(1)}&=
    \sum_{j\in\Lambda_M}
     \sum_{\alpha\in\{x,y,z\}}
     J_{\alpha}\hat{\sigma}_{(i,j)}^\alpha
     \hat{\sigma}_{(i,j+1)}^\alpha.
\end{align}
Based on the above definition, 
we have the following proposition. 
\begin{prop}\label{prop:triangle}
    The $S=\frac{1}{2}$ XYZ model on the triangular lattice~\eqref{eq:triangle_XYZ}
    has no $k$-local conserved quantity for $3\leq k\leq N/2$. 
\end{prop}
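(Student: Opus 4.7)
The plan is to apply Theorem~\ref{thm:main} to the effective one-dimensional chain in which each vertical column $\{i\}\times\Lambda_M$ is treated as a single super-site with local Hilbert space of dimension $2^M$. The decomposition already given puts the Hamiltonian in the form~\eqref{eq:Hamiltonian}, so three ingredients remain to be verified: (a) the injectivity~\eqref{eq:assumption_injective} of $\hat{H}^{(2)}_i$; (b) the one-dimensionality~\eqref{eq:assumption_1dimensionality} of $\bound_{\leq}^{(2)}$; and (c) the absence of a $3$-local $\hat{Q}$ with $\len([\hat{Q},\hat{H}])\leq 2$.

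For (a), I would take $\hat{X}$ on column $i+1$ with $[\,I\otimes\hat{X},\hat{H}^{(2)}_i\,]=0$, and in that commutator read off the coefficient of each single Pauli $\hat{\sigma}^\alpha_{(i,j)}$ on column $i$. Because the single Paulis at distinct $(j,\alpha)$ are linearly independent and $J_x,J_y,J_z\neq 0$, this produces the constraints $[\hat{X},\,\hat{\sigma}^\alpha_{(i+1,j)}+\hat{\sigma}^\alpha_{(i+1,j+1)}]=0$ for every $(j,\alpha)$, with a boundary-corrected version under OBC. For OBC, or for PBC with $M$ odd, a telescoping argument immediately yields $[\hat{X},\hat{\sigma}^\alpha_{(i+1,j)}]=0$ for all $(j,\alpha)$, so $\hat{X}=0$ by tracelessness; for PBC with $M$ even, the joint commutant of the summed operators still reduces to $\mathbb{C}I$ once $\alpha=x,y,z$ are used simultaneously. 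The symmetric statement for $\hat{Y}$ on column $i$ is identical.

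For (b), expand a general $\hat{X}\in\bound^{(2)}$ in the Pauli basis on two adjacent columns and impose that $[\hat{X},\hat{H}^{(2)}]$ has no $3$-column component. With $J_x,J_y,J_z\neq 0$, a direct Pauli-string bookkeeping---structurally identical to the spin-$1/2$ XYZ analysis on the $1$D chain---shows that the only solutions are scalar multiples of $\hat{H}^{(2)}$; the diagonal $(1,1)$ bonds simply add constraints with shifted row indices that $\hat{H}^{(2)}$ already satisfies. Then Theorem~\ref{thm:general_step1} gives $\bound_{\leq}^{(3)}=\mathbb{C}\hat{Q}_b^{(3)}$ with $\hat{Q}_b^{(3)}=\sum_i[\hat{H}^{(2)}_i,\hat{H}^{(2)}_{i+1}]$, so for (c) any candidate has the form $\hat{Q}=\hat{Q}_b^{(3)}+\hat{Q}^{(2)}$ with $\hat{Q}^{(2)}\in\bound^{(2)}$ (the coefficient of $\hat{Q}_b^{(3)}$ can be normalized to $1$, since a zero coefficient would force $\len(\hat{Q})\leq 2$). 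Following the strategy of Subsec.~\ref{subsec:XXZ}, I would select a small collection of $3$-column Pauli strings whose coefficients in $[\hat{Q},\hat{H}]$ receive contributions only from a few commutators---for instance, strings supported on a single row of the two outer columns together with a short Pauli string on the middle column---and then exhibit two such coefficients whose simultaneous vanishing is incompatible because $J_x,J_y,J_z\neq 0$ and the diagonal bonds are present.

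The main obstacle I expect is step (c). In the one-dimensional XXZ argument two Pauli strings sufficed to close the contradiction, but the diagonal bonds of the triangular lattice open additional compensation channels for $\hat{Q}^{(2)}$, so pinning down a minimal set of Pauli strings that are manifestly inconsistent is the delicate combinatorial step. By contrast, steps (a) and (b) are routine bookkeeping once the coarse-graining into super-sites has been carried out.
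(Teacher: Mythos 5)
Your overall architecture---coarse-graining each column $\{i\}\times\Lambda_M$ into a super-site, then verifying the injectivity~\eqref{eq:assumption_injective}, the one-dimensionality~\eqref{eq:assumption_1dimensionality}, and finally the absence of a $3$-local $\hat{Q}$ with $\len([\hat{Q},\hat{H}])\leq 2$---is exactly the paper's. The genuine gap is step (c), which you yourself flag as ``the delicate combinatorial step'' and do not carry out. This is not an optional refinement: (a) and (b) only put you in a position to apply Thm.~\ref{thm:main}, so (c) is where the nonintegrability is actually established, and your worry about ``additional compensation channels'' is precisely the difficulty---every $3$-column string generated by the boosted operator $\hat{Q}_b^{(3)}=\sum_i[\hat{H}^{(2)}_i,\hat{H}^{(2)}_{i+1}]$ can a priori be cancelled by a suitable choice of $\hat{Q}^{(2)}$, so one must locate a coefficient of $\hat{Q}^{(2)}$ that is overdetermined. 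The paper resolves this by pinching a single coefficient from both sides: writing the term $q^{(2)}_{\bm{\alpha}\bm{\beta},2}\,\hat{\sigma}^z_{(2,1)}\hat{\sigma}^x_{(2,2)}\hat{\sigma}^y_{(3,1)}$ of $\hat{Q}^{(2)}$, the coefficient of $\hat{\sigma}_{(1,1)}^x\hat{\sigma}_{(2,1)}^y\hat{\sigma}_{(2,2)}^x\hat{\sigma}_{(3,1)}^y$ in $[\hat{Q},\hat{H}]$ yields $q^{(2)}_{\bm{\alpha}\bm{\beta},2}+2\im qJ_xJ_y=0$, while the coefficient of $\hat{\sigma}^z_{(2,1)}\hat{\sigma}^x_{(2,2)}\hat{\sigma}^z_{(3,1)}\hat{\sigma}^x_{(4,1)}$ (the same $\hat{Q}^{(2)}$ term commuted against the bond on the opposite side) yields $q^{(2)}_{\bm{\alpha}\bm{\beta},2}-2\im qJ_xJ_y=0$, forcing $q=0$. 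Without exhibiting such a pair of strings, or an equivalent obstruction, the proof is incomplete.

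On the parts you did argue: your step (a) is a valid and arguably cleaner route than the paper's. Reading off the coefficient of each $\hat{\sigma}^\alpha_{(i,j)}$ reduces injectivity to the triviality of the commutant of $\{\hat{\sigma}^\alpha_{(i+1,j)}+\hat{\sigma}^\alpha_{(i+1,j+1)}\}_{j,\alpha}$, and even in the even-$M$ PBC case where telescoping only gives $[\hat{X},\hat{\sigma}^\alpha_{(i+1,j)}]=(-1)^{j-1}[\hat{X},\hat{\sigma}^\alpha_{(i+1,1)}]$, the identity $[\hat{\sigma}^x_j-\hat{\sigma}^x_{j+2},\hat{\sigma}^y_j-\hat{\sigma}^y_{j+2}]=2\im(\hat{\sigma}^z_j+\hat{\sigma}^z_{j+2})$ shows the generated algebra contains every single-site operator, so the commutant is $\mathbb{C}I$; the paper instead performs a case-by-case analysis of the Pauli coefficients $q_{\bm{\alpha}}$. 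Step (b) follows the paper's route in outline but is only sketched; the paper's actual argument first restricts $\hat{X}_i$ to two-body operators with $k\in\{j,j+1\}$ via injectivity and then fixes $q_{\alpha\beta;jk,i}=q\,\delta_{\alpha\beta}J_\alpha$ by comparing specific three-column coefficients, and you would need to supply that computation as well.
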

\begin{proof}
We prove Prop.~\ref{prop:triangle} by using Thm.~\ref{thm:main} and Thm.~\ref{thm:general_step1}. 
\subsubsection{Injectivity of $\hat{H}_i^{(2)}$}
Consider an operator $\hat{X}\in\bound_0^{\{i\}\times\Lambda_M}$ 
such that $[\hat{X}\otimes I,\hat{H}_i^{(2)}]=0$. 
Our goal in this step is to show that $\hat{X}=0$. 
We expand $\hat{X}$ and $[\hat{X}\otimes I,\hat{H}_i^{(2)}]$ by the Pauli basis: 
\begin{align}
    \hat{X}=
    \sum_{\substack{\bm{\alpha}\in\{0,x,y,z\}^{\Lambda_M}\\\bm{\alpha}\neq\bm{0}}}
    q_{\bm{\alpha}}\bigotimes_{j\in\Lambda_M}\hat{\sigma}_{(i,j)}^{\alpha_j}
    \eqqcolon 
    \sum_{\bm{\alpha}}q_{\bm{\alpha}}\hat{\bm{\sigma}}_i^{\bm{\alpha}}, \\
    [\hat{X}\otimes I,\hat{H}_i^{(2)}]=\sum_{\substack{\bm{\alpha}\\\beta\in\{x,y,z\}\\j'\in\Lambda_M}}
    r_{\bm{\alpha},\beta;j'}
    \hat{\bm{\sigma}}_i^{\bm{\alpha}}\hat{\sigma}_{(i+1,j')}^{\beta}, 
\end{align}
where $\hat{\sigma}_0\coloneqq I$ is the identity operator. 
When we consider the OBC, we take $\alpha_{0}=\alpha_{M+1}=0$. 

First, we show $q_{\bm{\alpha}}=0$ if $\alpha_j=0$ for some $j\in\Lambda_M$. 
Since $\bm{\alpha}\neq\bm{0}$, 
we can appropriately redefine $j$ so that either $\alpha_{j+1} \neq 0$ or $\alpha_{j-1} \neq 0$ holds.
If $\alpha_{j+1}=x$, then we have 
\begin{equation}
    r_{\bm{\alpha}',y;j+1}=2\im J_yq_{\bm{\alpha}},
\end{equation}
where
\begin{equation}
    \alpha'_{j'}=
    \begin{cases}
    \alpha_{j'} & (j'\neq j+1)\\
    z & (j'=j+1)
    \end{cases}.\label{eq:alpha'}
\end{equation}
Therefore, $[\hat{X}\otimes I,\hat{H}_i^{(2)}]$ implies $q_{\bm{\alpha}}=0$. 
The remaining cases follow similarly. 

Next, we consider $\bm{\alpha} \in \{x,y,z\}^{\Lambda_M}$ 
such that there exist $j, j'\in\Lambda_M$ satisfying $\alpha_j \neq \alpha_{j'}$.
In this case, we can take some $1\leq j<M$ satisfying $\alpha_j\neq\alpha_{j+1}$. 
If $(\alpha_j,\alpha_{j+1})=(y,x)$, we have $r_{\bm{\alpha}',y;j+1}=2iJ_yq_{\bm{\alpha}}$ 
for $\bm{\alpha}'$ in Eq.~\eqref{eq:alpha'}. 
The same applies to other cases, and we have shown $q_{\bm{\alpha}}=0$ 
if $\alpha_j \neq \alpha_{j'}$ for some $(j,j')$. 

Finally, we examine the case
where $\alpha_j = \alpha_1\in\{x,y,z\}$ for all $j\in\Lambda_M$.
If $\alpha_1=x$, we have
\begin{equation}
    r_{\bm{\alpha}',y;2}=2\im J_y(q_{\bm{\alpha}}-q_{\tilde{\bm{\alpha}}}),
\end{equation}
where $\bm{\alpha}'$ is defined in Eq.~\eqref{eq:alpha'} and 
\begin{equation}
    \tilde{\alpha}_{j}=
    \begin{cases}
    z & (j= 1,2)\\
    x & (j\neq 1,2)
    \end{cases}.
\end{equation}
Therefore, $[\hat{X}\otimes I,\hat{H}_i^{(2)}]$ implies $q_{\bm{\alpha}}=q_{\tilde{\bm{\alpha}}}$. 
On the other hand, we have shown $q_{\tilde{\bm{\alpha}}}=0$ in the above. 
Thus, $q_{\bm{\alpha}}=0$. 
The same argument can be made for $\alpha_1 = y,z$, 
and it follows that $q_{\bm{\alpha}} = 0$.
From the above, it follows that $\hat{X} = 0$, and the first line of Eq.~\eqref{eq:assumption_injective} is proven. 
The proof for the second line can be done in a similar manner.

\subsubsection{Confirmation of Assumption~\eqref{eq:assumption_1dimensionality}}
In this step, we confirm the assumption~\eqref{eq:assumption_1dimensionality}, 
or more specifically, 
we show that $\hat{X}\propto\hat{H}^{(2)}$ for $\hat{X}\in\bound_\leq^{(2)}$. 
First, we note that $\hat{X}\in\bound_\leq^{(2)}$ is equivalent to 
\begin{equation}
    [\hat{X}_i\otimes I,I\otimes\hat{H}^{(2)}_{i+1}]=[\hat{H}^{(2)}_i\otimes I,I\otimes\hat{X}_{i+1}]\label{eq:step1_operator}
\end{equation}
for all $i\in\Lambda_N$. 
From this equation and the injectivity, 
we know~\footnote{
See the discussion at the last paragraph of Sec.~\ref{sec:algorithm}. 
In this case, $\mathcal{A}$ can be taken as the subspace consisting of one-body operators. 
}
that $\hat{X}_i$ consists of two-body operators:
\begin{equation}
    \hat{X}_i=\sum_{\substack{\alpha,\beta\in\{x,y,z\}\\j,k\in\Lambda_M}}q_{\alpha\beta;jk,i}
    \hat{\sigma}^\alpha_{(i,j)}\hat{\sigma}^\beta_{(i+1,k)}. 
\end{equation}
Furthermore, it is clear that $q_{\alpha\beta;jk,i}=0$ unless $k \in\{j,j+1\}$.

Next, we show that $q_{\alpha\beta;jk,i}=q\delta_{\alpha\beta}J_\alpha$ for some $q\in\mathbb{C}$. 
By comparing the coefficient of $\hat{\sigma}^x_{(i,j)}\hat{\sigma}^x_{(i+1,k)}\hat{\sigma}^z_{(i+2,k)}$ on both sides of Eq.~\eqref{eq:step1_operator}, we have 
$2iJ_zq_{xy;jk,i}=0$. 
Similarly, we can show that $q_{\alpha\beta;jk,i}=0$ for $\alpha\neq\beta$. 
To consider the case of $\alpha=\beta$, 
we compare the coefficient of $\hat{\sigma}^x_{(i,j)}\hat{\sigma}^z_{(i+1,k)}\hat{\sigma}^y_{(i+2,l)}$ on both sides of Eq.~\eqref{eq:step1_operator}. 
This yields relations between $q_{\alpha\alpha;jk,i}$'s as
\begin{equation}
    J_yq_{xx;jk,i}=J_xq_{yy;kl,i+1}
\end{equation}
for all $i\in\Lambda_N$, $k\in\{j,j+1\}\cap\Lambda_M$ and $l\in\{k,k+1\}\cap\Lambda_M$. 
Similarly, we have 
\begin{equation}
    q_{\alpha\alpha;jk,i}/J_\alpha=q_{\beta\beta;kl,i+1}/J_\beta,
\end{equation}
which is equivalent to $q_{\alpha\alpha;jk,i}=\frac{q_{xx;11,1}}{J_x}J_\alpha$ 
for all $i\in\Lambda_N$ and $k\in\{j,j+1\}\cap\Lambda_M$. 
Hence, we have shown that $\hat{X}=\frac{q_{xx;11,1}}{J_x}\hat{H}^{(2)}$. 
Therefore, the Hamiltonian~\eqref{eq:triangle_XYZ} satisfies the assumption~\eqref{eq:assumption_1dimensionality}. 

\subsubsection{Absence of $3$-local quantity $\hat{Q}_{[3]}$ 
satisfying $\len([\hat{Q}_{[3]},\hat{H}])\leq 2$}
Up to this point, we have confirmed that 
the Hamiltonian~\eqref{eq:triangle_XYZ} satisfies the assumption~\eqref{eq:assumption_injective} and~\eqref{eq:assumption_1dimensionality}.  
Thus, Thm.~\ref{thm:main} can be applied, 
and it suffices to show that there is no $3$-local quantity $\hat{Q}$ 
such that $\len([\hat{Q},\hat{H}])\leq 2$ holds. 
By the fact~\eqref{eq:assumption_1dimensionality} and Thm.~\ref{thm:general_step1}, 
a $3$-local quantity $\hat{Q}_{[3]}$ 
satisfying $\len([\hat{Q}_{[3]},\hat{H}])\leq 2$ can be written as 
\begin{equation}
    \hat{Q}_{[3]}=q\sum_{i=1}^N[\hat{H}_i^{(2)},\hat{H}_{i+1}^{(2)}]+\hat{Q}_{[3]}^{(2)}
\end{equation}
with some $q\in\mathbb{C}$ and 2-local quantity $\hat{Q}_{[3]}^{(2)}$. 
Our goal is to show that $q=0$. 

We expand this operator $\hat{Q}_{[3]}^{(2)}$ by the Pauli basis:
\begin{equation}
    \hat{Q}_{[3],i}^{(2)}=\sum_{\bm{\alpha},\bm{\beta}\in\{0,x,y,z\}^{\Lambda_M}}
    q^{(2)}_{\bm{\alpha}\bm{\beta},i}
    \hat{\bm{\sigma}}^{\bm{\alpha}}_i\hat{\bm{\sigma}}^{\bm{\beta}}_{i+1}. 
\end{equation}
We focus on the following $(\bm{\alpha}, \bm{\beta})$:
\begin{align}
    \alpha_j&=
    \begin{cases}
        z & (j=1) \\
        x & (j=2) \\
        0 & (j\neq1,2)
    \end{cases},\\
    \beta_j&=
    \begin{cases}
        y & (j=1) \\
        0 & (j\neq1)
    \end{cases}.
\end{align}
The coefficient of $\hat{\sigma}_{(1,1)}^x\hat{\sigma}_{(2,1)}^y\hat{\sigma}_{(2,2)}^x\hat{\sigma}_{(3,1)}^y$ 
in $[\hat{Q}_{[3]},\hat{H}]$ comes from the following commutators:
\begin{align}
    [q^{(2)}_{\bm{\alpha}\bm{\beta},2}
    \hat{\sigma}_{(2,1)}^z\hat{\sigma}_{(2,2)}^x\hat{\sigma}_{(3,1)}^y,
    J_x\hat{\sigma}_{(1,1)}^x\hat{\sigma}_{(2,1)}^x], \\
    [2\im q J_xJ_y
    \hat{\sigma}_{(1,1)}^x\hat{\sigma}_{(2,1)}^z\hat{\sigma}_{(3,1)}^y,
    J_x\hat{\sigma}_{(2,1)}^x\hat{\sigma}_{(2,2)}^x]. 
\end{align}
Since the coefficient of $\hat{\sigma}_{(1,1)}^x\hat{\sigma}_{(2,1)}^y\hat{\sigma}_{(2,2)}^x\hat{\sigma}_{(3,1)}^y$ is zero 
by the assumption $\len([\hat{Q}_{[3]},\hat{H}])\leq 2$, 
the following equation holds:
\begin{equation}
    q^{(2)}_{\bm{\alpha}\bm{\beta},2}+2\im qJ_xJ_y=0.
\end{equation}
Similarly, 
by focusing on the coefficient of 
$\sigma_{(2,1)}^z\sigma_{(2,2)}^x\sigma_{(3,1)}^z\sigma_{(4,1)}^x$, 
we have 
\begin{equation}
    q^{(2)}_{\bm{\alpha}\bm{\beta},2}-2\im qJ_xJ_y=0.
\end{equation}
By combining these equations, we have $q=0$. 
Therefore, there is no $3$-local quantity $\hat{Q}_{[3]}$ 
satisfying $\len([\hat{Q}_{[3]},\hat{H}])\leq 2$ 
and this system has no $k$-local conserved quantity for $3\leq k\leq N/2$.  
\end{proof}

\section{Outline of Proof}\label{sec:proof}
\subsection{String Diagram}
Before proving the theorems, 
we introduce a ``basis-independent'' approach for a proof of nonintegrability. 
We adopt a string diagram as a useful graphical language, 
which is widely used in various areas of physics and mathematics~\cite{penroseApplicationsNegativeDimensional1971, baezPhysicsTopologyLogic2011,coeckePicturingQuantumProcesses2017}. 
The basic idea is to represent a linear map using boxes and wires. 
For instance, $\hat{X}\in\bound\otimes\bound_0$, $f:\bound_0^{\otimes 3}\to\bound_0^{\otimes 2}$, 
and $(\id_{\bound}\otimes f)(\hat{X}\otimes \bullet):\bound_0^{\otimes 2}\to\bound\otimes\bound_0^{\otimes 2}$ are depicted as
\begin{equation}
\figurecentering{\includegraphics[]{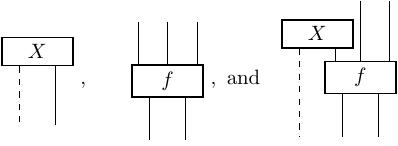}}
\ ,
\end{equation}
respectively, where $\id_{\bound}$ is the identity map on $\bound$. 
Here, each box represents a linear map: 
the top line corresponds to the input, and the bottom line to the output. 
In other words, time flows from top to bottom. 
Note that an element of a linear space is identified with a linear map from $\mathbb{C}$, 
which is represented by the ``empty'' line. 
The horizontal direction corresponds to the spatial direction, 
and placing the boxes horizontally corresponds to their tensor product.
In the following, 
solid and dashed lines represent $\bound_0$ 
and $\bound$, respectively.
Using this diagrammatic notation, the injectivity of 
$[\hat{H}^{(2)}_{i},\bullet\otimes I]$ and $[\hat{H}^{(2)}_{i},I\otimes \bullet]$ 
(see Eq.~\eqref{eq:assumption_injective})
are equivalent to the existence of linear maps $\mu_i,\lambda_i:\bound_0^{\otimes 2}\to\bound_0$ satisfying
\begin{equation}
\figurecentering{\includegraphics[]{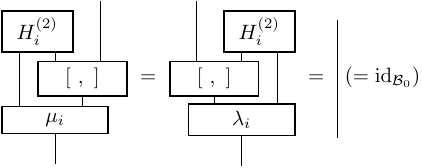}}
,
\label{eq:injective}
\end{equation}
for all $i\in\Lambda_N$. 
Here, $[\ ,\ ]:\bound_0^{\otimes 2}\to\bound_0$ denotes the commutator on $\bound_0$. 

\subsection{Isomorphisms between $\bound^{(k)}_{\leq}$'s}
Now we move on to the proof of Thm.~\ref{thm:general_step1}. 
It is sufficient to show the following proposition. 
\begin{prop}\label{prop:step1}
We fix an integer $k$ such that $3\leq k\leq N/2$. 
For each $\hat{Q}_{[k]}^{(k)}=\sum_{i=1}^N\hat{Q}^{(k)}_{[k],i}\in\bound_{\leq}^{(k)}$, 
there (uniquely) exists 
$\hat{Q}_{[k-1]}^{(k-1)}=\sum_{i=1}^N\hat{Q}^{(k-1)}_{[k-1],i}\in\bound_{\leq}^{(k-1)}$ 
satisfying
\begin{equation}
\figurecentering{\includegraphics[]{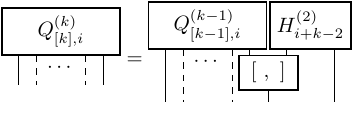}}
.
\label{eq:Step1_reduction}
\end{equation}
\end{prop}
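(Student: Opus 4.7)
The plan is to prove Prop.~\ref{prop:step1} in four steps: uniqueness via the injectivity assumption, construction of a candidate $\hat{Q}^{(k-1)}_{[k-1],i}$ by applying the left-inverse $\lambda_{i+k-2}$ on the rightmost two sites, extraction of the required image condition from the cancellation built into $\bound^{(k)}_{\leq}$, and finally verification that the candidate lies in $\bound^{(k-1)}_{\leq}$. For uniqueness, suppose $\hat{Q}^{(k-1)}_{[k-1],i}$ and $\tilde{Q}^{(k-1)}_{[k-1],i}$ both satisfy Eq.~\eqref{eq:Step1_reduction} for a common $\hat{Q}^{(k)}_{[k],i}$. Their difference $\hat{R}_i$ then satisfies $[\hat{R}_i, \hat{H}^{(2)}_{i+k-2}]=0$; decomposing $\hat{R}_i=\sum_a \hat{A}_a\otimes\hat{B}_a$ with the $\hat{A}_a$'s on sites $\{i,\ldots,i+k-3\}$ linearly independent and $\hat{B}_a\in\bound_0^{\{i+k-2\}}$, I would isolate $[\hat{B}_a\otimes I,\hat{H}^{(2)}_{i+k-2}]=0$ for each $a$ and conclude $\hat{B}_a=0$ by assumption~\eqref{eq:assumption_injective}, hence $\hat{R}_i=0$.

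For existence, the natural candidate is $\hat{Q}^{(k-1)}_{[k-1],i}\coloneqq(\id^{\otimes(k-2)}\otimes\lambda_{i+k-2})(\hat{Q}^{(k)}_{[k],i})$, viewing $\hat{Q}^{(k)}_{[k],i}$ as an element of $\bound_0^{\{i\}}\otimes\bound^{\otimes(k-3)}\otimes\bound_0^{\{i+k-2\}}\otimes\bound_0^{\{i+k-1\}}$. Expanding $\hat{Q}^{(k)}_{[k],i}=\sum_a \hat{A}_a\otimes\hat{B}_a$ with $\hat{A}_a$ on sites $\{i,\ldots,i+k-3\}$ linearly independent and $\hat{B}_a\in\bound_0^{\otimes 2}$ on the last two sites, the desired identity $[\hat{Q}^{(k-1)}_{[k-1],i},\hat{H}^{(2)}_{i+k-2}]=\hat{Q}^{(k)}_{[k],i}$ holds precisely when every $\hat{B}_a$ lies in the image of the map $[\cdot\otimes I,\hat{H}^{(2)}_{i+k-2}]$ defined on $\bound_0^{\{i+k-2\}}$.

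The hard part is establishing this image condition, and this is where the defining cancellation of $\bound^{(k)}_{\leq}$ enters. Vanishing of the $(k+1)$-local component of $[\hat{Q}^{(k)}_{[k]},\hat{H}]$ on the support $\{i,\ldots,i+k\}$ yields that $[\hat{Q}^{(k)}_{[k],i},\hat{H}^{(2)}_{i+k-1}]+[\hat{Q}^{(k)}_{[k],i+1},\hat{H}^{(2)}_i]$ has strictly smaller support. Rewriting this identity in the string-diagram language of Eq.~\eqref{eq:injective} and applying $\lambda_{i+k-1}$ to the rightmost pair of sites and $\mu_i$ to the leftmost pair collapses it into a pointwise algebraic constraint tying the last two-site slot of $\hat{Q}^{(k)}_{[k],i}$ to the first two-site slot of $\hat{Q}^{(k)}_{[k],i+1}$. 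I expect this to be the main obstacle: one must track the ``middle'' indices carefully so the extraction is basis-free and does not confuse contributions from neighboring intervals of support. The identity it produces should be exactly that each $\hat{B}_a$ lies in the image of $[\cdot\otimes I,\hat{H}^{(2)}_{i+k-2}]$, after which $\lambda_{i+k-2}$ acts as a genuine inverse and the candidate above satisfies Eq.~\eqref{eq:Step1_reduction}.

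Finally, to verify $\hat{Q}^{(k-1)}_{[k-1]}\in\bound^{(k-1)}_{\leq}$, I would substitute $\hat{Q}^{(k)}_{[k],i}=[\hat{Q}^{(k-1)}_{[k-1],i},\hat{H}^{(2)}_{i+k-2}]$ into the hypothesis $[\hat{Q}^{(k)}_{[k]},\hat{H}]\in\tbound^{(k)}$ and use the Jacobi identity together with the telescoping of $\sum_i$; the telescope exchanges a commutator on site $i+k-2$ for equivalent ones on neighboring sites at the price of terms that are manifestly at most $k$-local. This should confirm that the $k$-local component of $[\hat{Q}^{(k-1)}_{[k-1]},\hat{H}]$ vanishes, completing the proof. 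The assumption $k\leq N/2$ is used here to guarantee that the telescoping does not produce spurious long-range wraparound contributions on the periodic chain.
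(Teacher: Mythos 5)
Your skeleton is the paper's: the hypothesis $\hat{Q}^{(k)}_{[k]}\in\bound^{(k)}_{\leq}$ is unpacked into the coupled boundary identities $[\hat{Q}^{(k)}_{[k],i}\otimes I,\hat{H}^{(2)}_{i+k-1}]+[I\otimes\hat{Q}^{(k)}_{[k],i+1},\hat{H}^{(2)}_{i}]=0$, injectivity converts these into the reduction, substitution back yields membership in $\bound^{(k-1)}_{\leq}$, and uniqueness follows from injectivity. But the step you yourself single out as the main obstacle does not go through as described. (i) You cannot apply $\lambda_{i+k-1}$ to the rightmost pair of sites of the cross term $[I\otimes\hat{Q}^{(k)}_{[k],i+1},\hat{H}^{(2)}_{i}]$: its leg at site $i+k-1$ is an \emph{interior} leg of $\hat{Q}^{(k)}_{[k],i+1}$, hence lies in $\bound$ rather than $\bound_0$, so the left inverse $\lambda_{i+k-1}:\bound_0^{\otimes 2}\to\bound_0$ is not defined there, and that term is not in the image of $[\bullet\otimes I,\hat{H}^{(2)}_{i+k-1}]$ where $\lambda_{i+k-1}$ actually inverts anything. (ii) The index-$i$ identity contains only $\hat{H}^{(2)}_{i+k-1}$ and $\hat{H}^{(2)}_{i}$, so it cannot by itself produce the image condition with respect to $\hat{H}^{(2)}_{i+k-2}$ that your candidate $(\id^{\otimes(k-2)}\otimes\lambda_{i+k-2})(\hat{Q}^{(k)}_{[k],i})$ requires. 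The working maneuver is one-sided: apply only $\mu_i$ to the leftmost pair. The cross term then collapses to $-\hat{Q}^{(k)}_{[k],i+1}$ by the left-inverse property, while the other term becomes $[(\mu_i\otimes\id^{\otimes(k-2)})(\hat{Q}^{(k)}_{[k],i})\otimes I,\hat{H}^{(2)}_{i+k-1}]$ because $\mu_i$ and $[\bullet,\hat{H}^{(2)}_{i+k-1}]$ act on disjoint sites. This exhibits $\hat{Q}^{(k)}_{[k],i+1}$ directly as $[\hat{Q}^{(k-1)}_{[k-1],i+1},\hat{H}^{(2)}_{i+k-1}]$ with an explicit preimage; your image condition for $\hat{Q}^{(k)}_{[k],i}$ with respect to $\hat{H}^{(2)}_{i+k-2}$ is the same statement read off from the identity at index $i-1$, not at index $i$.

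Two further omissions. Before $\mu_i$ (or any dual-functional extraction on sites $i,i+1$) can be used, one must know that the leg of $\hat{Q}^{(k)}_{[k],i}$ at site $i+1$ is traceless; this is a separate preliminary step in the paper (``the second wire from the left is a solid line''), obtained by noting that the cross term is manifestly traceless at site $i+1$ and then killing the identity component of $[\hat{Q}^{(k)}_{[k],i}\otimes I,\hat{H}^{(2)}_{i+k-1}]$ at that site using the injectivity of $[\bullet\otimes I,\hat{H}^{(2)}_{i+k-1}]$ acting on the site-$(i+k-1)$ leg. Your proposal never establishes this. Finally, in your last step, Jacobi plus telescoping only gives $\bigl[\,[\hat{Q}^{(k-1)}_{[k-1],i},\hat{H}^{(2)}_{i+k-2}]+[\hat{Q}^{(k-1)}_{[k-1],i+1},\hat{H}^{(2)}_{i}]\,,\hat{H}^{(2)}_{i+k-1}\bigr]=0$; concluding that the bracketed expression itself vanishes (which is exactly $\hat{Q}^{(k-1)}_{[k-1]}\in\bound^{(k-1)}_{\leq}$) requires a second application of the injectivity assumption at site $i+k-1$, which your sketch does not mention. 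The role of $k\leq N/2$ is as you say, to keep the supports $\{i,\dots,i+k\}$ distinct on the ring so the $(k+1)$-local components can be matched term by term.
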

\begin{proof}
The assumption $\len([\hat{Q}_{[k]}^{(k)},\hat{H}])\leq k$ is equivalent to the following equations: 
\begin{equation}
\figurecentering{\includegraphics[]{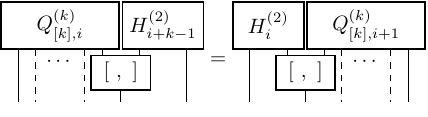}}
\label{eq:Step1_k}
\end{equation}
for all $i\in\Lambda_N$. 
This equation shows that the second wire from the left in $\hat{Q}^{(k)}_{[k],i}$ is, 
in fact, a solid line. 
Furthermore, by using the injectivity~\eqref{eq:injective}, 
we find that Eq.~\eqref{eq:Step1_reduction} is satisfied by taking $\hat{Q}^{(k-1)}_{[k-1],i}$ as
\begin{equation}
\figurecentering{\includegraphics[]{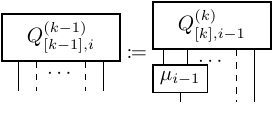}}
.
\end{equation}
By substituting Eq.~\eqref{eq:Step1_reduction} into Eq.~\eqref{eq:Step1_k} and using the injectivity~\eqref{eq:injective}, 
we obtain 
\begin{equation}
\figurecentering{\includegraphics[]{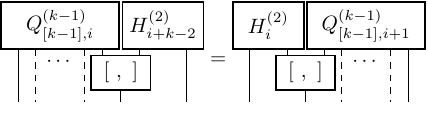}}
\end{equation}
for all $i\in\Lambda_N$,
which is equivalent to the condition $\hat{Q}_{[k-1]}^{(k-1)}\in\bound_{\leq}^{(k-1)}$. 
The uniqueness of $\hat{Q}_{[k-1]}^{(k-1)}$ follows from the injectivity~\eqref{eq:injective}. 
\end{proof}
If we define $\hat{Q}_{[k]}^{(k)}\in\bound^{(k)}$ as in Eq.~\eqref{eq:Step1_reduction} 
for some $\hat{Q}_{[k-1]}^{(k-1)}\in\bound_{\leq}^{(k-1)}$, 
then it is easy to verify $\hat{Q}_{[k]}^{(k)}\in\bound_{\leq}^{(k)}$. 
Therefore, 
we have established $\bound_{\leq}^{(k)}\cong\bound_{\leq}^{(k-1)}$ and Eq.~\eqref{eq:Step1}. 
By applying Eq.~\eqref{eq:Step1_reduction} recursively, 
$\hat{Q}^{(k)}_{[k],i}$ is represented as
\begin{equation}
\figurecentering{\includegraphics[]{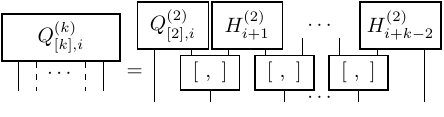}}
,
\label{eq:almost_dp}
\end{equation}
where $\hat{Q}_{[2]}^{(2)}=\sum_{i=1}^N\hat{Q}^{(2)}_{[2],i}$ is an element of $\bound^{(2)}_{\leq}$.

\subsection{Isomorphisms between $\bound^{(k)}_{<,b}$'s}
As pointed out at the end of Subsec.~\ref{subsec:main}, 
Thm.~\ref{thm:main} follows from Thm.~\ref{thm:general_step2}. 
Therefore, we conclude this section by proving Thm.~\ref{thm:general_step2}. 
Here, we shall prove only the case of $\bound_{<,b}^{(3)}\cong\bound_{<,b}^{(4)}$ 
to explain the bare essentials of the proof. 
A rigorous and complete proof of Eq.~\eqref{eq:Step2} is provided in Appx.~\ref{sec:complete_proof}
for interested readers. 

We introduce a new symbol to represent the commutator between $2$-local operators. 
We denote a linear map $[\bullet ,\hat{X}]:\bound_0^{\otimes 2}\to\bound^{\otimes 2}$ as 
\begin{equation}
\figurecentering{\includegraphics[]{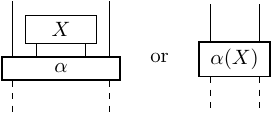}}
\ .
\end{equation}
If one line of the output is represented by a solid line instead of a dashed one, 
it indicates that a projection: 
\begin{equation}
    \bound\ni\hat{A}\mapsto \hat{A}-\frac{\Tr\hat{A}}{d}I\in\bound_0
\end{equation}
from $\bound$ to $\bound_0$ is performed at the corresponding site.

For $\alpha$ defined above, the following lemma holds, 
which is essential for the proof of Eq.~\eqref{eq:Step2}. 
\begin{lem}\label{lem:key}
For $\hat{C}\in\bound_0^{\otimes 2}$, the following identity holds:
\begin{equation}
\figurecentering{\includegraphics[scale=0.95]{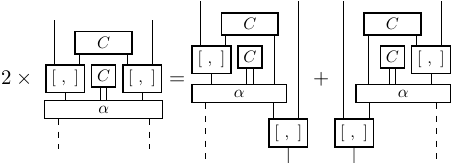}}
\ .
\label{eq:key}
\vspace{0mm}%
\end{equation}
\end{lem}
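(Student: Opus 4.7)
The plan is to prove Lemma~\ref{lem:key} by unfolding the diagrammatic notation on both sides of~\eqref{eq:key} into explicit nested-commutator expressions and then matching them via the Jacobi identity together with careful bookkeeping of trace components. Since $\alpha=[\bullet,\hat{X}]$ is itself a commutator map, both sides are built from iterated commutators of $\hat{C}$, $\hat{X}$, and an implicit test operator occupying the remaining open wires; the wires of the diagram encode which tensor factors are subsequently projected onto $\bound_0$.

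First, I would convert each side of~\eqref{eq:key} into an explicit expression acting on a generic input in $\bound_0^{\otimes n}$ (for the appropriate number of open wires $n$), carefully tracking the sites where the projection $\hat{A}\mapsto\hat{A}-I\,\Tr\hat{A}/d$ is applied. Second, I would temporarily strip the projections and verify that the resulting $\bound$-valued expressions agree by a single application of the Jacobi identity $[A,[B,C]]+[B,[C,A]]+[C,[A,B]]=0$, with $A,B,C$ chosen from $\hat{C}$, $\hat{X}$, and the test operator so that the three terms produced by Jacobi correspond exactly to the three diagrammatic pieces on the two sides of~\eqref{eq:key}. Finally, I would restore the projections and verify that the identity-component corrections cancel between the two sides.

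The principal obstacle is this last step. The $\alpha$-map produces $\bound$-valued outputs (dashed lines), while the solid-line projections on some legs demand traceless outputs, and the correction terms do not obviously line up between the two sides of~\eqref{eq:key}. I expect the cancellation to follow from two observations: (i) $\hat{C}\in\bound_0^{\otimes 2}$ is traceless on each factor, so any term in which a projection forces an $I$ in place of part of $\hat{C}$ vanishes identically; and (ii) $[I,\bullet]=0$, so identity contributions inside an outer commutator drop out. Systematically decomposing $\bound=\bound_0\oplus\mathbb{C}I$ on each internal wire should therefore reduce the trace-bookkeeping to a short finite check, at which point the Jacobi identity derived above finishes the argument. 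In spirit, Lemma~\ref{lem:key} should be read as a \emph{naturality} statement expressing that the commutator with $\hat{C}$ and the commutator with $\hat{X}$ can be exchanged at the cost of the explicit rearrangement shown diagrammatically—this is exactly what will later allow one to shift $(k-1)$-local data through the boost map in the proof of~\eqref{eq:Step2}.
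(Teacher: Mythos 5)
Your plan has the right outer shape---unfold the diagrams onto a test element and establish an algebraic commutator identity---but the step that actually constitutes the proof is missing, and the mechanism you name for it would not work as stated. The identity is \emph{quadratic} in $\hat{C}$: the commutator-with-$\hat{C}$ operation appears twice in every term on both sides, and the left-hand side carries an overall factor of $2$. A ``single application of the Jacobi identity'' with $A,B,C$ ``chosen from $\hat{C}$, $\hat{X}$, and the test operator'' produces three terms in which each operator occurs once, so it can reproduce neither the double occurrence of $\hat{C}$ nor the factor of $2$. The paper's proof instead expands $\hat{C}=\sum_a\hat{X}_a\otimes\hat{Y}_a$, applies both sides to a test element $\hat{D}\otimes\hat{E}$, and works factor-by-factor with the Leibniz rule $[AB,C]=[A,C]B+A[B,C]$; the factor of $2$ emerges from symmetrizing the resulting double sum under $a\leftrightarrow b$. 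A Jacobi-based route can be salvaged, but only after a key rewriting your sketch does not make: setting $A=\hat{D}\otimes I$, $B=I\otimes\hat{E}$, $C=\hat{C}$, the split tensor commutators such as $\sum_a[\hat{D},\hat{X}_a]\otimes[\hat{Y}_a,\hat{E}]$ become nested commutators like $[[A,C],B]$ in $\bound^{\otimes 2}$, the claimed identity becomes $2[[[A,C],B],C]=[[[A,C],C],B]+[A,[[C,B],C]]$, and this follows from \emph{two} applications of Jacobi together with the crucial fact $[A,B]=0$ (the two test operators live on different sites), which you never invoke.

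By contrast, the issue you single out as the principal obstacle---bookkeeping of the projections $\hat{A}\mapsto\hat{A}-I\Tr\hat{A}/d$---is not an obstacle at all. The identity holds as an equality of $\bound^{\otimes 2}$-valued maps before any projection is applied, and the paper's computation never mentions projections; once the unprojected expressions agree, applying the same linear projection to both sides preserves the equality. The effort you budget for trace corrections should instead go into carrying out the commutator algebra itself.
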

We provide a proof of this lemma in Appx.~\ref{sec:proof_lem}. 
By using this lemma, we can show the following proposition. 
\begin{prop}\label{prop:step2_3}
There is a linear isomorphism 
\begin{equation}
    \bound_{<,b}^{(4)}\ni q\hat{Q}_{b}^{(4)}+\hat{Q}_{[4]}^{(3)}\mapsto q\hat{Q}_{b}^{(3)}+\hat{Q}_{[3]}^{(2)}\in\bound_{<,b}^{(3)},
\end{equation}
where $q\in\mathbb{C}$ and 
$\hat{Q}_{b}^{(k)}$ is defined by Eq.~\eqref{eq:boosted_operator}.
\end{prop}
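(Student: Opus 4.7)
\emph{Strategy.} The plan is to define the map explicitly by exploiting the 4-local vanishing condition that characterizes $\bound_{<,b}^{(4)}$, and then to use Lemma~\ref{lem:key} to check that the image lies in $\bound_{<,b}^{(3)}$. An element $\hat{P} = q\hat{Q}_b^{(4)} + \hat{Q}_{[4]}^{(3)}$ sits in $\bound_{<,b}^{(4)}$ precisely when the 4-local component of $[\hat{P},\hat{H}]$ vanishes (the 5-local part is automatic since $\hat{Q}_b^{(4)} \in \bound_\leq^{(4)}$). Writing this balance site by site in the string-diagram formalism produces a local equation relating $q \cdot [\hat{Q}_b^{(4)},\hat{H}]^{(4)}$ to $[\hat{Q}_{[4]}^{(3)},\hat{H}^{(2)}]^{(4)}$, where the latter is just $\alpha$ applied to the left or right end of $\hat{Q}_{[4],i}^{(3)}$.

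\emph{Constructing $\hat{Q}_{[3]}^{(2)}$.} Combining the injectivity relations~\eqref{eq:injective} with the cascading representation~\eqref{eq:almost_dp} of $\hat{Q}_b^{(4)}$, the 4-local balance forces the two interior wires of each $\hat{Q}_{[4],i}^{(3)}$ to be traceless, so $\hat{Q}_{[4],i}^{(3)}$ is rigid enough to be contracted by $\mu$ or $\lambda$. I would then \emph{define} $\hat{Q}_{[3]}^{(2)}$ by applying such a contraction at one endpoint of $\hat{Q}_{[4],i}^{(3)}$, exactly in the spirit of the reduction~\eqref{eq:Step1_reduction} used in Proposition~\ref{prop:step1}. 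The map $q\hat{Q}_b^{(4)} + \hat{Q}_{[4]}^{(3)} \mapsto q\hat{Q}_b^{(3)} + \hat{Q}_{[3]}^{(2)}$ is then linear, and the two possible choices of endpoint yield the same $\hat{Q}_{[3]}^{(2)}$ by the compatibility of the 4-local constraint across adjacent sites.

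\emph{Target condition and bijectivity.} The nontrivial check is that $q\hat{Q}_b^{(3)} + \hat{Q}_{[3]}^{(2)} \in \bound_{<,b}^{(3)}$, i.e.\ that the 3-local part of $[q\hat{Q}_b^{(3)} + \hat{Q}_{[3]}^{(2)},\hat{H}]$ vanishes. This is where Lemma~\ref{lem:key} is indispensable: it rewrites the 3-local commutator diagram in terms of one with an extra wire, converting the 4-local balance already established at level~$4$ into the required 3-local balance at level~$3$. Injectivity then follows because the contraction uniquely recovers $\hat{Q}_{[4]}^{(3)}$ from $\hat{Q}_{[3]}^{(2)}$ (together with the fixed scalar $q$), essentially via $\iota_3$ from Theorem~\ref{thm:general_step1}. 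Surjectivity is obtained by running the construction backwards: given $q\hat{Q}_b^{(3)} + \hat{Q}_{[3]}^{(2)} \in \bound_{<,b}^{(3)}$, one lifts $\hat{Q}_{[3]}^{(2)}$ by $\iota_3$ and invokes Lemma~\ref{lem:key} in the opposite direction to certify that the lifted element lies in $\bound_{<,b}^{(4)}$.

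\emph{Main obstacle.} The hard part will be the compatibility across overlapping sites: the 4-local constraint at site $i$ and at site $i+1$ each pin down portions of the candidate $\hat{Q}_{[3],i}^{(2)}$, and these determinations must agree. This coherence is precisely what Lemma~\ref{lem:key} supplies---it is the diagrammatic identity that lets one slide a commutator bracket from one side of a diagram to the other without changing its content---so the proof reduces to a careful choreography of Lemma~\ref{lem:key}, injectivity, and the cascading form~\eqref{eq:almost_dp}. Since the entire argument is local, translational invariance is not needed, which is what makes the string-diagram approach so clean.
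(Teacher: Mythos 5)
Your proposal has a genuine gap at the very point where the proof becomes nontrivial: the construction of the map itself. You propose to obtain $\hat{Q}_{[3]}^{(2)}$ by contracting an endpoint of $\hat{Q}_{[4],i}^{(3)}$ with $\mu$ or $\lambda$ ``in the spirit of Eq.~\eqref{eq:Step1_reduction}'', and you assert that the two endpoint choices agree by compatibility. But the reduction of Prop.~\ref{prop:step1} only applies to a \emph{homogeneous} balance equation of the form $[\hat{Q}^{(3)}_i\otimes I, I\otimes\hat{H}^{(2)}]=[\hat{H}^{(2)}\otimes I, I\otimes\hat{Q}^{(3)}_{i+1}]$. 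Here, when $q\neq 0$, the $4$-local balance is \emph{inhomogeneous}: the locality-increasing part of $[\hat{Q}_{[4]}^{(3)},\hat{H}^{(2)}]$ must cancel against the nonzero $4$-local component of $q[\hat{Q}_b^{(4)},\hat{H}]$ (both the $\hat{H}^{(2)}$ and $\hat{H}^{(1)}$ pieces). Injectivity cannot be applied to an equation with this source term, so the naive contraction of $\hat{Q}_{[4],i}^{(3)}$ is not defined, and the left- and right-endpoint contractions of the \emph{uncorrected} operator do not agree. The missing idea --- and the actual role of Lem.~\ref{lem:key} --- is to split that source term into a ``left-movable'' and a ``right-movable'' piece: Lem.~\ref{lem:key} decomposes the middle term of $[q\hat{Q}_{b}^{(4)},\hat{H}^{(2)}]$ into $\hat{A}^{(3)}_{L}+\hat{A}^{(3)}_{R}$ (Eqs.~\eqref{eq:decomposition_two_com}--\eqref{eq:step2_bigcom}), and the Jacobi identity does the same for the $\hat{H}^{(1)}$ terms via $\hat{B}^{(3)}_{L/R}$. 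Only the corrected operators $\hat{Q'}^{(3)}_{L,i}=\hat{Q}^{(3)}_{[4],i}+\hat{A}^{(3)}_{L,i}+\hat{B}^{(3)}_{L,i}$ and $\hat{Q'}^{(3)}_{R,i}=\hat{Q}^{(3)}_{[4],i}-\hat{A}^{(3)}_{R,i}-\hat{B}^{(3)}_{R,i}$ (which differ from each other and from $\hat{Q}^{(3)}_{[4],i}$) satisfy a homogeneous equation and admit a common contraction $\hat{Q'}^{(2)}_{i+1}$; the target $3$-local condition then follows from the identity $\hat{Q'}^{(3)}_{L,i}-\hat{Q'}^{(3)}_{R,i}=\hat{A}^{(3)}_{L,i}+\hat{B}^{(3)}_{L,i}+\hat{A}^{(3)}_{R,i}+\hat{B}^{(3)}_{R,i}$, and one must also track the normalization $\hat{Q}_{[3]}^{(2)}=\tfrac{2}{3}\sum_i\hat{Q'}^{(2)}_i$, which your construction would not produce.

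Your description of surjectivity is also off: $\iota_3$ maps $\bound_{\leq}^{(2)}$ to $\bound_{\leq}^{(3)}$, i.e.\ it acts on top-degree components, so ``lifting $\hat{Q}_{[3]}^{(2)}$ by $\iota_3$'' is not meaningful. The inverse map $i_{43}$ is instead built by running the corrected relations backwards, e.g.\ solving Eq.~\eqref{eq:step2_op_left_3} for $\hat{Q}^{(3)}_{[4],i}$ after un-contracting $\hat{Q'}^{(2)}$. In short, your proposal correctly identifies that Lem.~\ref{lem:key} and injectivity are the tools, but it places the lemma in the verification step rather than where it is actually indispensable --- making the contraction well-defined in the presence of the boosted source term --- and without that step the map you describe does not exist for $q\neq 0$.
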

\begin{proof}
First we construct a map from $\bound_{<,b}^{(4)}$ to $\bound_{<,b}^{(3)}$. 
For $\hat{Q}_{[4]}^{(3)}\in\bound^{(3)}$,
the condition $q\hat{Q}_{b}^{(4)}+\hat{Q}_{[4]}^{(3)}\in\bound_{<,b}^{(4)}$ is equivalent to the following equalities:
\begin{widetext}
\begin{align}
\figurecentering{\includegraphics[scale=0.9]{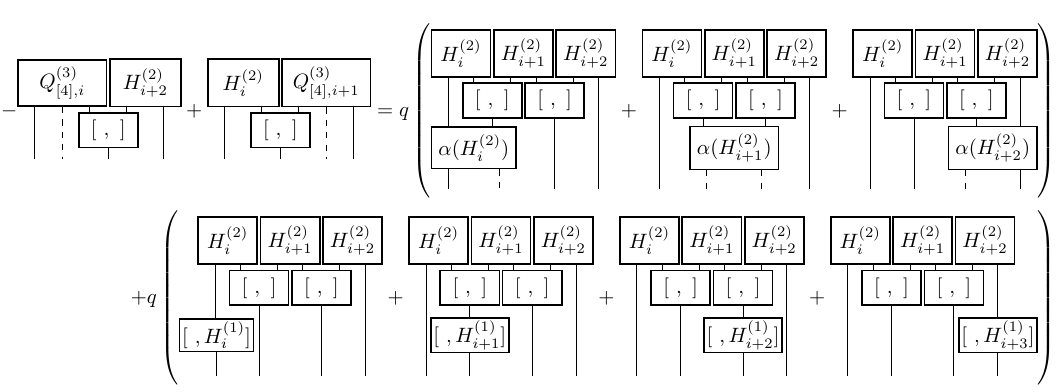}}
\label{eq:Step2_4}
\end{align}
\end{widetext}
for all $i\in\Lambda_N$. 
Here, $[\ ,\hat{X}]:\bound_0\ni\hat{Y}\mapsto[\hat{Y},\hat{X}]\in\bound_0$ is the commutator with $\hat{X}\in\bound_0$. 
Let us focus on the terms in the first bracket on the right-hand side (RHS) of Eq.~\eqref{eq:Step2_4}. 
By applying Lem.~\ref{lem:key} to the middle term, 
we can decompose these terms into the following form:
\begin{equation}
\figurecentering{\includegraphics[scale=1]{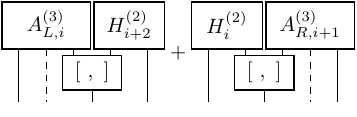}}
,
\label{eq:decomposition_two_com}
\end{equation}
where $\hat{A}^{(3)}_L$ and $\hat{A}^{(3)}_R$ satisfy
\begin{equation}
\hat{A}^{(3)}_{L,i}+\hat{A}^{(3)}_{R,i}
=  
\figurecentering{\includegraphics[scale=1]{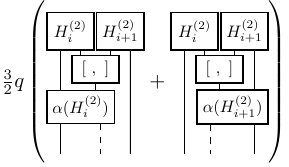}}
.
\label{eq:step2_bigcom}
\end{equation}

Similarly, by using the Jacobi identity, 
the terms in the second bracket on the RHS of Eq.~\eqref{eq:Step2_4} is decomposed into a similar form as Eq.~\eqref{eq:decomposition_two_com}~\footnote{
See Appx.~\ref{sec:complete_proof} for details. 
},
by two operators $\hat{B}^{(3)}_L$ and $\hat{B}^{(3)}_R$ satisfying 
\begin{align}
&\hat{B}^{(3)}_{L,i}+\hat{B}^{(3)}_{R,i}\nonumber\\
&=  
\figurecentering{\includegraphics[scale=0.9]{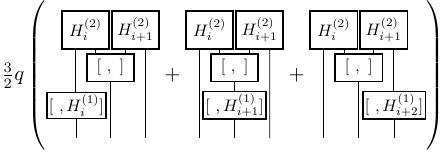}}
.
\label{eq:step2_scom}
\end{align}
The above relations allow us to simplify Eq.~\eqref{eq:Step2_4} as 
\begin{equation}
\figurecentering{\includegraphics[scale=1]{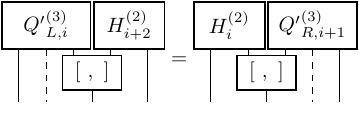}}
,
\label{eq:Step2_4_simplified}
\end{equation}
where two operators $\hat{Q'}^{(3)}_{L,i}$ and $\hat{Q'}^{(3)}_{R,i}$ are defined as
\begin{align}
    \hat{Q'}^{(3)}_{L,i}&\coloneqq\hat{Q}^{(3)}_{[4],i}+\hat{A}^{(3)}_{L,i}+\hat{B}^{(3)}_{L,i}
    ,\label{eq:step2_op_left_3}\\
    \hat{Q'}^{(3)}_{R,i}&\coloneqq\hat{Q}^{(3)}_{[4],i}-\hat{A}^{(3)}_{R,i}-\hat{B}^{(3)}_{R,i}
    .\label{eq:step2_op_right_3}
\end{align}
Similarly to Prop.~\ref{prop:step1},
there is an operator $\hat{Q'}^{(2)}_{i+1}\in\bound_0^{\otimes 2}$ satisfying
\begin{align}
&\figurecentering{\includegraphics[scale=1]{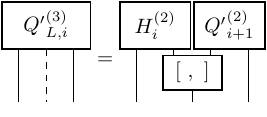}}
,
\label{eq:step2_left}\\
&\figurecentering{\includegraphics[scale=1]{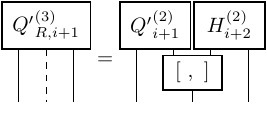}}
.
\label{eq:step2_right}
\end{align}
Equations~\eqref{eq:step2_op_left_3} and~\eqref{eq:step2_op_right_3} yield the following equality:
\begin{equation}
\hat{Q'}^{(3)}_{L,i}-\hat{Q'}^{(3)}_{R,i}
    =\hat{A}^{(3)}_{L,i}+\hat{B}^{(3)}_{L,i}+\hat{A}^{(3)}_{R,i}+\hat{B}^{(3)}_{R,i}.\label{eq:step2_reduced_eq}
\end{equation}
By substituting Eqs.~\eqref{eq:step2_bigcom},~\eqref{eq:step2_scom},~\eqref{eq:step2_left} and~\eqref{eq:step2_right} 
into Eq.~\eqref{eq:step2_reduced_eq}, 
we obtain
\begin{widetext}
\begin{equation}
\figurecentering{\includegraphics[scale=0.95]{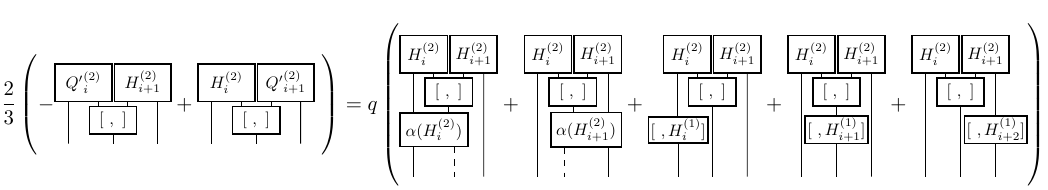}}
\label{eq:Step2_3}
\end{equation}
\end{widetext}
for all $i\in\Lambda_N$, 
which is equivalent to $q\hat{Q}_{b}^{(3)}+\hat{Q}_{[3]}^{(2)}\in\bound_{<,b}^{(3)}$, 
where we define $\hat{Q}_{[3]}^{(2)}$ as $\frac{2}{3}\sum_{i=1}^N\hat{Q'}^{(2)}_{i}$. 
Therefore, we have constructed a linear transformation $i_{34}:\bound_{<,b}^{(4)}\to\bound_{<,b}^{(3)}$. 
The injectivity of this map is confirmed from the injectivity~\eqref{eq:injective}. 
Conversely, by using the above equalities (e.g., Eq.~\eqref{eq:step2_op_left_3}), 
we can construct a mapping $i_{43}:\bound_{<}^{(3)}\to\bound_{<}^{(4)}$ satisfying $i_{34}\circ i_{43}=\id_{\bound_{<}^{(3)}}$. 
This implies the surjectivity of $i_{34}$. 
Furthermore, the action of $i_{43}$ on $\bound_{\leq}^{(2)}$ is the same as that of $\frac{2}{3}\iota_2$, which completes the proof. 
\end{proof}

\subsection{Absence of spectrum generating algebras}\label{subsec:SGA}
As discussed in Ref.~\cite{shiraishi2dimXY2024}, 
the proof for the absence of local conserved quantities can be slightly modified to apply to the analysis of a related algebraic structure: 
spectrum generating algebras. 
A system with a Hamiltonian $\hat{H}$ is said to admit a spectrum generating algebra (SGA) for an operator $\hat{Q}$
    if 
    \begin{equation}
        [\hat{Q},\hat{H}]=\mathcal{E}\hat{Q}\label{eq:SGA}
    \end{equation}
holds for some $\mathcal{E}\in\mathbb{R}\setminus\{0\}$. 
This algebraic structure is a variant of the condition $[\hat{Q},\hat{H}]=0$
for the existence of a conserved quantity, 
and $\eta$-pairing~\cite{yangEtaPairingOffdiagonal1989,yangSo4SymmetryHubbard1990} in the Hubbard model is a well-known example. 

In parallel with Thm.~\ref{thm:main}, 
the following theorem holds.
\begin{thm}\label{thm:SGA}
Let us consider a Hamiltonian $\hat{H}$ of the same form as in Thm.~\ref{thm:main}. 
If $\len([\hat{Q},\hat{H}]-\mathcal{E}\hat{Q})\leq 2$ does not hold
for any $\mathcal{E}\in\mathbb{R}\setminus\{0\}$ 
and any $3$-local quantity $\hat{Q}$,
then the system has no $k$-local quantity satisfying Eq.~\eqref{eq:SGA} for some $\mathcal{E}\in\mathbb{R}\setminus\{0\}$ and $3\leq k\leq N/2$. 
In other words, 
the system does not admit an SGA for a $k$-local quantity. 
\end{thm}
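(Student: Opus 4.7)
The plan is to mimic the proof of Thm.~\ref{thm:main} by replacing the commutator $[\,\cdot\,,\hat{H}]$ throughout with the shifted map $\mathrm{ad}_{\hat{H},\mathcal{E}}(\hat{X})\coloneqq[\hat{X},\hat{H}]-\mathcal{E}\hat{X}$. For each fixed $\mathcal{E}\in\mathbb{R}\setminus\{0\}$, I would introduce $\mathcal{E}$-dependent analogs $\bound_{\leq,\mathcal{E}}^{(k)}$, $\bound_{<,\mathcal{E}}^{(k)}$, and $\bound_{<,\mathcal{E},b}^{(k)}$ of the sets in Sec.~\ref{subsec:main}, built from $\mathrm{ad}_{\hat{H},\mathcal{E}}$ rather than the bare commutator. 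Any $k$-local operator $\hat{Q}$ satisfying Eq.~\eqref{eq:SGA} obeys $\mathrm{ad}_{\hat{H},\mathcal{E}}(\hat{Q})=0$, so its top two local parts assemble into an element of $\bound_{<,\mathcal{E}}^{(k)}$; under assumption~\eqref{eq:assumption_1dimensionality}, this element has top component proportional to $\hat{Q}_b^{(k)}$ and thus lies in $\bound_{<,\mathcal{E},b}^{(k)}$ with nonzero boost coefficient.

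The first simplification is that $\mathcal{E}\hat{X}$ has the same length as $\hat{X}\in\bound^{(k)}$, so the $(k+1)$-local part of $\mathrm{ad}_{\hat{H},\mathcal{E}}(\hat{X})$ coincides with $[\hat{X},\hat{H}^{(2)}]^{(k+1)}$. Consequently $\bound_{\leq,\mathcal{E}}^{(k)}=\bound_{\leq}^{(k)}$, the boost map $\iota_k$ and the operators $\hat{Q}_b^{(k)}$ are unchanged, and Thm.~\ref{thm:general_step1} transports verbatim to the SGA setting. The $\mathcal{E}$-dependence enters only at the subleading level, modifying the $k$-local equation controlling $\bound_{<,\mathcal{E},b}^{(k+1)}$ (the analog of Eq.~\eqref{eq:Step2_4}) by an extra term of the form $\mathcal{E}(q\hat{Q}_b^{(k+1)}+\hat{Q}^{(k)}_{[k+1]})_i$ on the right-hand side.

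The main obstacle, and the bulk of the work, is to re-derive the isomorphism of Thm.~\ref{thm:general_step2} in the presence of this extra term, i.e., to establish $\bound_{<,\mathcal{E},b}^{(k+1)}\cong\bound_{<,\mathcal{E},b}^{(k)}$ for $3\leq k<N/2$. The key observation is that, by Eq.~\eqref{eq:boost}, the boosted contribution $\mathcal{E}q\hat{Q}_b^{(k+1)}$ is itself a sum of local commutators $\mathcal{E}q[\hat{Q}_{b,i}^{(k)},\hat{H}^{(2)}_{i+k-1}]$, so it fits naturally into the commutator decomposition provided by Lem.~\ref{lem:key} after a minor rearrangement. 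The remaining $\mathcal{E}\hat{Q}^{(k)}_{[k+1]}$ piece is purely $k$-local and can be absorbed into the operators playing the roles of $\hat{Q'}^{(3)}_{L,i}$ and $\hat{Q'}^{(3)}_{R,i}$ in Prop.~\ref{prop:step2_3} (and their general-$k$ analogs in Appx.~\ref{sec:complete_proof}), inducing an $\mathcal{E}$-dependent shift in the resulting lower-order operator. Injectivity of the induced map still follows from Eq.~\eqref{eq:injective}, and surjectivity from an explicit inverse constructed exactly as in the original proof.

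Iterating this isomorphism down to $k=3$ yields $\bound_{<,\mathcal{E},b}^{(k)}\cong\bound_{<,\mathcal{E},b}^{(3)}$ for all $3\leq k\leq N/2$. An SGA quantity $\hat{Q}$ of length exactly $k$ thus produces a nonzero element of $\bound_{<,\mathcal{E},b}^{(k)}$ with nonvanishing boost coefficient, whose image under the iterated isomorphism is a $3$-local operator $\hat{Q}_{[3]}$ satisfying $\len([\hat{Q}_{[3]},\hat{H}]-\mathcal{E}\hat{Q}_{[3]})\leq 2$ for the same $\mathcal{E}$, contradicting the hypothesis of Thm.~\ref{thm:SGA}.
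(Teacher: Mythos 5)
Your proposal follows essentially the same route as the paper: replace $[\cdot,\hat{H}]$ by the shifted map, observe that the top-order sets $\bound_{\leq}^{(k)}$ and the boost structure are unchanged, and push the extra $\mathcal{E}$-term through the step-2 isomorphism by absorbing it into the operator playing the role of $\hat{Q}'^{(3)}_{L,i}$ (the term $-\mathcal{E}q\hat{Q}_{b,i}^{(k)}$ indeed fits the required form because $\hat{Q}_{b,i}^{(k)}=[\hat{Q}_{b,i}^{(k-1)},\hat{H}^{(2)}_{i+k-2}]$). One detail of your description is inaccurate, though harmless: the reduced $3$-local operator does \emph{not} satisfy the shifted condition with the same $\mathcal{E}$ --- the paper's map sends $\bound_{<,\mathcal{E}}^{(4)}$ into $\bound_{<,\frac{2}{3}\mathcal{E}}^{(3)}$, the rescaling coming from the same $\frac{k-2}{k-1}$ normalization that appears in Lem.~\ref{lem:bigcom} and Lem.~\ref{lem:scom}; the argument survives only because the hypothesis of Thm.~\ref{thm:SGA} is quantified over all nonzero $\mathcal{E}$, so a rescaled nonzero eigenvalue still yields the contradiction. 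Relatedly, the purely $(k)$-local piece $-\mathcal{E}\hat{Q}^{(k)}_{[k+1]}$ never enters the top-order equation analogous to Eq.~\eqref{eq:Step2_4} at all (it is already below the order being constrained), so there is nothing to absorb there; only the $-\mathcal{E}q\hat{Q}_{b}^{(k+1)}$ contribution requires handling.
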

\begin{proof}
As a generalization of $\bound_<^{(k)}$, 
we introduce the following set of operators:
\begin{equation}
      \bound_{<,\mathcal{E}}^{(k)}\coloneqq
    \{\hat{X}\in\bound^{(k)}\oplus\bound^{(k-1)}\mid \len([\hat{X},\hat{H}]-\mathcal{E}\hat{X})\leq k-1\}.   
\end{equation}
Note that $\bound_{<,\mathcal{E}}^{(k)}\subset\bound_{\leq}^{(k)}\oplus\bound^{(k-1)}
=\mathbb{C}\hat{Q}_b^{(k)}\oplus\bound^{(k-1)}$ holds. 
To prove Thm.~\ref{thm:SGA} for the case of $k=4$, 
we first construct a linear transformation ${i'}_{34}:\bound_{<,\mathcal{E}}^{(4)}\to\bound_{<,\frac{2}{3}\mathcal{E}}^{(3)}$. 

In fact, this map can be defined in a manner similar to 
${i}_{34}:\bound_{<}^{(4)}\to\bound_{<}^{(3)}$. 
For $q\in\mathbb{C}$ and $\hat{Q}_{[4]}^{(3)}\in\bound^{(3)}$,
the condition $q\hat{Q}_b^{(4)}+\hat{Q}_{[4]}^{(3)}\in\bound_{<,\mathcal{E}}^{(4)}$ 
is equivalent to the following equalities:
\begin{align}
\figurecentering{\includegraphics[scale=0.95]{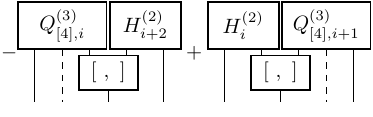}}
  &\nonumber\\
=\mbox{(the RHS of Eq.~\eqref{eq:Step2_4})}
&-\mathcal{E}q\hat{Q}_{b,i}^{(4)}
\label{eq:Step2_SGA_simplified}
\end{align}
for all $i\in\Lambda_N$. 
As in the proof of the previous subsection, 
Eq.~\eqref{eq:Step2_SGA_simplified} is equivalent to Eq.~\eqref{eq:Step2_4_simplified}, 
where $\hat{Q'}^{(3)}_{L,i}$ is redefined as 
\begin{equation}
\hat{Q'}^{(3)}_{L,i}\coloneqq\hat{Q}^{(3)}_{[4],i}+\hat{A}^{(3)}_{L,i}+\hat{B}^{(3)}_{L,i}-\mathcal{E}q\hat{Q}_{b,i}^{(3)}.%
\end{equation}
Under this definition, 
we can construct $\hat{Q}_{[3]}^{(2)}\in\bound^{(2)}$ satisfying
\begin{align}
\figurecentering{\includegraphics[scale=0.95]{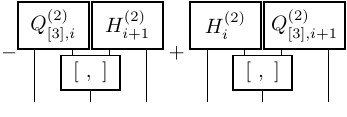}}
    &\nonumber\\
    =\mbox{(the RHS of Eq.~\eqref{eq:Step2_3})}
&-\frac{2}{3}\mathcal{E}q\hat{Q}_{b,i}^{(3)}
\end{align}
for all $i\in\Lambda_N$, 
which is equivalent to $q\hat{Q}_{b}^{(3)}+\hat{Q}_{[3]}^{(2)}\in\bound_{<,\frac{2}{3}\mathcal{E}}^{(3)}$. 

On the other hand, 
for any $\mathcal{E}\neq0$, 
$\hat{Q}_{[3]}^{(3)}+\hat{Q}_{[3]}^{(2)}\in\bound_{<,\frac{2}{3}\mathcal{E}}^{(3)}$ 
implies $q=0$ by the assumption. 
Therefore, 
we have $\bound_{<,\mathcal{E}}^{(4)}\subset\bound^{(3)}$ for every $\mathcal{E}\in\mathbb{R}\setminus\{0\}$, 
which indicates that 
there is no $4$-local quantity satisfying Eq.~\eqref{eq:SGA}. 
With a slight modification of the proof of Thm.~\ref{thm:general_step2} 
for the general $k$ case (Appx.~\ref{sec:complete_proof}), 
we can show the general $k$ case as well. 
\end{proof}

We note that SGAs are of interest in the study of thermalization. 
If an operator $\hat{Q}$ satisfies Eq.~\eqref{eq:SGA} on some subspaces, 
the system is said to admit a restricted spectrum generating algebra~\cite{markUnifiedStructureExact2020,moudgalyaEtapairingHubbardModels2020}. 
This structure offers a unified origin of the tower of quantum many-body scar states,
which prevents thermalization.

\section{Algorithm of integrability test}\label{sec:algorithm}
Using Thm.~\ref{thm:main} and Thm.~\ref{thm:general_step1}, 
we can construct an algorithm to test 
the integrability and nonintegrability of one-dimensional quantum spin systems. 
In this section, we only consider translationally invariant systems. 
In this case, the set of $k$-local operators can be decomposed into momentum sectors via the Fourier transformation as 
\begin{equation}
\bound^{(k)}=\bigoplus_{p\in\bigl\{\frac{2\pi}{N}m\bigm| m=0,\dots,N-1\bigr\}}\bound^{(k,p)},    
\end{equation}
where $p$ denotes the momentum, 
which corresponds to the eigenvalue of the translation operation $\tau:\bound^{\otimes A}\to\bound^{\otimes (A-1)}$. 
Since this decomposition is compatible with the subspaces $\bound^{(k)}_{\leq}$ and $\bound^{(k)}_{<}$, 
the integrability test introduced in Sec.~\ref{subsec:main} can be reduced to the following algorithm. 
First, we determine the subspace $\bound^{(2,p)}_{\leq}\coloneqq\bound^{(2)}_{\leq}\cap\bound^{(2,p)}$, 
which is equivalent to solving the following generalized eigenvalue problem:
\begin{equation}
\figurecentering{\includegraphics[scale=1]{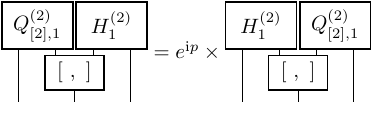}}
\label{eq:algorithm_k=2}
\end{equation}
for $\hat{Q}_{[2],1}^{(2)}\in\bound_0^{\otimes 2}$, 
where we use the translational invariance $\tau(H^{(2)}_{i+1})=H^{(2)}_i$ 
and the fact that $\hat{Q}_{[2]}^{(2)}\in\bound^{(2,p)}$ satisfies $\tau(\hat{Q}^{(2)}_{[2],i+1})=e^{\im p}\hat{Q}^{(2)}_{[2],i}$. 
We note that $\bound^{(2,p)}_{\leq}=\{0\}$ except for at most $(d^2-1)^2$ numbers of ($N$-independent) momentum $p$ %
due to the injectivity~\eqref{eq:injective}.
If $\bound_{\leq}^{(2,p=0)}=\mathbb{C}\hat{H}^{(2)}$, 
we can move on to the next step.

Next, we check whether $\bound^{(3)}_{<}$ in the zero-momentum sector is trivial (i.e., equal to $\bound^{(2,p=0)}_{\leq}$) or not. 
This step is equivalent to finding solutions of the following equation:
\begin{widetext}
\begin{equation}
\figurecentering{\includegraphics[scale=1]{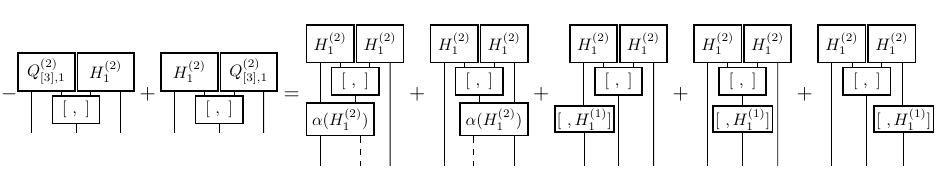}}
\label{eq:algorithm_k=3}
    \end{equation}
\end{widetext}
for $\hat{Q}_{[3],1}^{(2)}\in\bound_0^{\otimes 2}$, 
where we assume that $\bound_{\leq}^{(2,p=0)}=\mathbb{C}\hat{H}^{(2)}$. 
If Eq.~\eqref{eq:algorithm_k=3} has no solution, 
we can conclude that this system 
has no $k$-local and translationally invariant conserved quantity 
for $3\leq k\leq N/2$. 
Furthermore, if $\bound^{(2,p)}_{\leq}=\{0\}$ holds, or equivalently, 
Eq.~\eqref{eq:algorithm_k=2} has no nontrivial solution for $p\neq0$, 
then this system is nonintegrable. 
The above procedure involves eigenvalue problems for a finite number of linear maps, 
all of which are independent of system size.
As pointed out below Thm.~\ref{thm:main}, 
this algorithm can also be applied to a system of spin ladders and models with finite-range interactions.

Note that if additional structural information about the Hamiltonian is available, a more efficient algorithm may be constructed.
For example, suppose that there exists a subspace $\mathcal{A}$ of $\bound_0$ satisfying $\hat{H}_1^{(2)}\in\mathcal{A}^{\otimes 2}$. 
Then, by the injectivity and Eq.~\eqref{eq:algorithm_k=2}, we have $\hat{Q}_{[2],1}^{(2)}\in\mathcal{A}^{\otimes 2}$. 
Furthermore, 
a solution of Eq.~\eqref{eq:algorithm_k=3} satisfies $\hat{Q}_{[3],1}^{(2)}\in\mathcal{A}_2^{\otimes 2}$, 
where $\mathcal{A}_2$ is defined as
\begin{equation}
    \mathcal{A}_2\coloneqq\Span\{\hat{A}\hat{B}-\Tr(\hat{A}\hat{B})I,[\hat{A},\hat{H}_1^{(1)}]
    \mid\hat{A},\hat{B}\in\mathcal{A}\}. \label{eq:basis_step2}
\end{equation}
This observation is useful to show the nonintegrability of general spin-$S$ systems. 
As a simple example, let us consider the spin-$S$ Heisenberg chain. 
In this case, $\mathcal{A}$ consists of a linear combination of spin operators $\hat{S}^{\alpha}$, 
and $\mathcal{A}_2$ is eight-dimensional regardless of $S$ if $S\geq1$. 
Therefore, we can show the nonintegrability of the XYZ chain for $S\geq1$ as well as 
that of the Heisenberg chain for $S=1$~\cite{parkProofNonintegrabilitySpin12024,hokkyoProofAbsenceLocal2025}, 
see Appx.~\ref{sec:general_XYZ}.

\section{Nonintegrability of systems with internal degree of freedom}\label{sec:internal}
As pointed out in Ref.~\cite{shiraishi2dimXY2024}, 
the proof of nonintegrability for systems on hypercubic lattices and ladders with nearest-neighbor interactions 
can be reduced to that for ladder systems. 
In this section, we demonstrate that 
the problem of determining $\bound^{(k)}_<$ for systems with on-site interactions among internal degrees of freedom 
can be reduced to that of a ladder system. 
The Hamiltonian of such a system takes the following form: 
\begin{equation}
     \hat{H}=\sum_{(\alpha,i)\in\mathcal{S}\times\Lambda_N}
    \hat{H}_{\alpha;i}^{(2)}
    +\sum_{i\in\Lambda_N}\sum_{\mathcal{S}_0\substack{\subset\mathcal{S}\\\neq\emptyset}}
    \hat{H}_{\mathcal{S}_0;i}^{(1)}
    ,\label{eq:internal_Hamiltonian}
\end{equation}
where $\hat{H}_{\alpha;i}^{(2)}\in\bound_0^{\{(\alpha,i),(\alpha,i+1)\}}$
and $\hat{H}_{\mathcal{S}_0;i}^{(1)}\in\bound_0^{\mathcal{S}_0\times\{i\}}$ hold. 
Here, $\mathcal{S}$ denotes the set of labels for internal degree of freedom with $2\leq|\mathcal{S}|<\infty$. 
A typical example of such a model is the spin ladder system, where $|\mathcal{S}| = 2$. 
Systems with nearest-neighbor interactions on the hypercubic lattice $(\Lambda_N)^D$ correspond to 
the case $\mathcal{S} = (\Lambda_N)^{D-1}$.  
In particular, systems with nearest-neighbor interactions on the honeycomb lattice can be regarded 
as a special case of that on the square lattice (see Fig.~\ref{fig:honeycomb})
and its Hamiltonian takes the form of Eq.~\eqref{eq:internal_Hamiltonian}. 
\begin{figure}[tbp]
    \centering
    \includegraphics[width=0.6\linewidth]{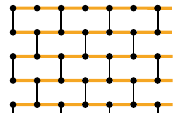}
    \caption{The honeycomb lattice can be regarded as a subgraph of the square lattice. 
    Therefore, a Hamiltonian with nearest-neighbor interactions on the honeycomb lattice takes the form of Eq.~\eqref{eq:internal_Hamiltonian}. 
    We define locality along the direction indicated by the horizontal lines.
    }\label{fig:honeycomb}
\end{figure}

We denote the interaction and the potential on $\{\alpha\}\times\Lambda_N$ as $\hat{H}_\alpha$:
\begin{equation}
    \hat{H}_{\alpha}=\sum_{i\in\Lambda_N}
    \hat{H}_{\alpha;i}^{(2)}
    +\sum_{i\in\Lambda_N}\hat{H}_{\{\alpha\};i}^{(1)}.
\end{equation}
We assume the injectivity~\eqref{eq:assumption_injective} of $\hat{H}_{\alpha}^{(2)}$ 
on the chain $\Lambda_N\times\{\alpha\}$
for all $\alpha\in\mathcal{S}$: for all $i\in\Lambda_N$, 
\begin{equation}
 \begin{aligned} 
  \hat{X}\in\bound_0^{\{(\alpha,i+1)\}},\ [I\otimes \hat{X},\hat{H}^{(2)}_{\alpha;i}]=0
  &\Rightarrow\hat{X}=0\\
  \mbox{and}\ \hat{Y}\in\bound_0^{\{(\alpha,i)\}},\ [\hat{Y}\otimes I,\hat{H}^{(2)}_{\alpha;i}]=0
  &\Rightarrow\hat{Y}=0,
  \end{aligned}\label{eq:indivisual_injective}
\end{equation}
which is equivalent to the injectivity of $\hat{H}^{(2)}$ on $\Lambda_N\times\mathcal{S}$.  

Notably, the Hamiltonian~\eqref{eq:internal_Hamiltonian} cannot satisfy the assumption~\eqref{eq:assumption_1dimensionality}.
Precisely speaking, the following proposition holds.
\begin{prop}\label{prop:step1_internal}
For the Hamiltonian of the form given in Eq.~\eqref{eq:internal_Hamiltonian} satisfying the injectivity (see Eq.~\eqref{eq:indivisual_injective}), 
the following holds:
\begin{equation}
    \bound^{(2)}_{\leq}(\hat{H})=\bigoplus_{\alpha\in\mathcal{S}}\bound^{(2)}_{\leq}(\hat{H}_\alpha).
\end{equation}
\end{prop}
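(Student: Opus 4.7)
\emph{Proof proposal.} A useful preliminary observation is that the claim is really about the interactions alone: for any 2-local $\hat{X}$, the commutator $[\hat{X},\hat{H}^{(1)}_{\mathcal{S}_0;i}]$ does not enlarge the horizontal support, so $\bound^{(2)}_{\leq}(\hat{H})=\bound^{(2)}_{\leq}(\hat{H}^{(2)})$ with $\hat{H}^{(2)}\coloneqq\sum_{\alpha,i}\hat{H}^{(2)}_{\alpha;i}$, and likewise $\bound^{(2)}_{\leq}(\hat{H}_\alpha)=\bound^{(2)}_{\leq}(\hat{H}^{(2)}_\alpha)$; the on-site potentials drop out entirely. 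The inclusion $\supseteq$ is then immediate: an operator supported on a single row $\alpha$ (identity elsewhere) commutes with every $\hat{H}^{(2)}_{\alpha';j}$ for $\alpha'\neq\alpha$ by disjoint row-support, so its commutator with $\hat{H}^{(2)}$ equals its commutator with $\hat{H}^{(2)}_\alpha$. The sum is automatically direct because operators supported on distinct single rows lie in different summands of the row-wise tensor decomposition $\bound^{\otimes\Lambda_N\times\mathcal{S}}=\bigotimes_\alpha\bound^{\otimes\Lambda_N}_\alpha$.

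For the nontrivial inclusion $\subseteq$ I would introduce the normalized partial trace $Q_\alpha(\hat{Y})\coloneqq d^{-(|\mathcal{S}|-1)N}\,\Tr_{\mathrm{rows}\neq\alpha}(\hat{Y})\otimes I_{\mathrm{rows}\neq\alpha}$, which extracts precisely the single-row-$\alpha$ component of the row-wise Pauli expansion. Trace cyclicity inside each row $\alpha'$ gives $\Tr_{\alpha'}[\hat{Y},\hat{H}^{(2)}_{\alpha';j}]=0$, so $Q_\alpha$ annihilates $[\hat{X},\hat{H}^{(2)}_{\alpha';j}]$ for $\alpha'\neq\alpha$, while the partial trace commutes with multiplication by operators supported on row $\alpha$, giving $Q_\alpha[\hat{X},\hat{H}^{(2)}_\alpha]=[Q_\alpha(\hat{X}),\hat{H}^{(2)}_\alpha]$. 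Combining these yields $[Q_\alpha(\hat{X}),\hat{H}^{(2)}_\alpha]=Q_\alpha[\hat{X},\hat{H}^{(2)}]$, whose length is at most $2$ since $Q_\alpha$ preserves horizontal support. Hence each single-row component $Q_\alpha(\hat{X})$ already lies in $\bound^{(2)}_{\leq}(\hat{H}^{(2)}_\alpha)$, providing the building blocks of the claimed decomposition.

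It remains to show that the multi-row remainder $\hat{X}'\coloneqq\hat{X}-\sum_\alpha Q_\alpha(\hat{X})$ vanishes. By construction $Q_\alpha(\hat{X}')=0$ for every $\alpha$, so every term in its row-wise Pauli expansion acts nontrivially on at least two rows, while $\hat{X}'\in\bound^{(2)}_{\leq}(\hat{H}^{(2)})$ still holds. The plan is to invoke the diagrammatic identity~(\ref{eq:step1_operator}) characterizing $\bound^{(2)}_{\leq}$ and split both sides as $\sum_\alpha[\,\cdot\,,\hat{H}^{(2)}_{\alpha;\cdot}]$, so that each summand acts nontrivially only on its own row. For a chosen ``spectator'' row $\beta$ that carries nontrivial action in $\hat{X}'$, contracting the identity against a fixed Pauli element on row $\beta$ isolates an equation among the row-$\alpha$ coefficients of $\hat{X}'$; the row-wise injectivity~(\ref{eq:indivisual_injective}) of $[\,\cdot\,,\hat{H}^{(2)}_{\alpha;j}]$ then forces those coefficients to vanish, and iterating pair-by-pair (or inducting on the size of the row-support) eliminates every multi-row component. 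The main obstacle is this last step: the 3-local cancellation at columns $\{i,i+1,i+2\}$ mixes contributions from commutators with $\hat{H}^{(2)}_\alpha$ coming from $\hat{X}'_{[2],i}$ with contributions from commutators with $\hat{H}^{(2)}_\beta$ coming from $\hat{X}'_{[2],i+1}$, and these can in principle interfere at the level of Pauli patterns. The orthogonality provided by the row-wise tensor decomposition is what ultimately makes the argument go through, but extracting the correct per-row equations so that the injectivity hypothesis applies cleanly is where the real technical work lies.
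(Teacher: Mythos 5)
Your reduction to $\hat{H}^{(2)}$, the inclusion $\supseteq$, the directness of the sum, and the conditional-expectation argument showing $Q_\alpha(\hat{X})\in\bound^{(2)}_{\leq}(\hat{H}_\alpha)$ are all correct, and are in fact more explicit than the paper, which simply declares the identity ``clear'' because $\hat{H}^{(2)}$ is a sum of independent chains. The genuine gap is the one you flag yourself: you never prove that the multi-row remainder $\hat{X}'$ vanishes, and that is the only nontrivial content of the proposition. Saying that ``the orthogonality provided by the row-wise tensor decomposition is what ultimately makes the argument go through'' is not a proof; the interference between the two commutator sums that worries you is exactly the point that must be settled, and the injectivity hypothesis \eqref{eq:indivisual_injective} concerns the \emph{full} commutator, so one must check that the projected equations you would extract are equivalent to unprojected ones before injectivity can be invoked.

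The step does close, as follows. In the strictly $3$-local identity at columns $\{i,i+1,i+2\}$, every term of $\sum_\gamma[\hat{X}'_{i+1},\hat{H}^{(2)}_{\gamma;i}]$ touches column $i$ only at the single site $(\gamma,i)$. Projecting onto Pauli patterns in which column $i$ is acted on at two or more rows therefore kills that sum entirely and yields $[\hat{X}'_{i,R},\hat{H}^{(2)}_{\beta;i+1}]=0$ for each row $\beta$ and each column-$i$ row-support $R$ with $|R|\geq 2$; since $[\hat{b}\otimes I,\hat{H}^{(2)}_{\beta;i+1}]$ is automatically traceless at $(\beta,i+1)$ and $(\beta,i+2)$ (partial-trace cyclicity again), no information is lost in the projection and the row-wise injectivity applied at each site $(\beta,i+1)$ forces the column-$(i+1)$ factor of $\hat{X}'_{i,R}$ to be trivial at every row, hence zero. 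The mirror argument removes every component whose column-$(i+2)$ support meets two or more rows. What survives consists of one-body$\otimes$one-body terms $P_{(\gamma,i)}\otimes P'_{(\delta,i+1)}$; this much is also exactly the $\mathcal{A}$-subspace argument of Sec.~\ref{sec:algorithm} with $\mathcal{A}$ the span of single-site operators within a column, as invoked in the footnote of the triangular-lattice proof. The diagonal terms $\gamma=\delta$ are precisely $\sum_\alpha Q_\alpha(\hat{X})$, so $\hat{X}'$ consists only of cross terms $\gamma\neq\delta$, for which the identity reduces to $[\hat{X}'_{i,\gamma\delta},\hat{H}^{(2)}_{\delta;i+1}]+[\hat{X}'_{i+1,\gamma\delta},\hat{H}^{(2)}_{\gamma;i}]=0$; the two summands are traceless at the distinct sites $(\delta,i+1)$ and $(\gamma,i+1)$ of column $i+1$ and act as the identity at the other, so they cannot cancel and each vanishes separately, whence injectivity gives $\hat{X}'_{i,\gamma\delta}=0$. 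With this paragraph supplied, your proof is complete.
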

This relation is clear from the fact that 
$\hat{H}^{(2)}$ is regarded as the Hamiltonian of independent $|\mathcal{S}|$ chains. 
Since $\dim\bound^{(2)}_{\leq}(\hat{H})=\sum_{\alpha}\dim\bound^{(2)}_{\leq}(\hat{H}_\alpha)\geq|\mathcal{S}|$ holds, 
the assumption~\eqref{eq:assumption_1dimensionality} does not hold 
and we cannot show the nonintegrability of these systems using Thm.~\ref{thm:main}. 
However, we can provide another test for the triviality of $\bound^{(k)}_{<}(\hat{H})$. 
By combining Prop.~\ref{prop:step1_internal} and Thm.~\ref{thm:general_step1}, 
we can express $\hat{Q}_{[k]}^{(k)}\in\bound^{(k)}_{\leq}(\hat{H})$ as
\begin{align}
    \hat{Q}_{[k]}^{(k)}
    &=
    \sum_{(\alpha,i)\in\mathcal{S}\times\Lambda_N}[\dots[[\hat{Q}_{[2],\alpha;i}^{(2)},\hat{H}_{\alpha;i+1}^{(2)}],\hat{H}_{\alpha;i+2}^{(2)}]\dots,\hat{H}_{\alpha;i+k-2}^{(2)}]\nonumber\\
    &\eqqcolon
    \sum_{\alpha\in\mathcal{S}}\hat{Q}_{[k],\alpha}^{(k)}\label{eq:step1_internal}
\end{align}
for some $\hat{Q}_{[2],\alpha}^{(2)}\in\bound^{(2)}_{\leq}(\hat{H}_\alpha)$. 
Then, we have the following theorem. 
\begin{thm}\label{thm:step2_internal}
    Consider the Hamiltonian of the form of Eq.~\eqref{eq:internal_Hamiltonian} satisfying the injectivity (see Eq.~\eqref{eq:indivisual_injective}). 
    For $3\leq k\leq N/2$ and $\hat{Q}_{[k]}^{(k)}\in\bound^{(k)}_{\leq}(\hat{H})$, 
    if $\hat{Q}_{[k]}^{(k)}+\hat{Q}_{[k]}^{(k-1)}\in\bound^{(k)}_{<}(\hat{H})$ holds for some $\hat{Q}_{[k]}^{(k-1)}\in\bound^{(k-1)}$, 
    then all of the following conditions hold. 
    \begin{enumerate}
        \item For each $\alpha\in\mathcal{S}$, 
        $\hat{Q}_{[k],\alpha}^{(k)}+\hat{Q}_{[k],\alpha}^{(k-1)}\in\bound^{(k)}_{<}(\hat{H}_\alpha)$ holds for some $(k-1)$-local operator $\hat{Q}_{[k],\alpha}^{(k-1)}$ on $\{\alpha\}\times\Lambda_N$. 
        \item For each $(\alpha,i)\in\mathcal{S}\times\Lambda_N$ and $\mathcal{S}_0\supsetneq\{\alpha\}$, 
        the following holds:
        \begin{align}
            &[[\hat{Q}^{(2)}_{[2],\alpha;i},\hat{H}^{(1)}_{\mathcal{S}_0;i+1}],\hat{H}^{(2)}_{\alpha;i+1}]%
            +[\hat{H}^{(2)}_{\alpha;i},[\hat{Q}^{(2)}_{[2],\alpha;i+1},\hat{H}^{(1)}_{\mathcal{S}_0;i+1}]]\nonumber\\
            &\hspace{15mm}=-(k-2)[[\hat{Q}^{(2)}_{[2],\alpha;i},\hat{H}^{(2)}_{\alpha;i+1}],\hat{H}^{(1)}_{\mathcal{S}_0;i+1}].\label{eq:step2_branch}
        \end{align}
        \item For each $i\in\Lambda_N$, $\alpha\neq\beta\in\mathcal{S}$ and $\mathcal{S}_0\ni\alpha,\beta$, 
        the following holds:
        \begin{align}
            &[[\hat{Q}^{(2)}_{[2],\alpha;i},\hat{H}^{(1)}_{\mathcal{S}_0;i+1}],\hat{H}^{(2)}_{\beta;i+1}]\nonumber\\
            &=[[\hat{H}^{(2)}_{\alpha;i},\hat{H}^{(1)}_{\mathcal{S}_0;i+1}],\hat{Q}^{(2)}_{[2],\beta;i+1}].\label{eq:step2_intra}
        \end{align}
    \end{enumerate}
    Here, $\hat{Q}_{[k],\alpha}^{(k)}$ and $\hat{Q}_{[2],\alpha}^{(2)}$'s are defined through Eq.~\eqref{eq:step1_internal}. 
\end{thm}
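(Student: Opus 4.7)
\emph{Plan.---}I would derive the three conditions by analyzing, support pattern by support pattern, the $k$-local remainder of $[\hat{Q}_{[k]}^{(k)}+\hat{Q}_{[k]}^{(k-1)},\hat{H}]$, which must vanish by assumption. Because $\hat{Q}_{[k]}^{(k)}\in\bound_\leq^{(k)}(\hat{H})$, its $k$-local contribution to the total commutator comes only through commutators of its boosted form (Eq.~\eqref{eq:step1_internal}) with the on-site terms $\hat{H}^{(1)}_{\mathcal{S}_0}$, whereas $\hat{Q}_{[k]}^{(k-1)}$ feeds into $k$-local output only through its $(k-1)$-local component commuted with $\hat{H}^{(2)}$. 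I would classify the resulting $k$-local contributions by the shape of their two-dimensional support on $\Lambda_N\times\mathcal{S}$ and read off one relation per class of pattern.

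\emph{Step~1 (intra-chain supports).---}Restricting to $k$-local supports contained in a single chain $\{\alpha\}\times\Lambda_N$, the only contributions are the intra-chain commutators $[\hat{Q}_{[k],\alpha}^{(k)},\hat{H}_\alpha]$ and $[\hat{Q}_{[k],\alpha}^{(k-1)},\hat{H}_\alpha^{(2)}]$, where $\hat{Q}_{[k],\alpha}^{(k-1)}$ is the part of $\hat{Q}_{[k]}^{(k-1)}$ living on chain $\alpha$. Requiring the sum to be at most $(k-1)$-local reproduces Condition~1 verbatim.

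\emph{Step~2 (branching and corner supports).---}For ``branching'' supports of the form $\{\alpha\}\times\{i_0,\ldots,i_0+k-1\}\cup\mathcal{S}_0\times\{j\}$ with $\mathcal{S}_0\supsetneq\{\alpha\}$, the boosted form contributes $[\hat{Q}_{[k],\alpha,i_0}^{(k)},\hat{H}_{\mathcal{S}_0;j}^{(1)}]$. Expanding the nested commutators and using the Jacobi identity to move $\hat{H}^{(1)}_{\mathcal{S}_0;j}$ through the $k-2$ intermediate $\hat{H}^{(2)}_\alpha$ insertions, I would project onto the three-column sub-pattern at the chain end, $\{(\alpha,i),(\alpha,i+1),(\alpha,i+2)\}\cup\mathcal{S}_0\times\{i+1\}$, where only the two boundary boost seeds $\hat{Q}^{(2)}_{[2],\alpha;i}$ and $\hat{Q}^{(2)}_{[2],\alpha;i+1}$ survive on the $\alpha$-chain. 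The $k-2$ interior Jacobi-shuffles each contribute the same residual term $[[\hat{Q}^{(2)}_{[2],\alpha;i},\hat{H}^{(2)}_{\alpha;i+1}],\hat{H}^{(1)}_{\mathcal{S}_0;i+1}]$, producing the $-(k-2)$ prefactor of Eq.~\eqref{eq:step2_branch}. Mixed-chain ``corner'' supports of shape $\{(\alpha,i)\}\cup\mathcal{S}_0\times\{i+1\}\cup\{(\beta,i+2)\}$ with $\alpha\neq\beta$ in $\mathcal{S}_0$ are populated solely by the chain-end pieces of the $\alpha$- and $\beta$-boosts commuted with $\hat{H}^{(1)}_{\mathcal{S}_0;i+1}$ and the adjacent $\hat{H}^{(2)}$; requiring their sum to vanish gives Eq.~\eqref{eq:step2_intra}.

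\emph{Main obstacle.---}The delicate part is the bookkeeping in Step~2: first, confirming that the nested-commutator expansion of the boost produces exactly $k-2$ copies of the same residual term to justify the prefactor in Eq.~\eqref{eq:step2_branch}; and second, rigorously excluding that $\hat{Q}_{[k]}^{(k-1)}$ could contribute to the branching or corner sub-patterns in a way that alters the stated cancellations. This latter point requires an injectivity-style argument mirroring the one in Prop.~\ref{prop:step1}, showing that any $(k-1)$-local counter-term capable of interacting with these cross-chain patterns must already be constrained to the single-chain pieces absorbed into Condition~1, so that the cross-chain identities of Conditions~2 and~3 survive unaltered.
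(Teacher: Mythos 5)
Your overall strategy---classifying the $k$-local remainder of $[\hat{Q}_{[k]}^{(k)}+\hat{Q}_{[k]}^{(k-1)},\hat{H}]$ by the shape of its support on $\mathcal{S}\times\Lambda_N$, and reading off Condition~1 from the purely intra-chain supports---is the same as the paper's, and your Step~1 is essentially correct. The gap is in your proposed resolution of what you call the ``main obstacle.'' You plan to show that any $(k-1)$-local counterterm capable of contributing to the branched or corner patterns ``must already be constrained to the single-chain pieces absorbed into Condition~1, so that the cross-chain identities survive unaltered.'' This is false, and the proof cannot proceed that way: the counterterm $\hat{Q}_{[k]}^{(k-1)}$ generically has components $\hat{Q}_{[k],\alpha\mathcal{S}_0\beta;i}^{(k-1,l)}$ supported on branched sets $D_{\alpha;i}^{(l)}\cup\mathcal{S}_0\times\{i+l\}\cup D_{\beta;i+l+1}^{(k-l-2)}$, and these do contribute to the $k$-local output on the branched supports. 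The paper does not exclude them; it writes one equation per branch position $l\in\{0,\dots,k-1\}$ (the vanishing of the $k$-local output on $D_{\alpha\mathcal{S}_0\beta;i}^{(k,l)}$), each of which couples the counterterm at position $l$ to the one at $l-1$ via commutators with $\hat{H}^{(2)}$, and then eliminates all of these unknowns by telescoping the recursion between the two boundary equations, using the injectivity maps of Eq.~\eqref{eq:injective} to peel the result down to a $3$-local identity.

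This also undercuts your account of the $-(k-2)$ prefactor. It does not arise from ``$k-2$ interior Jacobi-shuffles'' of the boosted operator each leaving the same residue; it arises because each of the $k-2$ interior branch positions $l=1,\dots,k-2$ contributes one copy of the term $\delta_{\alpha\beta}[\hat{Q}_{[k],\alpha;i}^{(k)},\hat{H}^{(1)}_{\mathcal{S}_0;i+l}]$ to the recursion, and after repeated application of the injectivity maps each copy reduces to the same $3$-local expression $[[\hat{Q}^{(2)}_{[2],\alpha;i},\hat{H}^{(2)}_{\alpha;i+1}],\hat{H}^{(1)}_{\mathcal{S}_0;i+1}]$. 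Without tracking and eliminating the branched counterterms you have no way to close the system of equations, so Conditions~2 and~3 would not follow from your argument as proposed. The fix is to keep those counterterms explicitly, set up the $l$-indexed chain of support-restricted equations, and combine them as in Eqs.~\eqref{eq:step2_branch_right}--\eqref{eq:step2_branch_left} of the paper.
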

The first condition indicates that 
this system is no more integrable than a one-dimensional system $\hat{H}_\alpha$.  
In particular, 
if $\bound^{(k)}_{<}(\hat{H}_\alpha) = \bound^{(k-1)}_{\leq}(\hat{H}_\alpha)$ holds for all $\alpha$, 
then $\hat{H}$ is nonintegrable.
By combining with previous studies on the nonintegrability, 
This allows us to newly establish the nonintegrability of some systems, 
such as the quantum compass model on the cubic lattice. 

The remaining two conditions are equations for 3-local quantities. 
The second involves a single chain with a ``branch'',  
while the third describes the relation between two chains.  
From the theory of generalized eigenvalue problems~\cite{TemplatesSolutionAlgebraic2000}, 
Eq.~\eqref{eq:step2_branch} either has solutions for all $k$, 
or only for finitely many (at most $\dim \bound^{(2)}_{\leq}(\hat{H}_{\alpha})$) $k$.  
In particular, 
when $\bound^{(2)}_{\leq}(\hat{H}_{\alpha}) = \mathbb{C}\hat{H}^{(2)}_{\alpha}$ holds, 
Eq.~\eqref{eq:step2_branch} is equivalent, for all $k$, to the following equation:
\begin{equation}
    q_\alpha[[\hat{H}^{(2)}_{\alpha;i},\hat{H}^{(2)}_{\alpha;i+1}],
    \hat{H}^{(1)}_{\mathcal{S}_0;i+1}]=0,
\end{equation}
where we assume that $\hat{Q}^{(2)}_{[2],\alpha}=q_\alpha\hat{H}^{(2)}_{\alpha}$. 
This condition leads to the nonintegrability of models where $H_a$ is integrable, such as the XY(Z) model on the honeycomb lattice. 
In summary, the nonintegrability of the Hamiltonian in Eq.~\eqref{eq:internal_Hamiltonian} reduces to a problem on at most two chains with a ``branch''.

We conclude this section with the proof of Thm.~\ref{thm:step2_internal}. 
\begin{proof}[Proof of Thm.~\ref{thm:step2_internal}]
We decompose $\hat{Q}_{[k]}^{(k-1)}$ as
\begin{equation}
    \hat{Q}_{[k]}^{(k-1)}
    =\sum_{\substack{D\subset\mathcal{S}\times\Lambda_N\\\size{D}=k-1
    }}
     \hat{Q}_{[k],D}^{(k-1)},
\end{equation}
where the support of $ \hat{Q}_{[k],D}^{(k-1)}$ is $D$. 
We define the following abbreviation for special forms of $D$ and the corresponding $\hat{Q}_{[k],D}^{(k-1)}$:
\begin{align}
    D_{\alpha;i}^{(k)}&\coloneqq\{\alpha\}\times\{i,\dots,i+k-1\},\\
    \hat{Q}_{[k],\alpha;i}^{(k-1)}
    &\coloneqq\hat{Q}_{[k],D_{\alpha;i}^{(k-1)}}^{(k-1)},\\
    D_{\alpha\mathcal{S}_0\beta;i}^{(k,l)}&
    \coloneqq D_{\alpha;i}^{(l)}
    \cup\mathcal{S}_0\times\{i+l\}\cup D_{\beta;i+l+1}^{(k-l-1)},\\
    \hat{Q}_{[k],\alpha\mathcal{S}_0\beta;i}^{(k-1,l)}
    &\coloneqq\hat{Q}_{[k], D_{\alpha\mathcal{S}_0\beta;i}^{(k-1,l)}}^{(k-1)}.
\end{align}
We set $D_{\alpha;i}^{(l)}=\emptyset$ for $l\leq0$. 

First, we show the first condition. 
By tracing out the commutator $[\hat{Q}_{[k]}^{(k)}+\hat{Q}_{[k]}^{(k-1)},\hat{H}]$ except for $\{\alpha\}\times\Lambda_N$, 
we obtain 
\begin{align}
    [\hat{Q}_{[k],\alpha}^{(k)}+\hat{Q}_{[k],\alpha}^{(k-1)},\hat{H}_\alpha]\in\tilde{\bound}^{(k-1)}\cap\bound^{\otimes(\{\alpha\}\times\Lambda_N)},\nonumber\\
    \mbox{where } \hat{Q}_{[k],\alpha}^{(k-1)}
    \coloneqq\sum_{i\in\Lambda_N}\hat{Q}_{[k],\alpha;i}^{(k-1)}.\label{eq:step2_1dim}
\end{align}
Here we use the assumption $\len([\hat{Q}_{[k]}^{(k)}+\hat{Q}_{[k]}^{(k-1)},\hat{H}])\leq k-1$. 
It is clear that Eq.~\eqref{eq:step2_1dim} is equivalent to $\hat{Q}_{[k],\alpha}^{(k)}+\hat{Q}_{[k],\alpha}^{(k-1)}\in\bound^{k}_<(\hat{H}_\alpha)$. 

Next,  we show Eq.~\eqref{eq:step2_branch} and Eq.~\eqref{eq:step2_intra}. 
These equations are summarized as 
\begin{align}
            &[[\hat{Q}^{(2)}_{[2],\alpha;i},\hat{H}^{(1)}_{\mathcal{S}_0;i+1}],\hat{H}^{(2)}_{\beta;i+1}]%
            +[\hat{H}^{(2)}_{\alpha;i},[\hat{Q}^{(2)}_{[2],\beta;i+1},\hat{H}^{(1)}_{\mathcal{S}_0;i+1}]]\nonumber\\
            &\hspace{13mm}=-\delta_{\alpha\beta}(k-2)[[\hat{Q}^{(2)}_{[2],\alpha;i},\hat{H}^{(2)}_{\alpha;i+1}],\hat{H}^{(1)}_{\mathcal{S}_0;i+1}].
            \label{eq:step2_branches}
\end{align}
By tracing out the commutator $[\hat{Q}_{[k]}^{(k)}+\hat{Q}_{[k]}^{(k-1)},\hat{H}]$ except for $D_{\alpha\mathcal{S}_0\beta;i}^{(k,l)}$ 
and focusing on $k$-local operators, 
we have
\begin{equation}
    [\hat{Q}_{[k],\alpha\mathcal{S}_0\beta;i}^{(k-1,l=0)},\hat{H}_{\beta;i+k-2}^{(2)}]
    +
    [\hat{Q}_{[k],\beta;i}^{(k)},\hat{H}_{\mathcal{S}_0;i}^{(1)}]
    =0
\end{equation}
for $l=0$, 
\begin{align}
    [\hat{Q}_{[k],\alpha\mathcal{S}_0\beta;i}^{(k-1,l)},\hat{H}_{\beta;i+k-2}^{(2)}]
    +
    [\hat{Q}_{[k],\alpha\mathcal{S}_0\beta;i+1}^{(k-1,l-1)},\hat{H}_{\alpha;i}^{(2)}]\nonumber\\
    +
    \delta_{\alpha\beta}
    [\hat{Q}_{[k],\alpha;i}^{(k)},\hat{H}_{\mathcal{S}_0;i+l}^{(1)}]
    =0
\end{align}
for $1\leq l\leq k-2$ and 
\begin{equation}
    [\hat{Q}_{[k],\alpha\mathcal{S}_0\beta;i+1}^{(k-1,l=k-2)},\hat{H}_{\alpha;i}^{(2)}]
    +
    [\hat{Q}_{[k],\alpha;i}^{(k)},\hat{H}_{\mathcal{S}_0;i+k-1}^{(1)}]
    =0
\end{equation}
for $l=k-1$. 
These equations are translated by the string diagram as
\begin{equation}
    \figurecentering{\includegraphics[scale=0.7]{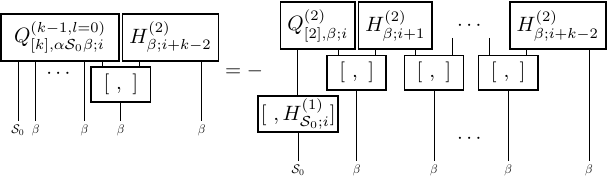}},\label{eq:step2_branch_right}
\end{equation}
\begin{equation}
    \figurecentering{\includegraphics[scale=0.7]{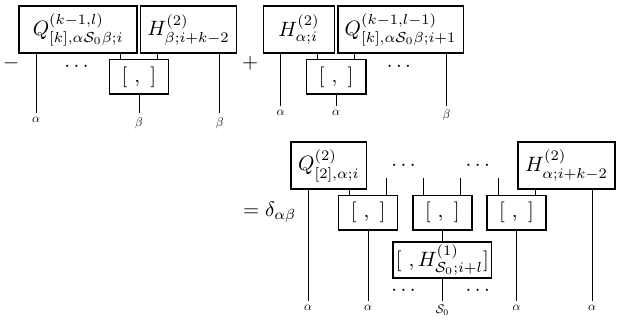}}\raisebox{-1cm}{,}\label{eq:step2_branch_mid}
\end{equation}
\begin{equation}
    \figurecentering{\includegraphics[scale=0.7]{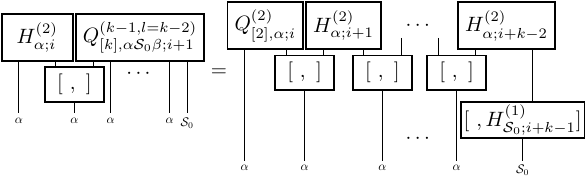}}
    \ .\label{eq:step2_branch_left}
\end{equation}
Here, the labels under the output lines represent the degrees of freedom on which the operators act.
By the injectivity, Eq.~\eqref{eq:step2_branch_left} yields
\begin{equation}
    \figurecentering{\includegraphics[scale=0.9]{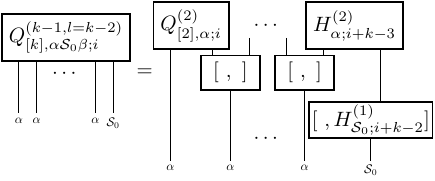}}\ . 
\end{equation}
By using Eq.~\eqref{eq:step2_branch_mid} iteratively, we have
\begin{equation}
    \figurecentering{\includegraphics[scale=0.7]{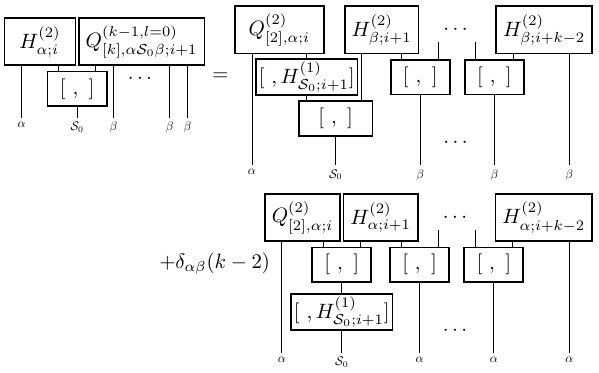}}\raisebox{-1.5cm}{.}
\end{equation}
This equation and Eq.~\eqref{eq:step2_branch_right} give
\begin{widetext}
    \begin{equation}
    \figurecentering{\includegraphics[scale=0.8]{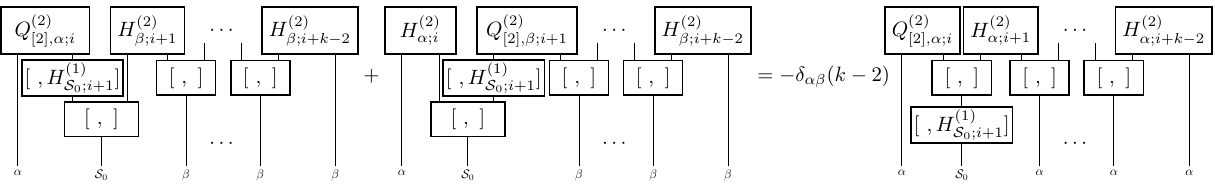}}\raisebox{-0cm}{,}
\end{equation}
\end{widetext}
which is equivalent to Eq.~\eqref{eq:step2_branches}. 
This concludes the proof.
\end{proof}

\section{Conclusion and outlook}\label{sec:summary}
In this work, we have proven a sufficient condition based solely on 3-local quantities that guarantees nonintegrability, 
i.e., the absence of local conserved quantities.
This result reveals a unified structure underlying existing proofs of nonintegrability, 
independent of the specific system, 
and provides a partial resolution to the long-standing conjecture that 
the existence of 3-local conserved quantities is a necessary condition for integrability. 
Using the obtained theorem, 
we have also constructed a procedure for determining 
whether a given translationally invariant quantum spin chain is integrable or nonintegrable, 
independent of system size. 
This opens the way for numerical determination of nonintegrability with a rigorous basis. 

One straightforward extension of this work to develop a unified understanding of nonintegrability in systems 
that do not satisfy the assumption~\eqref{eq:assumption_injective} or~\eqref{eq:assumption_1dimensionality}. 
Typical examples that do not satisfy the assumption~\eqref{eq:assumption_injective} are 
systems with Ising-type interactions, 
which are investigated in Refs.~\cite{chibaMixfield2024,chiba2dimIsing2024}. 
In these proofs of nonintegrability, 
it may be necessary to consider up to $(k-1)$-local outputs 
to show the absence of $k$-local conserved quantities~\cite{chibaMixfield2024}. 
Thus, the structure of nonintegrability may be qualitatively different from 
the nonintegrability of the system satisfying the assumption~\eqref{eq:assumption_injective}.
The XY model is an example that satisfies the assumption~\eqref{eq:assumption_injective} 
but fails to satisfy the assumption~\eqref{eq:assumption_1dimensionality}. 
Unlike the Ising-type interactions, 
the proof of the nonintegrability of this system~\cite{yamaguchiCompleteClassificationIntegrability2024,yamaguchiProofAbsenceLocal2024} is almost identical to that of systems satisfying both assumptions. 
Therefore, it is expected that our results can be extended to this case. 

Another promising direction is to explore the existence or absence of structures beyond integrability. 
In Sec.~\ref{subsec:XXZ}, 
we have demonstrated that even in the MBL phase, there is no (strictly) local conserved quantity. 
In the context of distinguishing the MBL phase from the ergodic phase where the ETH holds, 
a crucial question is whether the methods used to prove nonintegrability can be extended to quasi-local conserved quantities. 
As a related topic, 
we have discussed the absence of SGAs in Sec.~\ref{subsec:SGA}. 
If RSGAs could be analyzed, 
it may become possible to determine the presence or absence of quantum many-body scarring. 
These generalizations can lead to further classification of many-body systems 
beyond the presence or absence of local conserved quantities.

\begin{acknowledgments}
The author expresses sincere gratitude to Yuuya Chiba for carefully reviewing the manuscript and pointing out the gap in the proof. 
The author would like to thank Hosho Katsura, Mizuki Yamaguchi, and Naoto Shiraishi for fruitful discussions, 
and Masahito Ueda, Masaya Nakagawa, Hal Tasaki and Bal\'{a}zs Pozsgay
for their valuable comments on the manuscript.
The author is supported by Forefront Physics and Mathematics Program to Drive Transformation (FoPM), 
World-Leading INnovative Graduate Study (WINGS) Program, 
The University of Tokyo.
\end{acknowledgments}

\section*{Data availability}
No datasets were generated or analyzed during the current study.

\appendix
\section{Proof of Lemma~\ref{lem:key}}\label{sec:proof_lem}
We expand an operator $\hat{C}\in\bound_0^{\otimes 2}$ as $\hat{C}=\sum_a\hat{X}_a\otimes\hat{Y}_a$. 
Then, the RHS of Eq.~\eqref{eq:key} is expanded as 
\begin{align}
    &\hat{D}\otimes\hat{E}\nonumber\\
    \mapsto&
    \sum_{a,b}[[[\hat{D},\hat{X}_a]\otimes \hat{Y}_a,\hat{X}_b\otimes\hat{Y}_b],I\otimes\hat{E}]\nonumber\\
    &\ +[\hat{D}\otimes I,[\hat{X}_b\otimes [\hat{Y}_b,\hat{E}],\hat{X}_a\otimes\hat{Y}_a]]\nonumber\\
    =&\sum_{a,b}
    [\hat{D},\hat{X}_a]\hat{X}_b\otimes[\hat{Y}_a\hat{Y}_b,\hat{E}]
    -\hat{X}_b[\hat{D},\hat{X}_a]\otimes[\hat{Y}_b\hat{Y}_a,\hat{E}]\nonumber\\
    &\ +[\hat{D},\hat{X}_b\hat{X}_a]\otimes [\hat{Y}_b,\hat{E}]\hat{Y}_a
    -[\hat{D},\hat{X}_a\hat{X}_b]\otimes \hat{Y}_a[\hat{Y}_b,\hat{E}]\nonumber\\
    =&\sum_{a,b}
    [\hat{D},\hat{X}_a]\hat{X}_b\otimes[\hat{Y}_a,\hat{E}]\hat{Y}_b
    -\hat{X}_b[\hat{D},\hat{X}_a]\otimes\hat{Y}_b[\hat{Y}_a,\hat{E}]\nonumber\\
    &\ +[\hat{D},\hat{X}_b]\hat{X}_a\otimes [\hat{Y}_b,\hat{E}]\hat{Y}_a
    -\hat{X}_a[\hat{D},\hat{X}_b]\otimes \hat{Y}_a[\hat{Y}_b,\hat{E}]\nonumber\\
    =&2\sum_{a,b}
    [[\hat{D},\hat{X}_a]\otimes[\hat{Y}_a,\hat{E}],\hat{X}_b\otimes\hat{Y}_b],\label{eq:key_prf}
\end{align}
where we use the Leibniz rule of commutator: $[AB,C]=[A,C]B+A[B,C]$ in the second equality. 
The last line in Eq.~\eqref{eq:key_prf} corresponds to the left-hand side of Eq.~\eqref{eq:key}. $\square$

\section{Complete Proof of Theorem~\ref{thm:general_step2}}\label{sec:complete_proof}
 We provide complete the proof of Eq.~\eqref{eq:Step2}. 
 It is sufficient to show the following proposition. 
\begin{prop}\label{prop:step2_k}
Let $k\geq 4$. 
There is a linear isomorphism 
\begin{equation}
    \bound_{<,b}^{(k)}\ni q\hat{Q}_{b}^{(k)}+\hat{Q}_{[k]}^{(k-1)}\mapsto q\hat{Q}_{b}^{(k-1)}+\hat{Q}_{[k-1]}^{(k-2)}\in\bound_{<,b}^{(k-1)},
\end{equation}
where $q\in\mathbb{C}$ and 
$\hat{Q}_{b}^{(k)}$ is defined by Eq.~\eqref{eq:boosted_operator}.
\end{prop}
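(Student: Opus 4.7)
The plan is to generalize the argument used to prove Proposition~\ref{prop:step2_3}, proceeding by induction on $k$ with $k=4$ as the base case. I first unfold the condition $\len([q\hat{Q}_b^{(k)}+\hat{Q}_{[k]}^{(k-1)},\hat{H}])\leq k-1$ into local identities at each site $i\in\Lambda_N$, demanding that the $k$-local part of the commutator supported on $\{i,\ldots,i+k-1\}$ vanish. These are the direct analogues of Eq.~\eqref{eq:Step2_4}; they split naturally into three groups of contributions: (i) outer cross-terms of $\hat{Q}_{b,i}^{(k)}$ with $\hat{H}^{(2)}_{i-1}$ and $\hat{H}^{(2)}_{i+k-1}$, (ii) inner $\hat{H}^{(2)}$-commutators, and (iii) commutators with the on-site terms $\hat{H}^{(1)}_{i},\ldots,\hat{H}^{(1)}_{i+k-1}$, together with the corresponding terms coming from $\hat{Q}_{[k],i}^{(k-1)}$.

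The key technical step is to decompose the outer cross-terms using Lemma~\ref{lem:key}. Since $\hat{Q}_{b,i}^{(k)}$ is a $(k{-}2)$-fold nested commutator $[\ldots[\hat{H}^{(2)}_{i},\hat{H}^{(2)}_{i+1}],\ldots,\hat{H}^{(2)}_{i+k-2}]$, I would apply the lemma at the outermost layer, which "peels off" an $\hat{H}^{(2)}$ and produces left/right auxiliary operators $\hat{A}^{(k-1)}_L,\hat{A}^{(k-1)}_R$ whose sum is the nested-commutator expression generalizing Eq.~\eqref{eq:step2_bigcom}. The Jacobi identity, applied to each $\hat{H}^{(1)}$ position, analogously produces operators $\hat{B}^{(k-1)}_L,\hat{B}^{(k-1)}_R$ generalizing Eq.~\eqref{eq:step2_scom}. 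After collecting these pieces the local equation takes the form of Eq.~\eqref{eq:Step2_4_simplified} shifted to level $k-1$:
\begin{equation}
[\hat{Q'}^{(k-1)}_{L,i}\otimes I,I\otimes \hat{H}^{(2)}_{i+k-2}] = [\hat{H}^{(2)}_{i}\otimes I, I\otimes \hat{Q'}^{(k-1)}_{R,i+1}].
\end{equation}
The injectivity~\eqref{eq:injective} then yields a unique $(k-2)$-local $\hat{Q'}^{(k-2)}_{i+1}$ at each site, from which $\hat{Q}_{[k-1]}^{(k-2)}$ is assembled with an appropriate normalization constant. One then checks that $q\hat{Q}_b^{(k-1)}+\hat{Q}_{[k-1]}^{(k-2)}\in\bound_{<,b}^{(k-1)}$ by direct substitution, uniqueness follows from injectivity, and surjectivity follows by reading the construction backwards to define the inverse map.

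The main obstacle is the combinatorial bookkeeping that accompanies the passage from $k=4$ to general $k$. In the base case there is a single "inner" $\hat{H}^{(2)}$, so one application of Lemma~\ref{lem:key} suffices. For $k\geq 5$ the cascade of applications must be shown to telescope: after stripping the outermost layer, the residual inner expression is again a nested commutator of $\hat{H}^{(2)}$'s, i.e.\ a boosted operator of lower order, and I expect the induction hypothesis (together with the relations in Prop.~\ref{prop:step1}) to recombine these residues exactly into $\hat{Q}_b^{(k-1)}$ with coefficient $q$. The $\hat{H}^{(1)}$ terms require the greater amount of care: at each of the $k$ internal sites a separate Jacobi rearrangement is needed, and one must verify that the contributions at intermediate sites cancel pairwise while those at the endpoints reassemble into the $\hat{Q}_b^{(k-1)}$ part. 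This also fixes the normalization factor that replaces the $\tfrac{2}{3}$ appearing in the base case; one expects an explicit $k$-dependent constant that I would compute by tracking the coefficient of $\hat{Q}_b^{(k-1)}$ in $\hat{Q'}^{(k-1)}_{L}-\hat{Q'}^{(k-1)}_{R}$ under the iterated lemma.
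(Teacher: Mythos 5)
Your overall architecture does match the paper's: unfold the membership condition into site-local identities, split the contributions of the inner $\hat{H}^{(2)}$'s and the $\hat{H}^{(1)}$'s into left/right operators $\hat{A}^{(k-1)}_{L/R}$, $\hat{B}^{(k-1)}_{L/R}$ via Lemma~\ref{lem:key} and the Jacobi identity, invoke injectivity~\eqref{eq:injective} to extract $\hat{Q'}^{(k-2)}_{i}$, and read the construction backwards for surjectivity. However, the step you defer as ``combinatorial bookkeeping'' is the entire technical content of the paper's argument (Lemmas~\ref{lem:bigcom} and~\ref{lem:scom} in Appendix~\ref{sec:complete_proof}), and the mechanism you propose for it would not close as stated. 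The inner $\hat{H}^{(2)}$-contribution, denoted $\hat{F}^{(k)}_i$ there, is not a single nested commutator to which one application of Lemma~\ref{lem:key} ``at the outermost layer'' applies: it is the sum $\sum_{l=0}^{k-2}[\hat{Q}^{(k)}_{b,i},\hat{H}^{(2)}_{i+l}]$ over all $k-1$ bonds inside the support. The paper applies Lemma~\ref{lem:key} separately at each internal position $1\le l\le k-3$ to split each summand into a left-rooted and a right-rooted piece, and then assembles $\hat{A}^{(k-1)}_{L,i}$ and $\hat{A}^{(k-1)}_{R,i}$ as weighted sums with \emph{position-dependent} coefficients $\tfrac{l+2}{2}$ and $\tfrac{l}{2}$ (respectively $l+1$ and $l$ for the $\hat{B}$'s). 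These weights are forced by the requirement that the operators satisfy two identities simultaneously: the splitting identity~\eqref{eq:F_k}, which reduces Eq.~\eqref{eq:Step2_k} to an injectivity problem, and the sum rule $\hat{F}^{(k-1)}_i=\tfrac{k-2}{k-1}\bigl(\hat{A}^{(k-1)}_{L,i}+\hat{A}^{(k-1)}_{R,i}\bigr)$ of Eq.~\eqref{eq:F_k-1}, which is what makes the reduced equation recognizable as the $\bound^{(k-1)}_{<,b}$ condition and fixes the normalization generalizing the $\tfrac23$ of the base case. Verifying both requires the shift identities \eqref{eq:F_left_to_left}--\eqref{eq:F_right_to_left} (and their $\hat{G}$ analogues), none of which appear in your sketch.

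An induction on $k$ does not substitute for this construction: the induction hypothesis supplies an isomorphism $\bound^{(k-1)}_{<,b}\to\bound^{(k-2)}_{<,b}$, but the passage from level $k$ to level $k-1$ needs the level-$k$ decomposition lemmas, about which the hypothesis says nothing. Likewise, the claim that intermediate $\hat{H}^{(1)}$ contributions ``cancel pairwise'' is not what happens; each $\hat{G}^{(k)}_{i,l}$ survives and is routed left or right with an explicit weight. Until you exhibit $\hat{A}^{(k-1)}_{L/R}$ and $\hat{B}^{(k-1)}_{L/R}$ explicitly and prove the analogues of Eqs.~\eqref{eq:F_k}, \eqref{eq:F_k-1}, \eqref{eq:G_k} and~\eqref{eq:G_k-1}, the proof has a genuine gap at its central step.
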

The proof of Prop.~\ref{prop:step2_k} follows the same procedure 
as for Prop.~\ref{prop:step2_3}. 
First we construct a map from $\bound_{<,b}^{(k)}$ to $\bound_{<,b}^{(k-1)}$. 
For $\hat{Q}_{[k]}^{(k-1)}\in\bound^{(k-1)}$,
the condition $q\hat{Q}_b^{(k)}+\hat{Q}_{[k]}^{(k-1)}\in\bound_{<,b}^{(k)}$ is equivalent to the following equalities:
\begin{widetext}
\begin{align}
\figurecentering{\includegraphics[scale=1]{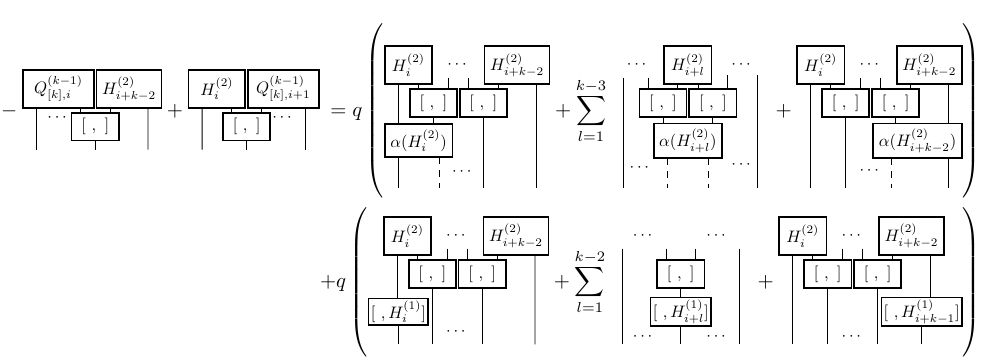}}
  \label{eq:Step2_k}
\end{align}
\end{widetext}
for all $i\in\Lambda_N$. 
We denote the terms in the first and second brackets on the RHS of Eq.~\eqref{eq:Step2_k} 
as $\hat{F}^{(k)}_i$ and $\hat{G}^{(k)}_i$, respectively. 
Similarly to the case of $k=4$, 
we can decompose $\hat{F}^{(k)}_i$ and $\hat{G}^{(k)}_i$ in terms of operators $\hat{A}^{(k-1)}_L,\hat{A}^{(k-1)}_R,\hat{B}^{(k-1)}_L$, and $\hat{B}^{(k-1)}_R$. 
Specifically, we have the following lemma for $\hat{F}^{(k)}_i$. 
\begin{lem}\label{lem:bigcom}
There exist operators $\hat{A}^{(k-1)}_L,\hat{A}^{(k-1)}_R\in\bound^{(k-1)}$ satisfying
\begin{align}
\hat{F}^{(k)}_i&=
\figurecentering{\includegraphics[scale=1]{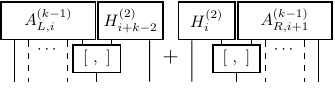}}
,
\label{eq:F_k}\\
\hat{F}^{(k-1)}_i&=
\frac{k-2}{k-1}
\left(\hat{A}^{(k-1)}_{L,i}+\hat{A}^{(k-1)}_{R,i}\right).\label{eq:F_k-1}
\end{align}
\end{lem}
Similarly, the following lemma holds for $\hat{G}^{(k)}_i$. 
\begin{lem}\label{lem:scom}
There exist operators $\hat{B}^{(k-1)}_L,\hat{B}^{(k-1)}_R\in\bound^{(k-1)}$ satisfying
\begin{align}
\hat{G}^{(k)}_i&=
\figurecentering{\includegraphics[scale=1]{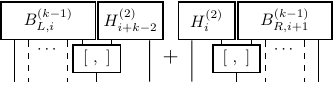}}
,
\label{eq:G_k}\\
\hat{G}^{(k-1)}_i&=
\frac{k-2}{k-1}
\left(\hat{B}^{(k-1)}_{L,i}+\hat{B}^{(k-1)}_{R,i}\right).\label{eq:G_k-1}
\end{align}
\end{lem}

By combining the above two lemmas, we can show Prop.~\ref{prop:step2_k}. 
\begin{proof}[Proof of Prop.~\ref{prop:step2_k}]
We define the following two operators:
\begin{align}
    \hat{Q'}^{(k-1)}_{L,i}&\coloneqq\hat{Q}^{(k-1)}_{[k],i}+q\hat{A}^{(k-1)}_{L,i}+q\hat{B}^{(k-1)}_{L,i},\label{eq:step2_op_left_k}\\
    \hat{Q'}^{(k-1)}_{R,i}&\coloneqq\hat{Q}^{(k-1)}_{[k],i}-q\hat{A}^{(k-1)}_{R,i}-q\hat{B}^{(k-1)}_{R,i}.\label{eq:step2_op_right_k}
\end{align}
By definition, we have 
\begin{equation}
    \hat{Q'}^{(k-1)}_{L,i}-\hat{Q'}^{(k-1)}_{R,i}=
    q\left(\hat{A}^{(k-1)}_{L,i}+\hat{A}^{(k-1)}_{R,i}
    +\hat{B}^{(k-1)}_{L,i}+\hat{B}^{(k-1)}_{R,i}\right). 
\end{equation}
Furthermore, by using Eq.~\eqref{eq:Step2_k}, 
we can find an operator $\hat{Q'}^{(k-2)}_{i+1}\in\bound^{(k-2)}$ satisfying
\begin{align}
\hspace{-5mm}
&\figurecentering{\includegraphics[scale=1]{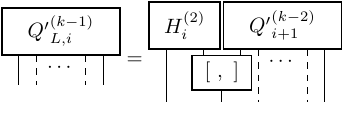}}
,\\
&\figurecentering{\includegraphics[scale=1]{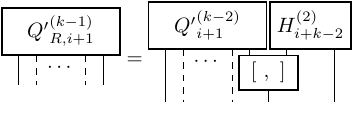}}
.
\end{align}
Therefore, for $\hat{Q}_{[k-1]}^{(k-2)}\coloneqq\frac{k-2}{k-1}\sum_{i}\hat{Q'}^{(k-2)}_{i}$ we have
\begin{equation}
\figurecentering{\includegraphics[scale=1]{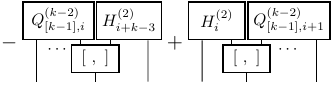}}
=q\left(\hat{F}^{(k-1)}_i+\hat{G}^{(k-1)}_i\right)
\end{equation}
for all $i\in\Lambda_N$, 
which is equivalent to $q\hat{Q}_{b}^{(k-1)}+\hat{Q}_{[k-1]}^{(k-2)}\in\bound_{<,b}^{(k-1)}$. 
A linear map from $\bound_{<,b}^{(k)}$ to $\bound_{<,b}^{(k-1)}$ is now constructed. 
By the same discussion as in the main text, this is in fact an isomorphism.
\end{proof}

The remaining task is to prove Lem.~\ref{lem:bigcom} and Lem.~\ref{lem:scom}. 
\begin{proof}[Proof of Lem.~\ref{lem:bigcom}]
The operator $\hat{F}^{(k)}_i$ consists of $k-1$ operators $\hat{F}^{(k)}_{i,0},\dots,\hat{F}^{(k)}_{i,k-2}$, 
where each $\hat{F}^{(k)}_{i,l}$ is defined as
\begin{equation}
    \hat{F}^{(k)}_{i,l}\coloneqq [\hat{Q}^{(k)}_{b,i},\hat{H}^{(2)}_{i+l}]. 
\end{equation}
By using Lem.~\ref{lem:key}, we can decompose $\hat{F}^{(k)}_{i,l}$ for $1\leq l\leq k-3$ as
\begin{align}
\hat{F}^{(k)}_{i,l}&=
\figurecentering{\includegraphics[scale=1]{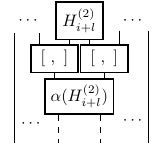}}
  \nonumber\\
&=
\figurecentering{\includegraphics[scale=1]{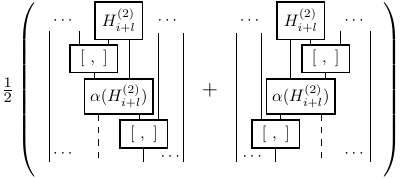}}
  \nonumber\\
&\eqqcolon\frac{1}{2}
(\hat{F}^{(k)}_{L,i,l}+\hat{F}^{(k)}_{R,i,l}). \label{eq:F_decomposition}
\end{align}
For $l=0$ and $l=k-2$, 
we define $\hat{F}^{(k)}_{L/R,i,l}$ as 
\begin{equation}
    \hat{F}^{(k)}_{L,i,l}=\hat{F}^{(k)}_{R,i,l}=\hat{F}^{(k)}_{i,l}. 
\end{equation}
Note that $\hat{F}^{(k)}_{L,i,l}$ and $\hat{F}^{(k)}_{R,i,l}$ satisfy the following equalities:
\begin{align}
    \hat{F}^{(k)}_{L,i,l}&
=\figurecentering{\includegraphics[scale=1]{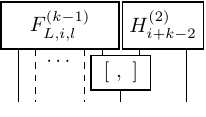}}
\ (0\leq l\leq k-3),\label{eq:F_left_to_left}\\
\hat{F}^{(k)}_{R,i,l}
&=\figurecentering{\includegraphics[scale=1]{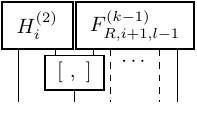}}
\ (1\leq l\leq k-2).\label{eq:F_right_to_right}
\end{align}
In addition, for $l\geq2$ we have 
\begin{equation}
     \hat{F}^{(k)}_{L,i,l}=
\hspace{-0mm}
\figurecentering{\includegraphics[scale=1]{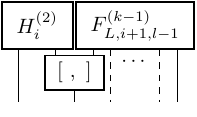}}
.
  \label{eq:F_left_to_right}
\end{equation}
Similarly, for $l\leq k-4$ we have 
\begin{equation}
\hat{F}^{(k)}_{R,i,l}=
\figurecentering{\includegraphics[scale=1]{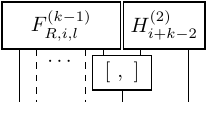}}
.
\label{eq:F_right_to_left}
\end{equation}

We now construct operators $\hat{A}^{(k-1)}_L,\hat{A}^{(k-1)}_R\in\bound^{(k-1)}$ as
\begin{align}
    \hat{A}^{(k-1)}_{L,i}&\coloneqq \frac{1}{k-2}\sum_{l=0}^{k-3}\left(\frac{l+2}{2}\hat{F}^{(k-1)}_{L,i,k-3-l}+\frac{l}{2}\hat{F}^{(k-1)}_{R,i,k-3-l}\right),\\
    \hat{A}^{(k-1)}_{R,i}&\coloneqq \frac{1}{k-2}\sum_{l=0}^{k-3}\left(\frac{l}{2}\hat{F}^{(k-1)}_{L,i,l}+\frac{l+2}{2}\hat{F}^{(k-1)}_{R,i,l}\right).
\end{align}
By using Eqs.~\eqref{eq:F_left_to_left}, \eqref{eq:F_right_to_right}, \eqref{eq:F_left_to_right}, and~\eqref{eq:F_right_to_left}, 
we can show that $\hat{A}^{(k-1)}_L$ and $\hat{A}^{(k-1)}_R$ satisfy Eq.~\eqref{eq:F_k}. 
Furthermore, Eq.~\eqref{eq:F_k-1} follows from Eq.~\eqref{eq:F_decomposition}, 
which completes this proof. 
\end{proof}

\begin{proof}[Proof of Lem.~\ref{lem:scom}]
The operator $\hat{G}^{(k)}_i$ consists of $k$ operators $\hat{G}^{(k)}_{i,0},\dots,\hat{G}^{(k)}_{i,k-1}$, 
where each $\hat{G}^{(k)}_{i,l}$ is defined as
\begin{equation}
    \hat{G}^{(k)}_{i,l}\coloneqq [\hat{Q}^{(k)}_{b,i},\hat{H}^{(1)}_{i+l}]. 
\end{equation}
We can decompose $\hat{G}^{(k)}_{i,l}$ for $1\leq l\leq k-2$ as
\begin{align}
\hat{G}^{(k)}_{i,l}&=
\figurecentering{\includegraphics[scale=1]{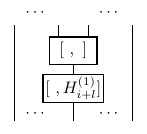}}
\nonumber\\
&=
\figurecentering{\includegraphics[scale=1]{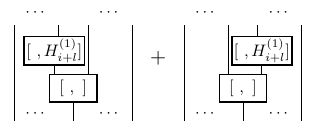}}
  \nonumber\\
&\eqqcolon
\hat{G}^{(k)}_{L,i,l}+\hat{G}^{(k)}_{R,i,l}, \label{eq:G_decomposition}
\end{align}
where we use the Jacobi identity: $[[\hat{A},\hat{B}],\hat{C}]=[[\hat{A},\hat{C}],\hat{B}]+[\hat{A},[\hat{B},\hat{C}]]$ in the second equality. 
For $l=0$ and $l=k-2$, 
we define $\hat{G}^{(k)}_{L/R,i,l}$ as 
\begin{align}
    \hat{G}^{(k)}_{R,i,0}&=\hat{G}^{(k)}_{i,0},\\
    \hat{G}^{(k)}_{L,i,k-1}&=\hat{G}^{(k)}_{i,k-1},\\
    \hat{G}^{(k)}_{L,i,0}&=\hat{G}^{(k)}_{R,i,k-1}=0.
\end{align}
Note that $\hat{G}^{(k)}_{L,i,l}$ and $\hat{G}^{(k)}_{R,i,l}$ satisfy the following equalities:
\begin{align}
    \hat{G}^{(k)}_{L,i,l}&=
\figurecentering{\includegraphics[scale=1]{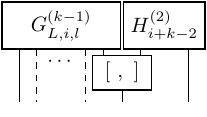}}
,\label{eq:G_left_to_left}\\
\hat{G}^{(k)}_{R,i,l}
&=
\figurecentering{\includegraphics[scale=1]{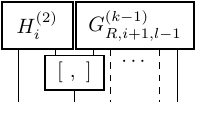}}
  \label{eq:G_right_to_right}
\end{align}
for $1\leq l\leq k-2$. 
In addition, for $l\geq2$ we have 
\begin{equation}
     \hat{G}^{(k)}_{L,i,l}=\figurecentering{\includegraphics[scale=1]{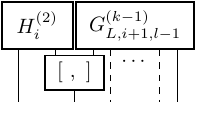}}
.
\label{eq:G_left_to_right}
\end{equation}
Similarly, for $l\leq k-3$ we have 
\begin{equation}
\hat{G}^{(k)}_{R,i,l}=\figurecentering{\includegraphics[scale=1]{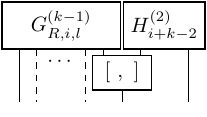}}
.
\label{eq:G_right_to_left}
\end{equation}

We now construct operators $\hat{B}^{(k-1)}_L,\hat{B}^{(k-1)}_R\in\bound^{(k-1)}$ as
\begin{align}
    \hat{B}^{(k-1)}_{L,i}&\coloneqq \frac{1}{k-2}\sum_{l=0}^{k-2}\left((l+1)\hat{G}^{(k-1)}_{L,i,k-2-l}+l\hat{G}^{(k-1)}_{R,i,k-2-l}\right),\\
    \hat{B}^{(k-1)}_{R,i}&\coloneqq \frac{1}{k-2}\sum_{l=0}^{k-2}\left(l\hat{G}^{(k-1)}_{L,i,l}+(l+1)\hat{G}^{(k-1)}_{R,i,l}\right).
\end{align}
By using Eqs.~\eqref{eq:G_left_to_left}, \eqref{eq:G_right_to_right}, \eqref{eq:G_left_to_right}, and~\eqref{eq:G_right_to_left}, 
we can show that $\hat{B}^{(k-1)}_L$ and $\hat{B}^{(k-1)}_R$ satisfy Eq.~\eqref{eq:G_k}. 
Furthermore, Eq.~\eqref{eq:G_k-1} follows from Eq.~\eqref{eq:G_decomposition}, 
which completes this proof. 
\end{proof}

\section{Nonintegrability of the XYZ chain with $S\geq1$}\label{sec:general_XYZ}
The Hamiltonian takes the following form:
\begin{equation}
    \hat{H}=\sum_{i=1}^N 
     \sum_{\alpha\in\{x,y,z\}}
     J_{\alpha}\hat{S}_i^\alpha\hat{S}_{i+1}^\alpha
    ,\label{eq:general_XYZ}
\end{equation}
where $\hat{S}^{\alpha}$'s represent the spin-$S$ operators for some $S\in\mathbb{Z}_{\geq2}/2$ 
and we assume that $J_x,J_y,J_z\neq0$. 
\begin{prop}
The spin-$S$ XYZ chain for $S\geq1$~\eqref{eq:general_XYZ} has no $k$-local conserved quantity for $3\leq k\leq N/2$. 
\end{prop}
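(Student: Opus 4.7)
The plan is to apply Thm.~\ref{thm:main} together with the subspace observation at the end of Sec.~\ref{sec:algorithm}, using $\mathcal{A} = \Span\{\hat{S}^x, \hat{S}^y, \hat{S}^z\}$, which contains $\hat{H}^{(2)}_1 = \sum_\alpha J_\alpha \hat{S}^\alpha \otimes \hat{S}^\alpha$. The proof proceeds in three steps: verify the injectivity~\eqref{eq:assumption_injective} of $\hat{H}^{(2)}$, verify the one-dimensionality $\bound^{(2)}_\leq = \mathbb{C}\hat{H}^{(2)}$, and finally derive a contradiction from the assumption that there exists a 3-local $\hat{Q}_{[3]}$ with $\len([\hat{Q}_{[3]}, \hat{H}]) \leq 2$ other than scalar multiples of $\hat{H}^{(2)}$.

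For injectivity, I expand an arbitrary $\hat{X} \in \bound_0$ in the spherical-tensor basis of Hermitian operators on $\mathbb{C}^{2S+1}$, compute $[\hat{X} \otimes I, \hat{H}^{(2)}]$, and check that no nonzero $\hat{X}$ annihilates it when all three $J_\alpha \neq 0$; the argument is structurally identical to the $S = 1/2$ case and works uniformly for $S \geq 1$ because the spin generators are nontrivial in every irreducible representation. For the one-dimensionality, I invoke Eq.~\eqref{eq:basis_step2}, which restricts $\hat{Q}^{(2)}_{[2],1}$ to $\mathcal{A} \otimes \mathcal{A}$, a nine-dimensional subspace independent of $S$. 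Plugging the ansatz $\hat{Q}^{(2)}_{[2],1} = \sum_{\alpha,\beta} q_{\alpha\beta}\, \hat{S}^\alpha \otimes \hat{S}^\beta$ into Eq.~\eqref{eq:algorithm_k=2} and matching 3-local coefficients reduces to a linear system whose only solution is $q_{\alpha\beta} = q\, \delta_{\alpha\beta}\, J_\alpha$.

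By Thm.~\ref{thm:general_step1} and the same subspace argument, any candidate 3-local $\hat{Q}_{[3]}$ with $\len([\hat{Q}_{[3]},\hat{H}]) \leq 2$ must take the form $q \hat{Q}_b^{(3)} + \hat{Q}^{(2)}_{[3]}$ with $\hat{Q}^{(2)}_{[3],i} \in \mathcal{A}_2 \otimes \mathcal{A}_2$, where $\mathcal{A}_2$ is the eight-dimensional space spanned by traceless products of two spin operators (the $[\hat{A},\hat{H}_1^{(1)}]$ term in Eq.~\eqref{eq:basis_step2} is absent since $\hat{H}^{(1)} = 0$). Crucially, $\dim \mathcal{A}_2 = 8$ for \emph{every} $S \geq 1$, so the linear system on the unknowns $q$ and the $64$ coefficients of $\hat{Q}^{(2)}_{[3],1}$ is structurally independent of $S$, and mirrors the $S = 1$ analyses of Refs.~\cite{parkProofNonintegrabilitySpin12024, hokkyoProofAbsenceLocal2025}.

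The contradiction $q = 0$ is obtained by isolating a specific 3-local basis element on which $q\hat{Q}_b^{(3)}$ contributes nontrivially but which cannot be matched by $[\hat{Q}^{(2)}_{[3]}, \hat{H}^{(2)}]$. Concretely, I would examine the commutator $[\hat{Q}_b^{(3)}, \hat{H}^{(2)}]$ and pick out a rank-$2$ symmetric-traceless combination $\{\hat{S}^\alpha, \hat{S}^\beta\}$ on the central site, with a coefficient proportional to $q\, J_\alpha J_\beta J_\gamma$, then verify by a direct counting argument in $\mathcal{A}_2 \otimes \mathcal{A}_2$ that no choice of $\hat{Q}^{(2)}_{[3]}$ can generate a matching term. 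The main obstacle is precisely this last step: unlike the $S = 1/2$ case, where the Pauli identity $\hat{\sigma}^\alpha \hat{\sigma}^\beta = \delta_{\alpha\beta} I + \im \sum_\gamma \varepsilon_{\alpha\beta\gamma} \hat{\sigma}^\gamma$ collapses every product into dipoles, the higher-spin algebra produces independent quadrupolar components, and one must carefully keep track of which multipolar structures can and cannot be absorbed by the free parameters in $\mathcal{A}_2 \otimes \mathcal{A}_2$. The universality for all $S \geq 1$ emerges because the three $\hat{S}^\alpha$ satisfy the same commutation relations and the symmetric products $\{\hat{S}^\alpha, \hat{S}^\beta\}$ remain linearly independent throughout the range $S \geq 1$.
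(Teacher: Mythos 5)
Your overall strategy coincides with the paper's: verify injectivity, use the subspace observation of Sec.~\ref{sec:algorithm} with $\mathcal{A}=\Span\{\hat{S}^x,\hat{S}^y,\hat{S}^z\}$ to reduce Eq.~\eqref{eq:algorithm_k=2} to a nine-dimensional problem and conclude $\bound^{(2)}_{\leq}=\mathbb{C}\hat{H}^{(2)}$, and then restrict $\hat{Q}^{(2)}_{[3],1}$ to $\mathcal{A}_2^{\otimes2}$ with $\dim\mathcal{A}_2=8$ for all $S\geq1$ before killing the boost coefficient $q$. The first two steps are essentially what the paper does and are fine.

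The gap is in the decisive last step. You propose to extract a term of $[\hat{Q}_b^{(3)},\hat{H}^{(2)}]$ carrying a quadrupole $\{\hat{S}^\alpha,\hat{S}^\beta\}$ on the \emph{central} site and to show ``by a direct counting argument'' that $[\hat{Q}^{(2)}_{[3]},\hat{H}^{(2)}]$ cannot match it. But such a term \emph{can} be matched: a contribution $\hat{S}^\mu_1\hat{S}^{\gamma\delta}_2\in\mathcal{A}_2^{\otimes2}$ commuted with $J_\nu\hat{S}^\nu_2\hat{S}^\nu_3$ produces $J_\nu\hat{S}^\mu_1[\hat{S}^{\gamma\delta}_2,\hat{S}^\nu_2]\hat{S}^\nu_3$, and $[\hat{S}^{\gamma\delta},\hat{S}^\nu]$ is a linear combination of quadrupoles, so dipole--quadrupole--dipole terms are freely generated by the $64$ free parameters of $\hat{Q}^{(2)}_{[3],1}$; ruling them out requires solving the full linear system, not a counting argument, and your proposal stops exactly there (as you acknowledge). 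The paper instead targets the coefficient of $\hat{S}^z_1\hat{S}^z_2\hat{S}^{xx}_3$, i.e.\ a quadrupole on an \emph{edge} site flanked by dipoles. The only way $[\hat{Q}^{(2)}_{[3]},\hat{H}^{(2)}]$ could produce this term is via $[J_z\hat{S}^z_1\hat{S}^z_2,\hat{X}_2\hat{S}^{xx}_3]=J_z\hat{S}^z_1[\hat{S}^z_2,\hat{X}_2]\hat{S}^{xx}_3$ with $\hat{X}\in\mathcal{A}_2$ (the other placements either keep a dipole on site $3$ or stay $2$-local), and the adjoint action $[\hat{S}^z,\cdot]$ on $\mathcal{A}_2$ never returns a $\hat{S}^z$ component, on either the dipole or the quadrupole sector. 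Since $[\hat{Q}_b^{(3)},\hat{H}^{(2)}]$ contributes $qJ_z(J_x)^2$ to this coefficient, one gets $q=0$ immediately. To close your argument you would need to replace your proposed coefficient by one with this ``pass-through'' structure.
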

\begin{proof}
As noted immediately after Eq.~\eqref{eq:assumption_injective}, 
this Hamiltonian satisfies the injectivity (see Eq.~\eqref{eq:assumption_injective}). 
In fact, if $[\hat{X}\otimes I,\hat{H}_i^{(2)}]$ holds, 
then $[\hat{X},\hat{S}_i^\alpha]$ holds for $\alpha=x,y,z$, 
which implies $\hat{X}=0$. 
By the argument at the last paragraph of Sec.~\ref{sec:algorithm}, 
the solution of Eq.~\eqref{eq:algorithm_k=2} satisfies $\hat{Q}_{[2],1}^{(2)}\in\mathcal{A}^{\otimes 2}$. 
Here, $\mathcal{A}$ is defined by
\begin{equation}
    \mathcal{A}\coloneqq\Span\{\hat{S}^\alpha\mid\alpha=x,y,z\}. 
\end{equation}
Thus, Eq.~\eqref{eq:algorithm_k=2} is equivalent for all $S$, 
and as in the case of $S=1/2$~\cite{shiraishiProofAbsenceLocal2019,yamaguchiCompleteClassificationIntegrability2024,yamaguchiProofAbsenceLocal2024}, 
its solutions are limited to $p=0$ and $\hat{Q}_{[2],1}^{(2)}\propto\hat{H}_{[2],1}^{(2)}$. 
We provide an explicit proof of this fact below.
First we expand $\hat{Q}_{[2],1}^{(2)}$ by the spin operators:
\begin{equation}
    \hat{Q}_{[2],1}^{(2)}=
     \sum_{\alpha\in\{x,y,z\}}
     q_{\alpha\beta}\hat{S}_1^\alpha\hat{S}_{2}^\beta.
\end{equation}
By using these $q_{\alpha\beta}$'s, both sides of Eq.~\eqref{eq:algorithm_k=2} are expanded as
\begin{align}
&\im\sum_{\alpha,\beta,\gamma,\delta}q_{\alpha\delta}J_{\gamma}\varepsilon_{\delta\gamma\beta}\hat{S}_1^\alpha\hat{S}_{2}^\beta\hat{S}_3^\gamma\nonumber\\
&=
\im e^{\im p}\sum_{\alpha,\beta,\gamma,\delta}J_{\alpha}q_{\delta\gamma}\varepsilon_{\alpha\delta\beta}\hat{S}_1^\alpha\hat{S}_{2}^\beta\hat{S}_3^\gamma. 
\end{align}
Hence, we have 
\begin{equation}
    \sum_{\delta}q_{\alpha\delta}J_{\gamma}\varepsilon_{\delta\gamma\beta}
    =e^{\im p}\sum_{\delta}J_{\alpha}q_{\delta\gamma}\varepsilon_{\alpha\delta\beta}
    \label{eq:algorithm_k=2_XYZ}
\end{equation}
for any $\alpha,\beta,\gamma\in\{x,y,z\}$. 
If we set $\beta=\alpha$, the RHS of Eq.~\eqref{eq:algorithm_k=2_XYZ} vanishes and we obtain 
$q_{\alpha\alpha'}=0$ for $\alpha\neq\alpha'$. 
By substituting this equality to Eq.~\eqref{eq:algorithm_k=2_XYZ}, we have
\begin{equation}
    q_{\alpha\alpha}J_\gamma=e^{\im p}J_\alpha q_{\gamma\gamma}
\end{equation}
for any $\alpha\neq\gamma\in\{x,y,z\}$. 
This implies $p=0$ and $q_{\alpha\alpha}/J_\alpha=q_{\gamma\gamma}/J_\gamma$, 
which is equivalent to $\hat{Q}_{[2],1}^{(2)}\propto\hat{H}_{[2],1}^{(2)}$. 

Next, we show that Eq.~\eqref{eq:algorithm_k=3} has no solution. 
By the argument at the last paragraph of Sec.~\ref{sec:algorithm}, 
the solution of Eq.~\eqref{eq:algorithm_k=3} satisfies $\hat{Q}_{[3],1}^{(2)}\in\mathcal{A}_2^{\otimes 2}$, 
where $\mathcal{A}_2$ is defined in Eq.~\eqref{eq:basis_step2}. 
For $S\geq1$, this space is eight-dimensional, whose basis consists of 
the following operators:
\begin{align}
    &\hat{S}^{\alpha\alpha}\coloneqq(\hat{S}^\alpha)^2-2I\ (\alpha=x,y),\nonumber\\
    &\hat{S}^{\alpha\beta}\coloneqq\{\hat{S}^\alpha,\hat{S}^\beta\}\ ((\alpha,\beta)=(x,y),(y,z),(z,x)),
\end{align}
and $\hat{S}^{\alpha=x,y,z}$. 
With respect to this basis, 
we focus on the coefficients of $\hat{S}^z_1 \hat{S}^z_2 \hat{S}^{xx}_3$ on both sides of Eq.~\eqref{eq:algorithm_k=3}. 
On the right-hand side, the only contribution to this term 
comes from the following single commutator:
\begin{align}
    &[\im J_zJ_x\hat{S}^{z}_1\hat{S}^{y}_2\hat{S}^{x}_3,J_x\hat{S}^{x}_2\hat{S}^{x}_3]\nonumber\\
    &= J_z(J_x)^2\hat{S}^z_1\hat{S}^z_2(\hat{S}^{xx}_3+2I).
\end{align}
On the left-hand side, the contributing commutator takes the following form:
\begin{align}
    &[J_z\hat{S}^{z}_1\hat{S}^{z}_2,\hat{S}^{\alpha\beta}_2\hat{S}^{xx}_3]\nonumber\\
    &=J_z\hat{S}^z_1[\hat{S}^z_2,\hat{S}^{\alpha\beta}_2]\hat{S}^{xx}_3
\end{align}
However, the commutator $[\hat{S}^z,\hat{S}^{\alpha\beta}]$ is calculated as 
\begin{align}
    [\hat{S}^z,\hat{S}^{xx}]&=\im\hat{S}^{xy},\\
    [\hat{S}^z,\hat{S}^{yy}]&=-\im\hat{S}^{xy},\\
    [\hat{S}^z,\hat{S}^{xy}]&=2\im(\hat{S}^{yy}-\hat{S}^{xx}),\\
    [\hat{S}^z,\hat{S}^{yz}]&=-\im\hat{S}^{zx},\\
    [\hat{S}^z,\hat{S}^{zx}]&=\im\hat{S}^{yz},
\end{align}
which indicates that none of these commutators yield $\hat{S}^z_1\hat{S}^z_2\hat{S}^{xx}_3$. 
Hence, Eq.~\eqref{eq:algorithm_k=3} leads to $J_z(J_x)^2 = 0$, which is a contradiction. 
Therefore, Eq.~\eqref{eq:algorithm_k=3} has no solution, 
and there are no nontrivial local conserved quantities of this system. 
\end{proof}

\bibliography{RigorousTest}

\begin{thebibliography}{67}%
\makeatletter
\providecommand \@ifxundefined [1]{%
 \@ifx{#1\undefined}
}%
\providecommand \@ifnum [1]{%
 \ifnum #1\expandafter \@firstoftwo
 \else \expandafter \@secondoftwo
 \fi
}%
\providecommand \@ifx [1]{%
 \ifx #1\expandafter \@firstoftwo
 \else \expandafter \@secondoftwo
 \fi
}%
\providecommand \natexlab [1]{#1}%
\providecommand \enquote  [1]{``#1''}%
\providecommand \bibnamefont  [1]{#1}%
\providecommand \bibfnamefont [1]{#1}%
\providecommand \citenamefont [1]{#1}%
\providecommand \href@noop [0]{\@secondoftwo}%
\providecommand \href [0]{\begingroup \@sanitize@url \@href}%
\providecommand \@href[1]{\@@startlink{#1}\@@href}%
\providecommand \@@href[1]{\endgroup#1\@@endlink}%
\providecommand \@sanitize@url [0]{\catcode `\\12\catcode `\$12\catcode `\&12\catcode `\#12\catcode `\^12\catcode `\_12\catcode `\%12\relax}%
\providecommand \@@startlink[1]{}%
\providecommand \@@endlink[0]{}%
\providecommand \url  [0]{\begingroup\@sanitize@url \@url }%
\providecommand \@url [1]{\endgroup\@href {#1}{\urlprefix }}%
\providecommand \urlprefix  [0]{URL }%
\providecommand \Eprint [0]{\href }%
\providecommand \doibase [0]{https://doi.org/}%
\providecommand \selectlanguage [0]{\@gobble}%
\providecommand \bibinfo  [0]{\@secondoftwo}%
\providecommand \bibfield  [0]{\@secondoftwo}%
\providecommand \translation [1]{[#1]}%
\providecommand \BibitemOpen [0]{}%
\providecommand \bibitemStop [0]{}%
\providecommand \bibitemNoStop [0]{.\EOS\space}%
\providecommand \EOS [0]{\spacefactor3000\relax}%
\providecommand \BibitemShut  [1]{\csname bibitem#1\endcsname}%
\let\auto@bib@innerbib\@empty
\bibitem [{\citenamefont {Hohenberg}(1967)}]{hohenbergExistenceLongRangeOrder1967}%
  \BibitemOpen
  \bibfield  {author} {\bibinfo {author} {\bibfnamefont {P.~C.}\ \bibnamefont {Hohenberg}},\ }\bibfield  {title} {\bibinfo {title} {Existence of {{Long-Range Order}} in {{One}} and {{Two Dimensions}}},\ }\href {https://doi.org/10.1103/PhysRev.158.383} {\bibfield  {journal} {\bibinfo  {journal} {Phys. Rev.}\ }\textbf {\bibinfo {volume} {158}},\ \bibinfo {pages} {383} (\bibinfo {year} {1967})}\BibitemShut {NoStop}%
\bibitem [{\citenamefont {Mermin}\ and\ \citenamefont {Wagner}(1966)}]{merminAbsenceFerromagnetismAntiferromagnetism1966}%
  \BibitemOpen
  \bibfield  {author} {\bibinfo {author} {\bibfnamefont {N.~D.}\ \bibnamefont {Mermin}}\ and\ \bibinfo {author} {\bibfnamefont {H.}~\bibnamefont {Wagner}},\ }\bibfield  {title} {\bibinfo {title} {Absence of {{Ferromagnetism}} or {{Antiferromagnetism}} in {{One-}} or {{Two-Dimensional Isotropic Heisenberg Models}}},\ }\href {https://doi.org/10.1103/PhysRevLett.17.1133} {\bibfield  {journal} {\bibinfo  {journal} {Phys. Rev. Lett.}\ }\textbf {\bibinfo {volume} {17}},\ \bibinfo {pages} {1133} (\bibinfo {year} {1966})}\BibitemShut {NoStop}%
\bibitem [{\citenamefont {Lieb}\ \emph {et~al.}(1961)\citenamefont {Lieb}, \citenamefont {Schultz},\ and\ \citenamefont {Mattis}}]{liebTwoSolubleModels1961}%
  \BibitemOpen
  \bibfield  {author} {\bibinfo {author} {\bibfnamefont {E.}~\bibnamefont {Lieb}}, \bibinfo {author} {\bibfnamefont {T.}~\bibnamefont {Schultz}},\ and\ \bibinfo {author} {\bibfnamefont {D.}~\bibnamefont {Mattis}},\ }\bibfield  {title} {\bibinfo {title} {Two soluble models of an antiferromagnetic chain},\ }\href {https://doi.org/10.1016/0003-4916(61)90115-4} {\bibfield  {journal} {\bibinfo  {journal} {Annals of Physics}\ }\textbf {\bibinfo {volume} {16}},\ \bibinfo {pages} {407} (\bibinfo {year} {1961})}\BibitemShut {NoStop}%
\bibitem [{\citenamefont {Bethe}(1931)}]{betheZurTheorieMetalle1931}%
  \BibitemOpen
  \bibfield  {author} {\bibinfo {author} {\bibfnamefont {H.}~\bibnamefont {Bethe}},\ }\bibfield  {title} {\bibinfo {title} {{Zur Theorie der Metalle}},\ }\href {https://doi.org/10.1007/BF01341708} {\bibfield  {journal} {\bibinfo  {journal} {Z. Physik}\ }\textbf {\bibinfo {volume} {71}},\ \bibinfo {pages} {205} (\bibinfo {year} {1931})}\BibitemShut {NoStop}%
\bibitem [{\citenamefont {Korepin}\ \emph {et~al.}(1993)\citenamefont {Korepin}, \citenamefont {Bogoliubov},\ and\ \citenamefont {Izergin}}]{korepinQuantumInverseScattering1993}%
  \BibitemOpen
  \bibfield  {author} {\bibinfo {author} {\bibfnamefont {V.~E.}\ \bibnamefont {Korepin}}, \bibinfo {author} {\bibfnamefont {N.~M.}\ \bibnamefont {Bogoliubov}},\ and\ \bibinfo {author} {\bibfnamefont {A.~G.}\ \bibnamefont {Izergin}},\ }\href {https://doi.org/10.1017/CBO9780511628832} {\emph {\bibinfo {title} {Quantum {{Inverse Scattering Method}} and {{Correlation Functions}}}}},\ Cambridge {{Monographs}} on {{Mathematical Physics}}\ (\bibinfo  {publisher} {Cambridge University Press},\ \bibinfo {address} {Cambridge},\ \bibinfo {year} {1993})\BibitemShut {NoStop}%
\bibitem [{\citenamefont {Mori}\ \emph {et~al.}(2018)\citenamefont {Mori}, \citenamefont {Ikeda}, \citenamefont {Kaminishi},\ and\ \citenamefont {Ueda}}]{moriThermalizationPrethermalizationIsolated2018}%
  \BibitemOpen
  \bibfield  {author} {\bibinfo {author} {\bibfnamefont {T.}~\bibnamefont {Mori}}, \bibinfo {author} {\bibfnamefont {T.~N.}\ \bibnamefont {Ikeda}}, \bibinfo {author} {\bibfnamefont {E.}~\bibnamefont {Kaminishi}},\ and\ \bibinfo {author} {\bibfnamefont {M.}~\bibnamefont {Ueda}},\ }\bibfield  {title} {\bibinfo {title} {Thermalization and prethermalization in isolated quantum systems: A theoretical overview},\ }\href {https://doi.org/10.1088/1361-6455/aabcdf} {\bibfield  {journal} {\bibinfo  {journal} {J. Phys. B: At. Mol. Opt. Phys.}\ }\textbf {\bibinfo {volume} {51}},\ \bibinfo {pages} {112001} (\bibinfo {year} {2018})}\BibitemShut {NoStop}%
\bibitem [{\citenamefont {Kaufman}\ \emph {et~al.}(2016)\citenamefont {Kaufman}, \citenamefont {Tai}, \citenamefont {Lukin}, \citenamefont {Rispoli}, \citenamefont {Schittko}, \citenamefont {Preiss},\ and\ \citenamefont {Greiner}}]{kaufmanQuantumThermalizationEntanglement2016}%
  \BibitemOpen
  \bibfield  {author} {\bibinfo {author} {\bibfnamefont {A.~M.}\ \bibnamefont {Kaufman}}, \bibinfo {author} {\bibfnamefont {M.~E.}\ \bibnamefont {Tai}}, \bibinfo {author} {\bibfnamefont {A.}~\bibnamefont {Lukin}}, \bibinfo {author} {\bibfnamefont {M.}~\bibnamefont {Rispoli}}, \bibinfo {author} {\bibfnamefont {R.}~\bibnamefont {Schittko}}, \bibinfo {author} {\bibfnamefont {P.~M.}\ \bibnamefont {Preiss}},\ and\ \bibinfo {author} {\bibfnamefont {M.}~\bibnamefont {Greiner}},\ }\bibfield  {title} {\bibinfo {title} {Quantum thermalization through entanglement in an isolated many-body system},\ }\href {https://doi.org/10.1126/science.aaf6725} {\bibfield  {journal} {\bibinfo  {journal} {Science}\ }\textbf {\bibinfo {volume} {353}},\ \bibinfo {pages} {794} (\bibinfo {year} {2016})}\BibitemShut {NoStop}%
\bibitem [{\citenamefont {Deutsch}(1991)}]{deutschQuantumStatisticalMechanics1991}%
  \BibitemOpen
  \bibfield  {author} {\bibinfo {author} {\bibfnamefont {J.~M.}\ \bibnamefont {Deutsch}},\ }\bibfield  {title} {\bibinfo {title} {Quantum statistical mechanics in a closed system},\ }\href {https://doi.org/10.1103/PhysRevA.43.2046} {\bibfield  {journal} {\bibinfo  {journal} {Phys. Rev. A}\ }\textbf {\bibinfo {volume} {43}},\ \bibinfo {pages} {2046} (\bibinfo {year} {1991})}\BibitemShut {NoStop}%
\bibitem [{\citenamefont {Srednicki}(1994)}]{srednickiChaosQuantumThermalization1994}%
  \BibitemOpen
  \bibfield  {author} {\bibinfo {author} {\bibfnamefont {M.}~\bibnamefont {Srednicki}},\ }\bibfield  {title} {\bibinfo {title} {Chaos and quantum thermalization},\ }\href {https://doi.org/10.1103/PhysRevE.50.888} {\bibfield  {journal} {\bibinfo  {journal} {Phys. Rev. E}\ }\textbf {\bibinfo {volume} {50}},\ \bibinfo {pages} {888} (\bibinfo {year} {1994})}\BibitemShut {NoStop}%
\bibitem [{\citenamefont {Rigol}\ \emph {et~al.}(2008)\citenamefont {Rigol}, \citenamefont {Dunjko},\ and\ \citenamefont {Olshanii}}]{rigolThermalizationItsMechanism2008}%
  \BibitemOpen
  \bibfield  {author} {\bibinfo {author} {\bibfnamefont {M.}~\bibnamefont {Rigol}}, \bibinfo {author} {\bibfnamefont {V.}~\bibnamefont {Dunjko}},\ and\ \bibinfo {author} {\bibfnamefont {M.}~\bibnamefont {Olshanii}},\ }\bibfield  {title} {\bibinfo {title} {Thermalization and its mechanism for generic isolated quantum systems {\textbar} {{Nature}}},\ }\href {https://doi.org/10.1038/nature06838} {\bibfield  {journal} {\bibinfo  {journal} {Nature}\ }\textbf {\bibinfo {volume} {452}},\ \bibinfo {pages} {854} (\bibinfo {year} {2008})}\BibitemShut {NoStop}%
\bibitem [{\citenamefont {Kim}\ \emph {et~al.}(2014)\citenamefont {Kim}, \citenamefont {Ikeda},\ and\ \citenamefont {Huse}}]{kimTestingWhetherAll2014}%
  \BibitemOpen
  \bibfield  {author} {\bibinfo {author} {\bibfnamefont {H.}~\bibnamefont {Kim}}, \bibinfo {author} {\bibfnamefont {T.~N.}\ \bibnamefont {Ikeda}},\ and\ \bibinfo {author} {\bibfnamefont {D.~A.}\ \bibnamefont {Huse}},\ }\bibfield  {title} {\bibinfo {title} {Testing whether all eigenstates obey the eigenstate thermalization hypothesis},\ }\href {https://doi.org/10.1103/PhysRevE.90.052105} {\bibfield  {journal} {\bibinfo  {journal} {Phys. Rev. E}\ }\textbf {\bibinfo {volume} {90}},\ \bibinfo {pages} {052105} (\bibinfo {year} {2014})}\BibitemShut {NoStop}%
\bibitem [{\citenamefont {Garrison}\ and\ \citenamefont {Grover}(2018)}]{garrisonDoesSingleEigenstate2018}%
  \BibitemOpen
  \bibfield  {author} {\bibinfo {author} {\bibfnamefont {J.~R.}\ \bibnamefont {Garrison}}\ and\ \bibinfo {author} {\bibfnamefont {T.}~\bibnamefont {Grover}},\ }\bibfield  {title} {\bibinfo {title} {Does a {{Single Eigenstate Encode}} the {{Full Hamiltonian}}?},\ }\href {https://doi.org/10.1103/PhysRevX.8.021026} {\bibfield  {journal} {\bibinfo  {journal} {Phys. Rev. X}\ }\textbf {\bibinfo {volume} {8}},\ \bibinfo {pages} {021026} (\bibinfo {year} {2018})}\BibitemShut {NoStop}%
\bibitem [{\citenamefont {Kinoshita}\ \emph {et~al.}(2006)\citenamefont {Kinoshita}, \citenamefont {Wenger},\ and\ \citenamefont {Weiss}}]{kinoshitaQuantumNewtonsCradle2006}%
  \BibitemOpen
  \bibfield  {author} {\bibinfo {author} {\bibfnamefont {T.}~\bibnamefont {Kinoshita}}, \bibinfo {author} {\bibfnamefont {T.}~\bibnamefont {Wenger}},\ and\ \bibinfo {author} {\bibfnamefont {D.~S.}\ \bibnamefont {Weiss}},\ }\bibfield  {title} {\bibinfo {title} {A quantum {{Newton}}'s cradle},\ }\href {https://doi.org/10.1038/nature04693} {\bibfield  {journal} {\bibinfo  {journal} {Nature}\ }\textbf {\bibinfo {volume} {440}},\ \bibinfo {pages} {900} (\bibinfo {year} {2006})}\BibitemShut {NoStop}%
\bibitem [{\citenamefont {Mazur}(1969)}]{mazurNonergodicityPhaseFunctions1969}%
  \BibitemOpen
  \bibfield  {author} {\bibinfo {author} {\bibfnamefont {P.}~\bibnamefont {Mazur}},\ }\bibfield  {title} {\bibinfo {title} {Non-ergodicity of phase functions in certain systems},\ }\href {https://doi.org/10.1016/0031-8914(69)90185-2} {\bibfield  {journal} {\bibinfo  {journal} {Physica}\ }\textbf {\bibinfo {volume} {43}},\ \bibinfo {pages} {533} (\bibinfo {year} {1969})}\BibitemShut {NoStop}%
\bibitem [{\citenamefont {Suzuki}(1971)}]{suzukiErgodicityConstantsMotion1971}%
  \BibitemOpen
  \bibfield  {author} {\bibinfo {author} {\bibfnamefont {M.}~\bibnamefont {Suzuki}},\ }\bibfield  {title} {\bibinfo {title} {Ergodicity, constants of motion, and bounds for susceptibilities},\ }\href {https://doi.org/10.1016/0031-8914(71)90226-6} {\bibfield  {journal} {\bibinfo  {journal} {Physica}\ }\textbf {\bibinfo {volume} {51}},\ \bibinfo {pages} {277} (\bibinfo {year} {1971})}\BibitemShut {NoStop}%
\bibitem [{\citenamefont {Shiraishi}(2019)}]{shiraishiProofAbsenceLocal2019}%
  \BibitemOpen
  \bibfield  {author} {\bibinfo {author} {\bibfnamefont {N.}~\bibnamefont {Shiraishi}},\ }\bibfield  {title} {\bibinfo {title} {Proof of the absence of local conserved quantities in the {{XYZ}} chain with a magnetic field},\ }\href {https://doi.org/10.1209/0295-5075/128/17002} {\bibfield  {journal} {\bibinfo  {journal} {EPL}\ }\textbf {\bibinfo {volume} {128}},\ \bibinfo {pages} {17002} (\bibinfo {year} {2019})}\BibitemShut {NoStop}%
\bibitem [{\citenamefont {Chiba}(2024{\natexlab{a}})}]{chibaMixfield2024}%
  \BibitemOpen
  \bibfield  {author} {\bibinfo {author} {\bibfnamefont {Y.}~\bibnamefont {Chiba}},\ }\bibfield  {title} {\bibinfo {title} {Proof of absence of local conserved quantities in the mixed-field {{Ising}} chain},\ }\href {https://doi.org/10.1103/PhysRevB.109.035123} {\bibfield  {journal} {\bibinfo  {journal} {Phys. Rev. B}\ }\textbf {\bibinfo {volume} {109}},\ \bibinfo {pages} {035123} (\bibinfo {year} {2024}{\natexlab{a}})}\BibitemShut {NoStop}%
\bibitem [{\citenamefont {Shiraishi}(2024)}]{shiraishiAbsenceLocalConserved2024}%
  \BibitemOpen
  \bibfield  {author} {\bibinfo {author} {\bibfnamefont {N.}~\bibnamefont {Shiraishi}},\ }\bibfield  {title} {\bibinfo {title} {Absence of {{Local Conserved Quantity}} in the {{Heisenberg Model}} with {{Next-Nearest-Neighbor Interaction}}},\ }\href {https://doi.org/10.1007/s10955-024-03326-4} {\bibfield  {journal} {\bibinfo  {journal} {J Stat Phys}\ }\textbf {\bibinfo {volume} {191}},\ \bibinfo {pages} {114} (\bibinfo {year} {2024})}\BibitemShut {NoStop}%
\bibitem [{\citenamefont {Chiba}\ and\ \citenamefont {Yoneta}(2024)}]{chibaExactThermalEigenstates2024}%
  \BibitemOpen
  \bibfield  {author} {\bibinfo {author} {\bibfnamefont {Y.}~\bibnamefont {Chiba}}\ and\ \bibinfo {author} {\bibfnamefont {Y.}~\bibnamefont {Yoneta}},\ }\bibfield  {title} {\bibinfo {title} {Exact {{Thermal Eigenstates}} of {{Nonintegrable Spin Chains}} at {{Infinite Temperature}}},\ }\href {https://doi.org/10.1103/PhysRevLett.133.170404} {\bibfield  {journal} {\bibinfo  {journal} {Phys. Rev. Lett.}\ }\textbf {\bibinfo {volume} {133}},\ \bibinfo {pages} {170404} (\bibinfo {year} {2024})}\BibitemShut {NoStop}%
\bibitem [{\citenamefont {Park}\ and\ \citenamefont {Lee}(2024{\natexlab{a}})}]{parkGraphTheoreticalProof2024}%
  \BibitemOpen
  \bibfield  {author} {\bibinfo {author} {\bibfnamefont {H.~K.}\ \bibnamefont {Park}}\ and\ \bibinfo {author} {\bibfnamefont {S.}~\bibnamefont {Lee}},\ }\bibfield  {title} {\bibinfo {title} {Graph theoretical proof of nonintegrability in quantum many-body systems : {{Application}} to the {{PXP}} model},\ }\Eprint {https://arxiv.org/abs/2403.02335} {arXiv:2403.02335}  (\bibinfo {year} {2024}{\natexlab{a}})\BibitemShut {NoStop}%
\bibitem [{\citenamefont {Yamaguchi}\ \emph {et~al.}(2024{\natexlab{a}})\citenamefont {Yamaguchi}, \citenamefont {Chiba},\ and\ \citenamefont {Shiraishi}}]{yamaguchiCompleteClassificationIntegrability2024}%
  \BibitemOpen
  \bibfield  {author} {\bibinfo {author} {\bibfnamefont {M.}~\bibnamefont {Yamaguchi}}, \bibinfo {author} {\bibfnamefont {Y.}~\bibnamefont {Chiba}},\ and\ \bibinfo {author} {\bibfnamefont {N.}~\bibnamefont {Shiraishi}},\ }\bibfield  {title} {\bibinfo {title} {Complete {{Classification}} of {{Integrability}} and {{Non-integrability}} for {{Spin-1}}/2 {{Chain}} with {{Symmetric Nearest-Neighbor Interaction}}},\ }\Eprint {https://arxiv.org/abs/2411.02162} {arXiv:2411.02162 [cond-mat]}  (\bibinfo {year} {2024}{\natexlab{a}})\BibitemShut {NoStop}%
\bibitem [{\citenamefont {Yamaguchi}\ \emph {et~al.}(2024{\natexlab{b}})\citenamefont {Yamaguchi}, \citenamefont {Chiba},\ and\ \citenamefont {Shiraishi}}]{yamaguchiProofAbsenceLocal2024}%
  \BibitemOpen
  \bibfield  {author} {\bibinfo {author} {\bibfnamefont {M.}~\bibnamefont {Yamaguchi}}, \bibinfo {author} {\bibfnamefont {Y.}~\bibnamefont {Chiba}},\ and\ \bibinfo {author} {\bibfnamefont {N.}~\bibnamefont {Shiraishi}},\ }\bibfield  {title} {\bibinfo {title} {Proof of the absence of local conserved quantities in general spin-1/2 chains with symmetric nearest-neighbor interaction},\ }\Eprint {https://arxiv.org/abs/2411.02163} {arXiv:2411.02163}  (\bibinfo {year} {2024}{\natexlab{b}})\BibitemShut {NoStop}%
\bibitem [{\citenamefont {Park}\ and\ \citenamefont {Lee}(2024{\natexlab{b}})}]{parkProofNonintegrabilitySpin12024}%
  \BibitemOpen
  \bibfield  {author} {\bibinfo {author} {\bibfnamefont {H.~K.}\ \bibnamefont {Park}}\ and\ \bibinfo {author} {\bibfnamefont {S.}~\bibnamefont {Lee}},\ }\bibfield  {title} {\bibinfo {title} {Proof of nonintegrability of the spin-1 bilinear-biquadratic chain model},\ }\Eprint {https://arxiv.org/abs/2410.23286} {arXiv:2410.23286}  (\bibinfo {year} {2024}{\natexlab{b}})\BibitemShut {NoStop}%
\bibitem [{\citenamefont {Hokkyo}\ \emph {et~al.}(2025)\citenamefont {Hokkyo}, \citenamefont {Yamaguchi},\ and\ \citenamefont {Chiba}}]{hokkyoProofAbsenceLocal2025}%
  \BibitemOpen
  \bibfield  {author} {\bibinfo {author} {\bibfnamefont {A.}~\bibnamefont {Hokkyo}}, \bibinfo {author} {\bibfnamefont {M.}~\bibnamefont {Yamaguchi}},\ and\ \bibinfo {author} {\bibfnamefont {Y.}~\bibnamefont {Chiba}},\ }\bibfield  {title} {\bibinfo {title} {Proof of the absence of local conserved quantities in the spin-1 bilinear-biquadratic chain and its anisotropic extensions},\ }\Eprint {https://arxiv.org/abs/2411.04945} {arXiv:2411.04945 [cond-mat]}  (\bibinfo {year} {2025})\BibitemShut {NoStop}%
\bibitem [{\citenamefont {Shiraishi}\ and\ \citenamefont {Tasaki}(2024)}]{shiraishi2dimXY2024}%
  \BibitemOpen
  \bibfield  {author} {\bibinfo {author} {\bibfnamefont {N.}~\bibnamefont {Shiraishi}}\ and\ \bibinfo {author} {\bibfnamefont {H.}~\bibnamefont {Tasaki}},\ }\bibfield  {title} {\bibinfo {title} {The {$S=\frac{1}{2}$} {XY} and {XYZ} models on the two or higher dimensional hypercubic lattice do not possess nontrivial local conserved quantities},\ }\Eprint {https://arxiv.org/abs/2412.18504} {arXiv:2412.18504 [cond-mat]}  (\bibinfo {year} {2024})\BibitemShut {NoStop}%
\bibitem [{\citenamefont {Chiba}(2024{\natexlab{b}})}]{chiba2dimIsing2024}%
  \BibitemOpen
  \bibfield  {author} {\bibinfo {author} {\bibfnamefont {Y.}~\bibnamefont {Chiba}},\ }\bibfield  {title} {\bibinfo {title} {Proof of absence of local conserved quantities in two- and higher-dimensional quantum {{Ising}} models},\ }\Eprint {https://arxiv.org/abs/2412.18903} {arXiv:2412.18903 [cond-mat]}  (\bibinfo {year} {2024}{\natexlab{b}})\BibitemShut {NoStop}%
\bibitem [{\citenamefont {Shiraishi}(2025)}]{shiraishiCompleteClassificationIntegrability2025}%
  \BibitemOpen
  \bibfield  {author} {\bibinfo {author} {\bibfnamefont {N.}~\bibnamefont {Shiraishi}},\ }\bibfield  {title} {\bibinfo {title} {Complete classification of integrability and non-integrability of {{S}}=1/2 spin chains with symmetric next-nearest-neighbor interaction},\ }\Eprint {https://arxiv.org/abs/2501.15506} {arXiv:2501.15506 [cond-mat]}  (\bibinfo {year} {2025})\BibitemShut {NoStop}%
\bibitem [{\citenamefont {Bohigas}\ \emph {et~al.}(1984)\citenamefont {Bohigas}, \citenamefont {Giannoni},\ and\ \citenamefont {Schmit}}]{bohigasCharacterizationChaoticQuantum1984}%
  \BibitemOpen
  \bibfield  {author} {\bibinfo {author} {\bibfnamefont {O.}~\bibnamefont {Bohigas}}, \bibinfo {author} {\bibfnamefont {M.~J.}\ \bibnamefont {Giannoni}},\ and\ \bibinfo {author} {\bibfnamefont {C.}~\bibnamefont {Schmit}},\ }\bibfield  {title} {\bibinfo {title} {Characterization of {{Chaotic Quantum Spectra}} and {{Universality}} of {{Level Fluctuation Laws}}},\ }\href {https://doi.org/10.1103/PhysRevLett.52.1} {\bibfield  {journal} {\bibinfo  {journal} {Phys. Rev. Lett.}\ }\textbf {\bibinfo {volume} {52}},\ \bibinfo {pages} {1} (\bibinfo {year} {1984})}\BibitemShut {NoStop}%
\bibitem [{\citenamefont {Br{\'e}zin}\ and\ \citenamefont {Hikami}(1997)}]{brezinSpectralFormFactor1997}%
  \BibitemOpen
  \bibfield  {author} {\bibinfo {author} {\bibfnamefont {E.}~\bibnamefont {Br{\'e}zin}}\ and\ \bibinfo {author} {\bibfnamefont {S.}~\bibnamefont {Hikami}},\ }\bibfield  {title} {\bibinfo {title} {Spectral form factor in a random matrix theory},\ }\href {https://doi.org/10.1103/PhysRevE.55.4067} {\bibfield  {journal} {\bibinfo  {journal} {Phys. Rev. E}\ }\textbf {\bibinfo {volume} {55}},\ \bibinfo {pages} {4067} (\bibinfo {year} {1997})}\BibitemShut {NoStop}%
\bibitem [{\citenamefont {Kulish}\ and\ \citenamefont {Sklyanin}(1982)}]{kulishQuantumSpectralTransform1982}%
  \BibitemOpen
  \bibfield  {author} {\bibinfo {author} {\bibfnamefont {P.~P.}\ \bibnamefont {Kulish}}\ and\ \bibinfo {author} {\bibfnamefont {E.~K.}\ \bibnamefont {Sklyanin}},\ }\bibfield  {title} {\bibinfo {title} {Quantum spectral transform method recent developments},\ }in\ \href {https://doi.org/10.1007/3-540-11190-5_8} {\emph {\bibinfo {booktitle} {Integrable {{Quantum Field Theor}}.}}},\ \bibinfo {editor} {edited by\ \bibinfo {editor} {\bibfnamefont {J.}~\bibnamefont {Hietarinta}}\ and\ \bibinfo {editor} {\bibfnamefont {C.}~\bibnamefont {Montonen}}}\ (\bibinfo  {publisher} {Springer},\ \bibinfo {address} {Berlin, Heidelberg},\ \bibinfo {year} {1982})\ pp.\ \bibinfo {pages} {61--119}\BibitemShut {NoStop}%
\bibitem [{\citenamefont {Dolan}\ and\ \citenamefont {Grady}(1982)}]{dolanConservedChargesSelfduality1982}%
  \BibitemOpen
  \bibfield  {author} {\bibinfo {author} {\bibfnamefont {L.}~\bibnamefont {Dolan}}\ and\ \bibinfo {author} {\bibfnamefont {M.}~\bibnamefont {Grady}},\ }\bibfield  {title} {\bibinfo {title} {Conserved charges from self-duality},\ }\href {https://doi.org/10.1103/PhysRevD.25.1587} {\bibfield  {journal} {\bibinfo  {journal} {Phys. Rev. D}\ }\textbf {\bibinfo {volume} {25}},\ \bibinfo {pages} {1587} (\bibinfo {year} {1982})}\BibitemShut {NoStop}%
\bibitem [{\citenamefont {Fuchssteiner}\ and\ \citenamefont {Falck}(1988)}]{fuchssteinerComputerAlgorithmsDetection1988}%
  \BibitemOpen
  \bibfield  {author} {\bibinfo {author} {\bibfnamefont {B.}~\bibnamefont {Fuchssteiner}}\ and\ \bibinfo {author} {\bibfnamefont {U.}~\bibnamefont {Falck}},\ }\bibfield  {title} {\bibinfo {title} {Computer algorithms for the detection of completely integrable quantum spin chains},\ }\href@noop {} {\bibfield  {journal} {\bibinfo  {journal} {Levi Winternitz LW88}\ ,\ \bibinfo {pages} {22}} (\bibinfo {year} {1988})}\BibitemShut {NoStop}%
\bibitem [{\citenamefont {Kennedy}(1992)}]{kennedySolutionsYangBaxterEquation1992}%
  \BibitemOpen
  \bibfield  {author} {\bibinfo {author} {\bibfnamefont {T.}~\bibnamefont {Kennedy}},\ }\bibfield  {title} {\bibinfo {title} {Solutions of the {{Yang-Baxter}} equation for isotropic quantum spin chains},\ }\href {https://doi.org/10.1088/0305-4470/25/10/010} {\bibfield  {journal} {\bibinfo  {journal} {J. Phys. A: Math. Gen.}\ }\textbf {\bibinfo {volume} {25}},\ \bibinfo {pages} {2809} (\bibinfo {year} {1992})}\BibitemShut {NoStop}%
\bibitem [{\citenamefont {Grabowski}\ and\ \citenamefont {Mathieu}(1995{\natexlab{a}})}]{grabowskiIntegrabilityTestSpin1995}%
  \BibitemOpen
  \bibfield  {author} {\bibinfo {author} {\bibfnamefont {M.~P.}\ \bibnamefont {Grabowski}}\ and\ \bibinfo {author} {\bibfnamefont {P.}~\bibnamefont {Mathieu}},\ }\bibfield  {title} {\bibinfo {title} {Integrability test for spin chains},\ }\href {https://doi.org/10.1088/0305-4470/28/17/013} {\bibfield  {journal} {\bibinfo  {journal} {J. Phys. A: Math. Gen.}\ }\textbf {\bibinfo {volume} {28}},\ \bibinfo {pages} {4777} (\bibinfo {year} {1995}{\natexlab{a}})}\BibitemShut {NoStop}%
\bibitem [{\citenamefont {Gombor}\ and\ \citenamefont {Pozsgay}(2021)}]{gomborIntegrableSpinChains2021}%
  \BibitemOpen
  \bibfield  {author} {\bibinfo {author} {\bibfnamefont {T.}~\bibnamefont {Gombor}}\ and\ \bibinfo {author} {\bibfnamefont {B.}~\bibnamefont {Pozsgay}},\ }\bibfield  {title} {\bibinfo {title} {Integrable spin chains and cellular automata with medium-range interaction},\ }\href {https://doi.org/10.1103/PhysRevE.104.054123} {\bibfield  {journal} {\bibinfo  {journal} {Phys. Rev. E}\ }\textbf {\bibinfo {volume} {104}},\ \bibinfo {pages} {054123} (\bibinfo {year} {2021})}\BibitemShut {NoStop}%
\bibitem [{Note1()}]{Note1}%
  \BibitemOpen
  \bibinfo {note} {Strictly speaking, it is not literally \protect \textit {contrapositive} because there can exist systems that are neither integrable nor nonintegrable, but partially integrable, i.e., with some nontrivial local conserved quantities. Surprisingly, the existence of such partially integrable models is denied in the models examined in previous studies of nonintegrability.}\BibitemShut {Stop}%
\bibitem [{Note2()}]{Note2}%
  \BibitemOpen
  \bibinfo {note} {In this case, the ``locality'' of an operator is defined in a very weak sense; it is determined by the size of its support for some direction.}\BibitemShut {Stop}%
\bibitem [{\citenamefont {Thacker}(1986)}]{thackerCornerTransferMatrices1986}%
  \BibitemOpen
  \bibfield  {author} {\bibinfo {author} {\bibfnamefont {H.~B.}\ \bibnamefont {Thacker}},\ }\bibfield  {title} {\bibinfo {title} {Corner transfer matrices and {{Lorentz}} invariance on a lattice},\ }\href {https://doi.org/10.1016/0167-2789(86)90196-X} {\bibfield  {journal} {\bibinfo  {journal} {Physica D: Nonlinear Phenomena}\ }\textbf {\bibinfo {volume} {18}},\ \bibinfo {pages} {348} (\bibinfo {year} {1986})}\BibitemShut {NoStop}%
\bibitem [{\citenamefont {Links}\ \emph {et~al.}(2001)\citenamefont {Links}, \citenamefont {Zhou}, \citenamefont {McKenzie},\ and\ \citenamefont {Gould}}]{linksLadderOperatorOnedimensional2001}%
  \BibitemOpen
  \bibfield  {author} {\bibinfo {author} {\bibfnamefont {J.}~\bibnamefont {Links}}, \bibinfo {author} {\bibfnamefont {H.-Q.}\ \bibnamefont {Zhou}}, \bibinfo {author} {\bibfnamefont {R.~H.}\ \bibnamefont {McKenzie}},\ and\ \bibinfo {author} {\bibfnamefont {M.~D.}\ \bibnamefont {Gould}},\ }\bibfield  {title} {\bibinfo {title} {Ladder operator for the one-dimensional {{Hubbard}} model},\ }\href {https://doi.org/10.1103/PhysRevLett.86.5096} {\bibfield  {journal} {\bibinfo  {journal} {Phys. Rev. Lett.}\ }\textbf {\bibinfo {volume} {86}},\ \bibinfo {pages} {5096} (\bibinfo {year} {2001})},\ \Eprint {https://arxiv.org/abs/cond-mat/0011368} {cond-mat/0011368} \BibitemShut {NoStop}%
\bibitem [{\citenamefont {Yang}\ and\ \citenamefont {Yang}(1966)}]{yangOneDimensionalChainAnisotropic1966}%
  \BibitemOpen
  \bibfield  {author} {\bibinfo {author} {\bibfnamefont {C.~N.}\ \bibnamefont {Yang}}\ and\ \bibinfo {author} {\bibfnamefont {C.~P.}\ \bibnamefont {Yang}},\ }\bibfield  {title} {\bibinfo {title} {One-{{Dimensional Chain}} of {{Anisotropic Spin-Spin Interactions}}. {{I}}. {{Proof}} of {{Bethe}}'s {{Hypothesis}} for {{Ground State}} in a {{Finite System}}},\ }\href {https://doi.org/10.1103/PhysRev.150.321} {\bibfield  {journal} {\bibinfo  {journal} {Phys. Rev.}\ }\textbf {\bibinfo {volume} {150}},\ \bibinfo {pages} {321} (\bibinfo {year} {1966})}\BibitemShut {NoStop}%
\bibitem [{\citenamefont {Grabowski}\ and\ \citenamefont {Mathieu}(1995{\natexlab{b}})}]{grabowskiStructureConservationLaws1995}%
  \BibitemOpen
  \bibfield  {author} {\bibinfo {author} {\bibfnamefont {M.~P.}\ \bibnamefont {Grabowski}}\ and\ \bibinfo {author} {\bibfnamefont {P.}~\bibnamefont {Mathieu}},\ }\bibfield  {title} {\bibinfo {title} {Structure of the {{Conservation Laws}} in {{Quantum Integrable Spin Chains}} with {{Short Range Interactions}}},\ }\href {https://doi.org/10.1006/aphy.1995.1101} {\bibfield  {journal} {\bibinfo  {journal} {Annals of Physics}\ }\textbf {\bibinfo {volume} {243}},\ \bibinfo {pages} {299} (\bibinfo {year} {1995}{\natexlab{b}})}\BibitemShut {NoStop}%
\bibitem [{\citenamefont {Nozawa}\ and\ \citenamefont {Fukai}(2020)}]{nozawaExplicitConstructionLocal2020}%
  \BibitemOpen
  \bibfield  {author} {\bibinfo {author} {\bibfnamefont {Y.}~\bibnamefont {Nozawa}}\ and\ \bibinfo {author} {\bibfnamefont {K.}~\bibnamefont {Fukai}},\ }\bibfield  {title} {\bibinfo {title} {Explicit construction of local conserved quantities in the {$\mathit{X}\mathit{Y}\mathit{Z}$} spin-{$1/2$} chain},\ }\href {https://doi.org/10.1103/PhysRevLett.125.090602} {\bibfield  {journal} {\bibinfo  {journal} {Phys. Rev. Lett.}\ }\textbf {\bibinfo {volume} {125}},\ \bibinfo {pages} {090602} (\bibinfo {year} {2020})}\BibitemShut {NoStop}%
\bibitem [{\citenamefont {Nandkishore}\ and\ \citenamefont {Huse}(2015)}]{nandkishoreManyBodyLocalizationThermalization2015}%
  \BibitemOpen
  \bibfield  {author} {\bibinfo {author} {\bibfnamefont {R.}~\bibnamefont {Nandkishore}}\ and\ \bibinfo {author} {\bibfnamefont {D.~A.}\ \bibnamefont {Huse}},\ }\bibfield  {title} {\bibinfo {title} {Many-{{Body Localization}} and {{Thermalization}} in {{Quantum Statistical Mechanics}}},\ }\href {https://doi.org/10.1146/annurev-conmatphys-031214-014726} {\bibfield  {journal} {\bibinfo  {journal} {Annu. Rev. Condens. Matter Phys.}\ }\textbf {\bibinfo {volume} {6}},\ \bibinfo {pages} {15} (\bibinfo {year} {2015})}\BibitemShut {NoStop}%
\bibitem [{\citenamefont {Altman}\ and\ \citenamefont {Vosk}(2015)}]{altmanUniversalDynamicsRenormalization2015}%
  \BibitemOpen
  \bibfield  {author} {\bibinfo {author} {\bibfnamefont {E.}~\bibnamefont {Altman}}\ and\ \bibinfo {author} {\bibfnamefont {R.}~\bibnamefont {Vosk}},\ }\bibfield  {title} {\bibinfo {title} {Universal {{Dynamics}} and {{Renormalization}} in {{Many-Body-Localized Systems}}},\ }\href {https://doi.org/10.1146/annurev-conmatphys-031214-014701} {\bibfield  {journal} {\bibinfo  {journal} {Annu. Rev. Condens. Matter Phys.}\ }\textbf {\bibinfo {volume} {6}},\ \bibinfo {pages} {383} (\bibinfo {year} {2015})}\BibitemShut {NoStop}%
\bibitem [{\citenamefont {Abanin}\ and\ \citenamefont {Papi{\'c}}(2017)}]{abaninRecentProgressManybody2017}%
  \BibitemOpen
  \bibfield  {author} {\bibinfo {author} {\bibfnamefont {D.~A.}\ \bibnamefont {Abanin}}\ and\ \bibinfo {author} {\bibfnamefont {Z.}~\bibnamefont {Papi{\'c}}},\ }\bibfield  {title} {\bibinfo {title} {Recent progress in many-body localization},\ }\href {https://doi.org/10.1002/andp.201700169} {\bibfield  {journal} {\bibinfo  {journal} {Ann. Phys.}\ }\textbf {\bibinfo {volume} {529}},\ \bibinfo {pages} {1700169} (\bibinfo {year} {2017})}\BibitemShut {NoStop}%
\bibitem [{\citenamefont {Alet}\ and\ \citenamefont {Laflorencie}(2018)}]{aletManybodyLocalizationIntroduction2018}%
  \BibitemOpen
  \bibfield  {author} {\bibinfo {author} {\bibfnamefont {F.}~\bibnamefont {Alet}}\ and\ \bibinfo {author} {\bibfnamefont {N.}~\bibnamefont {Laflorencie}},\ }\bibfield  {title} {\bibinfo {title} {Many-body localization: {{An}} introduction and selected topics},\ }\href {https://doi.org/10.1016/j.crhy.2018.03.003} {\bibfield  {journal} {\bibinfo  {journal} {Comptes Rendus Physique}\ }\bibinfo {series} {Quantum Simulation / {{Simulation}} Quantique},\ \textbf {\bibinfo {volume} {19}},\ \bibinfo {pages} {498} (\bibinfo {year} {2018})}\BibitemShut {NoStop}%
\bibitem [{\citenamefont {Serbyn}\ \emph {et~al.}(2013)\citenamefont {Serbyn}, \citenamefont {Papi{\'c}},\ and\ \citenamefont {Abanin}}]{serbynLocalConservationLaws2013}%
  \BibitemOpen
  \bibfield  {author} {\bibinfo {author} {\bibfnamefont {M.}~\bibnamefont {Serbyn}}, \bibinfo {author} {\bibfnamefont {Z.}~\bibnamefont {Papi{\'c}}},\ and\ \bibinfo {author} {\bibfnamefont {D.~A.}\ \bibnamefont {Abanin}},\ }\bibfield  {title} {\bibinfo {title} {Local {{Conservation Laws}} and the {{Structure}} of the {{Many-Body Localized States}}},\ }\href {https://doi.org/10.1103/PhysRevLett.111.127201} {\bibfield  {journal} {\bibinfo  {journal} {Phys. Rev. Lett.}\ }\textbf {\bibinfo {volume} {111}},\ \bibinfo {pages} {127201} (\bibinfo {year} {2013})}\BibitemShut {NoStop}%
\bibitem [{\citenamefont {Luitz}\ \emph {et~al.}(2015)\citenamefont {Luitz}, \citenamefont {Laflorencie},\ and\ \citenamefont {Alet}}]{luitzManybodyLocalizationEdge2015}%
  \BibitemOpen
  \bibfield  {author} {\bibinfo {author} {\bibfnamefont {D.~J.}\ \bibnamefont {Luitz}}, \bibinfo {author} {\bibfnamefont {N.}~\bibnamefont {Laflorencie}},\ and\ \bibinfo {author} {\bibfnamefont {F.}~\bibnamefont {Alet}},\ }\bibfield  {title} {\bibinfo {title} {Many-body localization edge in the random-field {{Heisenberg}} chain},\ }\href {https://doi.org/10.1103/PhysRevB.91.081103} {\bibfield  {journal} {\bibinfo  {journal} {Phys. Rev. B}\ }\textbf {\bibinfo {volume} {91}},\ \bibinfo {pages} {081103} (\bibinfo {year} {2015})}\BibitemShut {NoStop}%
\bibitem [{\citenamefont {{\v Z}nidari{\v c}}\ and\ \citenamefont {Ljubotina}(2018)}]{znidaricInteractionInstabilityLocalization2018}%
  \BibitemOpen
  \bibfield  {author} {\bibinfo {author} {\bibfnamefont {M.}~\bibnamefont {{\v Z}nidari{\v c}}}\ and\ \bibinfo {author} {\bibfnamefont {M.}~\bibnamefont {Ljubotina}},\ }\bibfield  {title} {\bibinfo {title} {Interaction instability of localization in quasiperiodic systems},\ }\href {https://doi.org/10.1073/pnas.1800589115} {\bibfield  {journal} {\bibinfo  {journal} {Proc. Natl. Acad. Sci.}\ }\textbf {\bibinfo {volume} {115}},\ \bibinfo {pages} {4595} (\bibinfo {year} {2018})}\BibitemShut {NoStop}%
\bibitem [{\citenamefont {{\v S}untajs}\ \emph {et~al.}(2020)\citenamefont {{\v S}untajs}, \citenamefont {Bon{\v c}a}, \citenamefont {Prosen},\ and\ \citenamefont {Vidmar}}]{suntajsQuantumChaosChallenges2020}%
  \BibitemOpen
  \bibfield  {author} {\bibinfo {author} {\bibfnamefont {J.}~\bibnamefont {{\v S}untajs}}, \bibinfo {author} {\bibfnamefont {J.}~\bibnamefont {Bon{\v c}a}}, \bibinfo {author} {\bibfnamefont {T.}~\bibnamefont {Prosen}},\ and\ \bibinfo {author} {\bibfnamefont {L.}~\bibnamefont {Vidmar}},\ }\bibfield  {title} {\bibinfo {title} {Quantum chaos challenges many-body localization},\ }\href {https://doi.org/10.1103/PhysRevE.102.062144} {\bibfield  {journal} {\bibinfo  {journal} {Phys. Rev. E}\ }\textbf {\bibinfo {volume} {102}},\ \bibinfo {pages} {062144} (\bibinfo {year} {2020})}\BibitemShut {NoStop}%
\bibitem [{\citenamefont {{Kiefer-Emmanouilidis}}\ \emph {et~al.}(2020)\citenamefont {{Kiefer-Emmanouilidis}}, \citenamefont {Unanyan}, \citenamefont {Fleischhauer},\ and\ \citenamefont {Sirker}}]{kiefer-emmanouilidisEvidenceUnboundedGrowth2020}%
  \BibitemOpen
  \bibfield  {author} {\bibinfo {author} {\bibfnamefont {M.}~\bibnamefont {{Kiefer-Emmanouilidis}}}, \bibinfo {author} {\bibfnamefont {R.}~\bibnamefont {Unanyan}}, \bibinfo {author} {\bibfnamefont {M.}~\bibnamefont {Fleischhauer}},\ and\ \bibinfo {author} {\bibfnamefont {J.}~\bibnamefont {Sirker}},\ }\bibfield  {title} {\bibinfo {title} {Evidence for {{Unbounded Growth}} of the {{Number Entropy}} in {{Many-Body Localized Phases}}},\ }\href {https://doi.org/10.1103/PhysRevLett.124.243601} {\bibfield  {journal} {\bibinfo  {journal} {Phys. Rev. Lett.}\ }\textbf {\bibinfo {volume} {124}},\ \bibinfo {pages} {243601} (\bibinfo {year} {2020})}\BibitemShut {NoStop}%
\bibitem [{\citenamefont {Sels}\ and\ \citenamefont {Polkovnikov}(2021)}]{selsDynamicalObstructionLocalization2021}%
  \BibitemOpen
  \bibfield  {author} {\bibinfo {author} {\bibfnamefont {D.}~\bibnamefont {Sels}}\ and\ \bibinfo {author} {\bibfnamefont {A.}~\bibnamefont {Polkovnikov}},\ }\bibfield  {title} {\bibinfo {title} {Dynamical obstruction to localization in a disordered spin chain},\ }\href {https://doi.org/10.1103/PhysRevE.104.054105} {\bibfield  {journal} {\bibinfo  {journal} {Phys. Rev. E}\ }\textbf {\bibinfo {volume} {104}},\ \bibinfo {pages} {054105} (\bibinfo {year} {2021})}\BibitemShut {NoStop}%
\bibitem [{\citenamefont {{Kiefer-Emmanouilidis}}\ \emph {et~al.}(2021)\citenamefont {{Kiefer-Emmanouilidis}}, \citenamefont {Unanyan}, \citenamefont {Fleischhauer},\ and\ \citenamefont {Sirker}}]{kiefer-emmanouilidisSlowDelocalizationParticles2021}%
  \BibitemOpen
  \bibfield  {author} {\bibinfo {author} {\bibfnamefont {M.}~\bibnamefont {{Kiefer-Emmanouilidis}}}, \bibinfo {author} {\bibfnamefont {R.}~\bibnamefont {Unanyan}}, \bibinfo {author} {\bibfnamefont {M.}~\bibnamefont {Fleischhauer}},\ and\ \bibinfo {author} {\bibfnamefont {J.}~\bibnamefont {Sirker}},\ }\bibfield  {title} {\bibinfo {title} {Slow delocalization of particles in many-body localized phases},\ }\href {https://doi.org/10.1103/PhysRevB.103.024203} {\bibfield  {journal} {\bibinfo  {journal} {Phys. Rev. B}\ }\textbf {\bibinfo {volume} {103}},\ \bibinfo {pages} {024203} (\bibinfo {year} {2021})}\BibitemShut {NoStop}%
\bibitem [{\citenamefont {Sels}\ and\ \citenamefont {Polkovnikov}(2023)}]{selsThermalizationDiluteImpurities2023}%
  \BibitemOpen
  \bibfield  {author} {\bibinfo {author} {\bibfnamefont {D.}~\bibnamefont {Sels}}\ and\ \bibinfo {author} {\bibfnamefont {A.}~\bibnamefont {Polkovnikov}},\ }\bibfield  {title} {\bibinfo {title} {Thermalization of {{Dilute Impurities}} in {{One-Dimensional Spin Chains}}},\ }\href {https://doi.org/10.1103/PhysRevX.13.011041} {\bibfield  {journal} {\bibinfo  {journal} {Phys. Rev. X}\ }\textbf {\bibinfo {volume} {13}},\ \bibinfo {pages} {011041} (\bibinfo {year} {2023})}\BibitemShut {NoStop}%
\bibitem [{\citenamefont {Imbrie}(2016)}]{imbrieManyBodyLocalizationQuantum2016}%
  \BibitemOpen
  \bibfield  {author} {\bibinfo {author} {\bibfnamefont {J.~Z.}\ \bibnamefont {Imbrie}},\ }\bibfield  {title} {\bibinfo {title} {On {{Many-Body Localization}} for {{Quantum Spin Chains}}},\ }\href {https://doi.org/10.1007/s10955-016-1508-x} {\bibfield  {journal} {\bibinfo  {journal} {J Stat Phys}\ }\textbf {\bibinfo {volume} {163}},\ \bibinfo {pages} {998} (\bibinfo {year} {2016})}\BibitemShut {NoStop}%
\bibitem [{\citenamefont {Roeck}\ \emph {et~al.}(2024)\citenamefont {Roeck}, \citenamefont {Giacomin}, \citenamefont {Huveneers},\ and\ \citenamefont {Prosniak}}]{roeckAbsenceNormalHeat2024}%
  \BibitemOpen
  \bibfield  {author} {\bibinfo {author} {\bibfnamefont {W.~D.}\ \bibnamefont {Roeck}}, \bibinfo {author} {\bibfnamefont {L.}~\bibnamefont {Giacomin}}, \bibinfo {author} {\bibfnamefont {F.}~\bibnamefont {Huveneers}},\ and\ \bibinfo {author} {\bibfnamefont {O.}~\bibnamefont {Prosniak}},\ }\bibfield  {title} {\bibinfo {title} {Absence of {{Normal Heat Conduction}} in {{Strongly Disordered Interacting Quantum Chains}}},\ }\Eprint {https://arxiv.org/abs/2408.04338} {arXiv:2408.04338}  (\bibinfo {year} {2024})\BibitemShut {NoStop}%
\bibitem [{\citenamefont {Huse}\ \emph {et~al.}(2014)\citenamefont {Huse}, \citenamefont {Nandkishore},\ and\ \citenamefont {Oganesyan}}]{husePhenomenologyFullyManybodylocalized2014}%
  \BibitemOpen
  \bibfield  {author} {\bibinfo {author} {\bibfnamefont {D.~A.}\ \bibnamefont {Huse}}, \bibinfo {author} {\bibfnamefont {R.}~\bibnamefont {Nandkishore}},\ and\ \bibinfo {author} {\bibfnamefont {V.}~\bibnamefont {Oganesyan}},\ }\bibfield  {title} {\bibinfo {title} {Phenomenology of fully many-body-localized systems},\ }\href {https://doi.org/10.1103/PhysRevB.90.174202} {\bibfield  {journal} {\bibinfo  {journal} {Phys. Rev. B}\ }\textbf {\bibinfo {volume} {90}},\ \bibinfo {pages} {174202} (\bibinfo {year} {2014})}\BibitemShut {NoStop}%
\bibitem [{Note3()}]{Note3}%
  \BibitemOpen
  \bibinfo {note} {See the discussion at the last paragraph of Sec.~\ref {sec:algorithm}. In this case, $\protect \mathcal {A}$ can be taken as the subspace consisting of one-body operators.}\BibitemShut {Stop}%
\bibitem [{\citenamefont {Penrose}(1971)}]{penroseApplicationsNegativeDimensional1971}%
  \BibitemOpen
  \bibfield  {author} {\bibinfo {author} {\bibfnamefont {R.}~\bibnamefont {Penrose}},\ }\bibfield  {title} {\bibinfo {title} {Applications of negative dimensional tensors},\ }\href@noop {} {\bibfield  {journal} {\bibinfo  {journal} {Comb. Math. Its Appl.}\ }\textbf {\bibinfo {volume} {1}},\ \bibinfo {pages} {221} (\bibinfo {year} {1971})}\BibitemShut {NoStop}%
\bibitem [{\citenamefont {Baez}\ and\ \citenamefont {Stay}(2011)}]{baezPhysicsTopologyLogic2011}%
  \BibitemOpen
  \bibfield  {author} {\bibinfo {author} {\bibfnamefont {J.}~\bibnamefont {Baez}}\ and\ \bibinfo {author} {\bibfnamefont {M.}~\bibnamefont {Stay}},\ }\bibfield  {title} {\bibinfo {title} {Physics, {{Topology}}, {{Logic}} and {{Computation}}: {{A Rosetta Stone}}},\ }in\ \href {https://doi.org/10.1007/978-3-642-12821-9_2} {\emph {\bibinfo {booktitle} {New {{Structures}} for {{Physics}}}}},\ \bibinfo {editor} {edited by\ \bibinfo {editor} {\bibfnamefont {B.}~\bibnamefont {Coecke}}}\ (\bibinfo  {publisher} {Springer},\ \bibinfo {address} {Berlin, Heidelberg},\ \bibinfo {year} {2011})\ pp.\ \bibinfo {pages} {95--172}\BibitemShut {NoStop}%
\bibitem [{\citenamefont {Coecke}\ and\ \citenamefont {Kissinger}(2017)}]{coeckePicturingQuantumProcesses2017}%
  \BibitemOpen
  \bibfield  {author} {\bibinfo {author} {\bibfnamefont {B.}~\bibnamefont {Coecke}}\ and\ \bibinfo {author} {\bibfnamefont {A.}~\bibnamefont {Kissinger}},\ }\href {https://doi.org/10.1017/9781316219317} {\emph {\bibinfo {title} {Picturing {{Quantum Processes}}: {{A First Course}} in {{Quantum Theory}} and {{Diagrammatic Reasoning}}}}}\ (\bibinfo  {publisher} {Cambridge University Press},\ \bibinfo {address} {Cambridge},\ \bibinfo {year} {2017})\BibitemShut {NoStop}%
\bibitem [{Note4()}]{Note4}%
  \BibitemOpen
  \bibinfo {note} {See Appx.~\ref {sec:complete_proof} for details.}\BibitemShut {Stop}%
\bibitem [{\citenamefont {Yang}(1989)}]{yangEtaPairingOffdiagonal1989}%
  \BibitemOpen
  \bibfield  {author} {\bibinfo {author} {\bibfnamefont {C.~N.}\ \bibnamefont {Yang}},\ }\bibfield  {title} {\bibinfo {title} {\ensuremath{\eta} pairing and off-diagonal long-range order in a hubbard model},\ }\href {https://doi.org/10.1103/PhysRevLett.63.2144} {\bibfield  {journal} {\bibinfo  {journal} {Phys. Rev. Lett.}\ }\textbf {\bibinfo {volume} {63}},\ \bibinfo {pages} {2144} (\bibinfo {year} {1989})}\BibitemShut {NoStop}%
\bibitem [{\citenamefont {Yang}\ and\ \citenamefont {Zhang}(1990)}]{yangSo4SymmetryHubbard1990}%
  \BibitemOpen
  \bibfield  {author} {\bibinfo {author} {\bibfnamefont {C.~N.}\ \bibnamefont {Yang}}\ and\ \bibinfo {author} {\bibfnamefont {S.}~\bibnamefont {Zhang}},\ }\bibfield  {title} {\bibinfo {title} {{{SO}}(4) symmetry in a hubbard model},\ }\href {https://doi.org/10.1142/S0217984990000933} {\bibfield  {journal} {\bibinfo  {journal} {Mod. Phys. Lett. B}\ }\textbf {\bibinfo {volume} {04}},\ \bibinfo {pages} {759} (\bibinfo {year} {1990})}\BibitemShut {NoStop}%
\bibitem [{\citenamefont {Mark}\ \emph {et~al.}(2020)\citenamefont {Mark}, \citenamefont {Lin},\ and\ \citenamefont {Motrunich}}]{markUnifiedStructureExact2020}%
  \BibitemOpen
  \bibfield  {author} {\bibinfo {author} {\bibfnamefont {D.~K.}\ \bibnamefont {Mark}}, \bibinfo {author} {\bibfnamefont {C.-J.}\ \bibnamefont {Lin}},\ and\ \bibinfo {author} {\bibfnamefont {O.~I.}\ \bibnamefont {Motrunich}},\ }\bibfield  {title} {\bibinfo {title} {Unified structure for exact towers of scar states in the {{Affleck-Kennedy-Lieb-Tasaki}} and other models},\ }\href {https://doi.org/10.1103/PhysRevB.101.195131} {\bibfield  {journal} {\bibinfo  {journal} {Phys. Rev. B}\ }\textbf {\bibinfo {volume} {101}},\ \bibinfo {pages} {195131} (\bibinfo {year} {2020})}\BibitemShut {NoStop}%
\bibitem [{\citenamefont {Moudgalya}\ \emph {et~al.}(2020)\citenamefont {Moudgalya}, \citenamefont {Regnault},\ and\ \citenamefont {Bernevig}}]{moudgalyaEtapairingHubbardModels2020}%
  \BibitemOpen
  \bibfield  {author} {\bibinfo {author} {\bibfnamefont {S.}~\bibnamefont {Moudgalya}}, \bibinfo {author} {\bibfnamefont {N.}~\bibnamefont {Regnault}},\ and\ \bibinfo {author} {\bibfnamefont {B.~A.}\ \bibnamefont {Bernevig}},\ }\bibfield  {title} {\bibinfo {title} {{$\ensuremath{\eta}$}-pairing in hubbard models: From spectrum generating algebras to quantum many-body scars},\ }\href {https://doi.org/10.1103/PhysRevB.102.085140} {\bibfield  {journal} {\bibinfo  {journal} {Phys. Rev. B}\ }\textbf {\bibinfo {volume} {102}},\ \bibinfo {pages} {085140} (\bibinfo {year} {2020})}\BibitemShut {NoStop}%
\bibitem [{\citenamefont {Sleijpen}\ \emph {et~al.}(2000)\citenamefont {Sleijpen}, \citenamefont {van~der Vorst}, \citenamefont {Ruhe}, \citenamefont {Bai}, \citenamefont {Ericsson}, \citenamefont {Kowalski}, \citenamefont {K^^c3^^a5gstr^^c3^^b6m},\ and\ \citenamefont {Li}}]{TemplatesSolutionAlgebraic2000}%
  \BibitemOpen
  \bibfield  {author} {\bibinfo {author} {\bibfnamefont {G.}~\bibnamefont {Sleijpen}}, \bibinfo {author} {\bibfnamefont {H.}~\bibnamefont {van~der Vorst}}, \bibinfo {author} {\bibfnamefont {A.}~\bibnamefont {Ruhe}}, \bibinfo {author} {\bibfnamefont {Z.}~\bibnamefont {Bai}}, \bibinfo {author} {\bibfnamefont {T.}~\bibnamefont {Ericsson}}, \bibinfo {author} {\bibfnamefont {T.}~\bibnamefont {Kowalski}}, \bibinfo {author} {\bibfnamefont {B.}~\bibnamefont {K^^c3^^a5gstr^^c3^^b6m}},\ and\ \bibinfo {author} {\bibfnamefont {R.}~\bibnamefont {Li}},\ }\bibinfo {title} {Generalized non-hermitian eigenvalue problems (section 8)},\ in\ \href {https://doi.org/10.1137/1.9780898719581.ch8} {\emph {\bibinfo {booktitle} {Templates for the Solution of Algebraic Eigenvalue Problems}}}\ (\bibinfo  {publisher} {{Society for Industrial and Applied Mathematics}},\ \bibinfo {address} {Philadelphia, PA},\ \bibinfo {year} {2000})\ pp.\ \bibinfo {pages} {233--279}\BibitemShut {NoStop}%
\end{thebibliography}%
\end{document}